\documentclass[11pt]{article}

\usepackage{geometry}
\geometry{
margin = 1in
}

\usepackage{thmtools,thm-restate} 
\usepackage{bbm}
\usepackage{amsthm,amsmath,amssymb}
\usepackage{graphicx}
\usepackage{enumerate}
\usepackage[dvipsnames]{xcolor}
\usepackage{url}
\usepackage[ruled]{algorithm2e} 

\SetAlFnt{\small}
\SetAlCapFnt{\small}
	\SetAlCapNameFnt{\small}
\SetAlCapHSkip{0pt}
\IncMargin{-1.6\parindent}
\usepackage[colorlinks=true,citecolor=blue,linkcolor=blue,urlcolor=blue]{hyperref}

\theoremstyle{plain}
\newtheorem{theorem}{Theorem}
\newtheorem{lemma}[theorem]{Lemma}

\newtheorem{proposition}[theorem]{Proposition}

\theoremstyle{definition}

\theoremstyle{remark}

\usepackage{tikz}
\usetikzlibrary{arrows,automata,positioning}
\usepackage[latin1]{inputenc}
\usepackage{pgfplots}
\pgfplotsset{compat=1.13}

\usetikzlibrary{math} 
\tikzmath{
\m = 1; \var = 0.82*\m; \s= (0.82*\m)^0.5;
\L = 2.5*\m; \c = \m/\s;
\pleft = 1/(1+\c^2);
\Tpleft = \m + (\pleft*(1-\pleft))^0.5*\s;
\pright = 1/(1 + \s^2/(\L-\m)^2);
\Tpright = \m + (\pright*(1-\pright))^0.5*\s;
\Ttwopleft = \m + (\Tpright - \m)*\pleft + (\pleft*(1-\pleft))^0.5*\s;
\Ttwopright = \m + (\Tpright - \m)*\pright + (\pright*(1-\pright))^0.5*\s;
\Ttwopone = \m + (\Tpright - \m);
\sji = 0.5;
\mtwo = 0.4;
\Ltwo = 1;
\mthree = 0.8;
\sjithree = 0.55;
\mfour = 0.3;
}

\usepackage{comment}
\usepackage{relsize}

\newcommand{\dd}{\text{d}}
\newcommand{\E}{\mathbb{E}}
\newcommand{\N}{\mathbb{N}}
\newcommand{\R}{\mathbb{R}}

\newcommand{\Prob}{\mathbb{P}}

\newcommand{\M}{\mathbb{M}}
\newcommand{\G}{\mathbb{G}}

\usepackage{ifthen}



\setlength{\arraycolsep}{2.5pt}

\numberwithin{theorem}{section}
\numberwithin{equation}{section}
\begin{document}
\title{
Optimal Stopping Theory for a Distributionally Robust Seller 
}
\author{
Pieter Kleer and Johan S.H. van Leeuwaarden\\\
Tilburg University\\
Department of Econometrics and Operations Research\\
\texttt{\{p.s.kleer,j.s.h.vanleeuwaarden\}@tilburguniversity.edu}
}

	\maketitle

\begin{abstract}
Sellers in online markets face the challenge of determining the right time to sell in view of uncertain future offers. Classical stopping theory assumes that sellers have full knowledge of the value distributions, and leverage this knowledge to determine stopping rules that maximize expected welfare. In practice, however, stopping rules must often be determined under partial information, based on scarce data or expert predictions. Consider a seller that has one item for sale and receives successive offers drawn from some value distributions.  The decision on whether or not to accept an offer is irrevocable, and the value distributions are only partially known. We therefore let the seller adopt a robust maximin strategy, assuming that value distributions are chosen adversarially by nature to minimize the value of the accepted offer. 
We provide a general maximin solution to this stopping problem that identifies the optimal (threshold-based) stopping rule for the seller for all possible statistical information structures. 
We then perform a detailed analysis for various ambiguity sets relying on knowledge about the common mean, dispersion (variance or mean absolute deviation) and support of the distributions. We show for these information structures that the seller's stopping rule consists of decreasing thresholds converging to the common mean, and that nature's adversarial response, in the long run, is to always create an all-or-nothing scenario. The maximin solutions also reveal what happens as dispersion or the number of offers grows large.
\end{abstract}

		\section{Introduction}
	\label{sec:intro}
	
A major challenge faced by sellers in online markets is to determine the right time to sell in view of uncertain future offers. We consider the classical setting of a seller that has one item for sale and receives sequentially $n$ offers.  This setting is studied since the 1960s \cite{ferguson2006optimal} and gained renewed relevance due to the rise of online algorithms.
The seller decides to sell or not, by comparing the current offer to possibly higher future offers. Rejecting the current offer holds the promise of receiving a higher offer but comes with the risk of receiving lower future offers only. 
To provide the seller a guideline for when to accept an offer, some knowledge about future offers is required. Therefore, in the area of (Bayesian) optimal stopping theory, the distributions of future offers are assumed to be known, and leveraged to obtain stopping rules for the seller.

Settings with such full distributional knowledge, however, are often criticized for being unrealistic.  What is nowadays known as ``Wilson's doctrine" in economics and game theory \cite{Wilson1987} (see also, e.g., \cite[Section 2]{Milgrom2004}) calls for reducing 
the common knowledge that is taken as given, so that the resulting analysis becomes more robust and valuable in realistic problems where only limited information is available. Inspired by this, we study a robust version of the stopping problem that assumes 
the seller has very limited distributional knowledge, for example in the form of moment information or maximal values. Optimizing a seller's welfare (or revenue) under limited information is a rich area of research  in the fields of theoretical economics, computer science and operations research, dating back to the work of Scarf \cite{scarf1958min} and falling within the scope of distributionally robust optimization. We elaborate on this literature, and other robust approaches to stopping problems, in Section \ref{sec:relwork}. 

We continue with a formal description of the  Bayesian stopping problem, and then explain in more detail our robust version with limited information. Consider the stopping problem in which a seller receives for an item a sequence of $n$ offers. The seller gets to see the offers one by one, in the order $(1,\dots,n)$,  and upon seeing an offer, has to decide irrevocably whether or not to accept (a rejected offer cannot still be accepted at a later point in time). The seller can accept at most one offer, after which the selling process stops. The goal of the seller is to come up with a stopping rule that maximizes the expected payoff of the selected offer (i.e., we consider the  problem of so-called online welfare maximization). 

It is well-known that an optimal stopping rule for the seller is to use a  threshold-based strategy, that can be computed using dynamic programming (or backwards induction), and which we explain next. Let $v_i \geq 0$ denote the (initially unknown) value of offer $i$. The seller only knows the distribution of an offer, so $v_i$ should be considered a realization of a random variable $X_i \sim \Prob_i$, where  $\Prob_i$ is a nonnegative probability distribution (or measure), for $i = 1,\dots,n$. The seller knows $\Prob_1,\dots,\Prob_n$, but not the actual values $v_i$. We further assume that the random variables $X_1,\dots,X_n$ are pairwise independent. Based on the known distributions $\Prob_1,\dots,\Prob_n$, the seller computes a threshold $T(i)$ for every offer $i$, and selects the first offer for which $v_i \geq T(i)$. This strategy yields the maximum expected payoff (i.e., the value of the selected offer), with the expectation taken with respect to the random variables $X_1,\dots,X_n$. How does one compute the thresholds? First note that once the seller reaches the last offer $n$, it is always optimal to accept it, because all values are nonnegative. 
Knowing that $T(n) = 0$, what is the optimal stopping rule for selecting offer $n-1$? If the seller does not select offer $n-1$, the expected payoff from the last offer is $\E[X_n]$. Therefore, the optimal policy is to select offer $n-1$ if and only if $v_{n-1} \geq \E[X_n]$. Hence, we set the threshold $T(n-1) = \E[X_n]$. Generalizing this line of thought, the threshold $T(i)$ is set to be the expected payoff would the seller not select offers $k = 1,\dots,i$, but instead continue. These thresholds can be computed recursively, in a backwards fashion, by
\begin{align}
T(i) & = \E[\max\{T(i+1),X_{i+1}\}] = \int_0^\infty \max\{T(i+1),x\}\, \dd\Prob_{i+1}(x). 
\label{eq:backward}
\end{align}

In our robust version of the problem, the seller does not know the full distributions $\Prob_1,\dots,\Prob_n$, but only that $\Prob_i$ belongs to some ambiguity set $\mathcal{P}(I_i)$ containing all distributions that satisfy the limited information $I_i$. For instance, if $I_i= \{\E[X_i]\}$, the ambiguity set $\mathcal{P}(I_i)$ contains all nonnegative distributions with mean $\E[X_i]$. Distributional information, like moments or maximal values, can typically be learned or estimated based on historical data.
The information structure  $\{I_1,\ldots,I_n\}$, however, is usually not sufficient for computing the thresholds in \eqref{eq:backward}, and instead the seller will have to play a maximin game against nature.  
In this game, the seller first decides on a stopping rule $\tau$ for which offer to select, after which nature gets to choose from the ambiguity sets the worst-case distributions $\{\Prob_1,\dots,\Prob_n\}$  that under the stopping rule $\tau$ minimize the expected payoff of the seller. Maximizing the (robust) expected payoff of the seller then corresponds to solving the maximin problem
\begin{align}
r^* = \max_{\tau} \min_{\forall i:\Prob_i\in \mathcal{P}(I_i)} \mathbb{E}_{\Prob_1,\dots,\Prob_n}[X_{\tau}] ,
\label{eq:minimax_seller}
\end{align}
where $X_{\tau}$ is the random variable denoting the value of the offer that was selected under stopping rule $\tau$. 
{The optimal stopping rule $\tau^*$ solving \eqref{eq:minimax_seller} is given in Theorem \ref{thm:opt_robust_thresholds}, and can be seen as a robust counterpart for the threshold-based strategy as described by the thresholds in \eqref{eq:backward}. Using an induction-based argument, inspired by the proof of a robust Bellman  equation of Iyengar \cite{iyengar2005robust} for Markov decision processes, we prove the following result:

	\begin{theorem}[Optimal robust thresholds]
The following threshold-based stopping rule $\tau^*$ solves the minimax problem \eqref{eq:minimax_seller} of the seller within the class of randomized stopping rules (formal definition is given in Section \ref{sec:stopping_rules}): Select $v_i$ if and only if $v_i \geq T(i)$ where $T(i)$ is recursively defined by
\begin{align}
T(i) & = \min_{\Prob_{i+1} \in \mathcal{P}(I_{i+1})}\E_{\Prob_{i+1}}[\max\{T(i+1),X_{i+1}\}] \nonumber\\
&=  \min_{\Prob_{i+1} \in \mathcal{P}(I_{i+1})} \int_0^{\infty} \max\{T(i+1),x\}\, \mathrm{d} \Prob_{i+1}(x) 
\label{eq:opt_robust_tresholds}
\end{align}
for $i = 0,\dots,n-1$ with $T(n) = 0$. The robust expected payoff $r^*$ of the seller is $T(0)$. In case the minimum in the definition of $T(i)$ does not exist, one should replace it by the infimum (which we will assume to exist).
\label{thm:opt_robust_thresholds}
\end{theorem}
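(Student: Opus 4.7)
The plan is a backward induction on the offer index $i$, with inductive hypothesis that the subproblem starting at offer $i+1$ (given that offers $1,\dots,i$ have all been rejected) has robust value exactly $T(i)$, attained by the threshold rule $\tau^*$ restricted to offers $i+1,\dots,n$. The base case $i=n$ is vacuous with $T(n)=0$, and the claim of the theorem follows at $i=0$. The crux will be executing the inductive step, which requires exploiting the rectangular (product) structure of the ambiguity set $\mathcal{P}(I_1)\times\cdots\times\mathcal{P}(I_n)$, in the same spirit as Iyengar's robust Bellman equation for rectangular MDPs.

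For the inductive step, I would fix an arbitrary randomized stopping rule $\tau$ and decompose it into the acceptance probability $p(v)=\sigma_{i+1}(v)$ at offer $i+1$ and a continuation rule $\tau_{\ge i+2}$ (possibly depending on $v$) over the remaining offers. Conditioning on $X_{i+1}=v$, write the expected payoff as
\begin{equation*}
R_i(\tau,\Prob_{i+1},\dots,\Prob_n)=\int\Bigl[p(v)\,v+(1-p(v))\,C\bigl(v,\tau_{\ge i+2},\Prob_{i+2},\dots,\Prob_n\bigr)\Bigr]\,\dd\Prob_{i+1}(v),
\end{equation*}
where $C$ is the continuation payoff. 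Because the continuation distributions $\Prob_{i+2},\dots,\Prob_n$ lie in ambiguity sets independent of $\Prob_{i+1}$, the inner minimum over those distributions can be pushed inside the integral over $v$. The inductive hypothesis then gives $\max_{\tau_{\ge i+2}}\min_{\Prob_{i+2},\dots,\Prob_n}C(v,\tau_{\ge i+2},\cdot)=T(i+1)$, attained by the threshold rule independently of $v$. Substituting, the remaining problem reduces to
\begin{equation*}
\max_{p}\min_{\Prob_{i+1}\in\mathcal{P}(I_{i+1})}\E_{\Prob_{i+1}}\Bigl[T(i+1)+p(X_{i+1})\bigl(X_{i+1}-T(i+1)\bigr)\Bigr].
\end{equation*}

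The final step is the min--max interchange. Pointwise in $v$, the integrand is maximized by choosing $p(v)=\mathbbm{1}[v\ge T(i+1)]$, which yields the pointwise bound $T(i+1)+p(v)(v-T(i+1))\le\max\{T(i+1),v\}$ with equality under the threshold rule. Crucially, this optimal $p$ does \emph{not} depend on $\Prob_{i+1}$, so the threshold rule dominates every other rule \emph{uniformly} over the ambiguity set. Consequently,
\begin{equation*}
\max_{p}\min_{\Prob_{i+1}}\E_{\Prob_{i+1}}[\,\cdot\,]=\min_{\Prob_{i+1}\in\mathcal{P}(I_{i+1})}\E_{\Prob_{i+1}}[\max\{T(i+1),X_{i+1}\}]=T(i),
\end{equation*}
completing the induction.

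The main obstacle is handling the inner min--max interchange rigorously: one must verify that the supremum in the definition of randomized stopping rules is in fact attained (or at least approached) by the deterministic threshold rule uniformly over $\Prob_{i+1}$, and that the rectangularity argument commutes with the integral against $\Prob_{i+1}$ even when $\tau_{\ge i+2}$ depends measurably on $v$. The case where the infimum is not achieved requires an $\varepsilon$-optimal approximation argument, but the product structure of the ambiguity set ensures this creates no genuine difficulty. Once these technicalities are dispatched, the recursion in \eqref{eq:opt_robust_tresholds} and the identification $r^*=T(0)$ follow.
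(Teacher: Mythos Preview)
Your proposal is correct and follows essentially the same approach as the paper. The paper also uses induction and decomposes the problem into two pieces that mirror your steps exactly: Lemma~2.1 shows (via the rectangularity/product structure and an $\epsilon$-argument) that replacing the continuation rule by $\tau^*_{-1}$ can only help the seller, and Lemma~2.2 then shows that the threshold $T(1)$ is optimal for the first decision because the maximizing $r_1$ is independent of $\Prob_1$---precisely your ``uniform dominance'' observation. The only cosmetic difference is that the paper inducts on the number of offers $n$ and peels off the \emph{first} offer, whereas you run backward induction on $i$ and peel off offer $i+1$; these are equivalent reparametrizations of the same argument.
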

An equivalent formulation of \eqref{eq:opt_robust_tresholds}, that we will often use, is given by 
\begin{align}
T(i) &= T(i+1) + \E_{\Prob_{i+1}}[X_i]  - \max_{\Prob_{i+1} \in \mathcal{P}(I_{i+1})}  \int_0^{\infty} \min\{T(i+1),x\}\, \mathrm{d} \Prob_{i+1}(x).
\label{eq:opt_robust_tresholds_v2}
\end{align}
{Although Theorem~\ref{thm:opt_robust_thresholds} might seem intuitive, its proof requires careful reasoning. We present the formal proof, including a formal definition of stopping rules, in Section \ref{sec:robust_optimal_thresholds}, and just sketch the main ideas here. The proof uses induction and consists of two main steps: First we argue that, based on the induction hypothesis, it is always optimal for the seller to use the thresholds in \eqref{eq:opt_robust_tresholds} for the offers $i = 2,\dots,n$. After that, we argue that it is then also optimal to use \eqref{eq:opt_robust_tresholds} for the first offer.  All this requires a lot of care because of the minimum involved in the definition of $T(i)$. }

Theorem~\ref{thm:opt_robust_thresholds} is rather general, as we only assume independence of the offer values. The optimal robust strategy defined by the recursive relation \eqref{eq:opt_robust_tresholds} applies to all forms of ambiguity of the offer value distributions. {As explained below, we will mostly apply Theorem~\ref{thm:opt_robust_thresholds} for ambiguity sets that capture the first two moments,  but one could also include higher moments, or additional information such as skewness, unimodality and tail probabilities. }But more generally, the robust framework works for all forms of ambiguity, and  also allows to condition on proximity to some reference distribution via statistical distance measures such as $\phi$-divergence and Wasserstein distance. This more broader application of Theorem~\ref{thm:opt_robust_thresholds} we leave for future research.

	\subsection{Solving the robust recursion}
	\label{sec:results}
Using Theorem~\ref{thm:opt_robust_thresholds} to derive closed-form stopping rules in specific partial information setting is the next theme of this paper. Notice that this can only work if the maximin problem in Theorem \ref{thm:opt_robust_thresholds} can be solved.
To do so, one needs to find the worst-case distribution in the ambiguity set that gives a tight bound on $\E[\max\{T(i+1), X\}]$. 
We shall focus on ambiguity sets that fix the mean and dispersion of the offer value distribution, where dispersion is measured in terms of variance or mean absolute devation (MAD). These intuitive summary statistics are easy to estimate when data is available, but without data can also be tuned by experts to assess risk-return tradeoffs.  Robust analysis with ambiguity sets based on summary statistics such as the mean $\mu$ and variance $\sigma^2$ dates back to the work of Scarf \cite{scarf1958min} on the (robust) newsvendor problem. Scarf showed that the worst-case distributions are two-point distributions, with positive probability mass on two points, and obtained the now famous robust optimal order quantity that solves the maximin newsvendor problem. {Scarf's paper is widely considered to be the first paper on distributionally robust optimization. The fact that two-point distributions came out as worst-case distributions can be understood through the deep connections with moment problems and semi-infinite linear programs, see Section~\ref{sec:mvs}. Such worst-case distributions also play an important role in various other robust problems, see, e.g., \cite{birge1995bounds,popescu2007robust}. 
Boshuizen and Hill \cite{boshuizen1992moment} proved a special case of Theorem \ref{thm:opt_robust_thresholds}, for the same ambiguity set as Scarf, so when the partial information consists of the mean, variance and range, and when one restricts to 
deterministic stopping rules. To explain this result in Boshuizen and Hill \cite{boshuizen1992moment} in more detail, consider Theorem~\ref{thm:opt_robust_thresholds} in the restricted setting where each offer has the same distributional information, as if the offers were generated by a sample from a large population of statistically indistinguishable buyers. For the canonical choice $\mathcal{P}(\mu,\sigma^2)$, the set of all distributions with mean $\mu$ and variance $\sigma^2$, {it can be shown \cite{boshuizen1992moment}  that nature chooses as worst-case distribution a two-point distribution with one value just below $\mu$ and one extremely large value}. This worst-case scenario forces the seller to set all thresholds equal to $\mu$, which generates an expected payoff $\mu$. Hence, while the ambiguity set $\mathcal{P}(\mu,\sigma^2)$ 
gave a nontrivial robust ordering rule in the newsvendor problem \cite{scarf1958min}, the same ambiguity set gives degenerate results in the stopping problem. Boshuizen and Hill \cite{boshuizen1992moment} countered this degeneracy by considering the ambiguity set $\mathcal{P}(\mu,\sigma^2,L)$, the set of all distributions with mean $\mu$, variance $\sigma^2$ and upper bound $L$. For this setting, Boshuizen and Hill \cite{boshuizen1992moment} found and solved the recursion in the more general Theorem \ref{thm:opt_robust_thresholds}, and found that the worst-case distributions are in fact three-point distributions (more on this later). In this paper, we also start from the observation that $\mathcal{P}(\mu,\sigma^2)$ degenerates, and we alter the information in two ways: 1) We consider two-point distributions and additionally impose an upper bound $L$ on the support, resulting in the ambiguity set $\mathcal{P}_2(\mu,\sigma^2,L)$  and 2) we replace the variance by the mean absolute deviation, another well-known measure of dispersion. 
As will become clear, both information settings will lead to two-point worst-case distributions with nondegenerate performance and hence non-trivial robust stopping policies.}

We first describe our results for  the ambiguity set $\mathcal{P}_2(\mu,\sigma^2,L)$. In this case, nature still creates a good and bad scenario, just as for $\mathcal{P}(\mu,\sigma^2)$,  but the scenarios will be less extreme due to the support upper bound $L$. We solve the maximin problem for $\mathcal{P}_2(\mu,\sigma^2,L)$  explicitly. We show that the robust max-min analysis does not become overly conservative, and instead leads to insightful optimal stopping strategies with non-trivial thresholds and expected pay-off larger than $\mu$. 
It turns out that in the case of two-point distributions there are two cases to consider. The first case, in which $L \leq 2\mu$, actually yields in every step of the backwards recursion the same worst-case two-point distribution $f^*$, with a good value $\mu + \sigma^2/\mu$ and a bad  value 0. This good-bad scenario results in the threshold-based strategy as summarized in Theorem \ref{thm:mvs_twopoint_L<2mu}.

\begin{theorem}[Ambiguity set $\mathcal{P}_2(\mu,\sigma^2,L)$ with $L \leq 2\mu$]
\label{thm:mvs_twopoint_L<2mu}
Let $n \in \N$, and let $\mu, L > 0$ and $\sigma^2 \geq 0$. Assume that $L \leq 2\mu$. 
Let $\mathcal{P}_2(\mu,\sigma^2,L)$ be the common ambiguity set consisting of all two-point distributions with mean $\mu$, variance $\sigma^2$ and support contained in $[0,L]$. For $i = 1,\dots,n-1$, the optimal robust threshold in \eqref{eq:opt_robust_tresholds} equals
\begin{align}
T(i) &= \mu + \frac{\sigma^2}{\mu}\left[1 - \left(\frac{\sigma^2}{\mu^2 + \sigma^2}\right)^{n-1-i} \right]
\label{eq:mu_sigma_L2mu}
\end{align}
and $T(n) = 0$. As $n \rightarrow \infty$, the seller's expected robust payoff $r^* = T(0)$  approaches $\mu + \sigma^2/\mu$.
\end{theorem}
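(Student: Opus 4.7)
The plan is to apply Theorem~\ref{thm:opt_robust_thresholds} in the equivalent form~\eqref{eq:opt_robust_tresholds_v2}, reducing the problem to evaluating $M(t) := \max_{\Prob \in \mathcal{P}_2(\mu,\sigma^2,L)} \E_{\Prob}[\min\{t,X\}]$ at $t = T(i+1)$ and then unrolling the scalar recursion $T(i) = T(i+1) + \mu - M(T(i+1))$. I will parametrize the two-point distributions in $\mathcal{P}_2(\mu,\sigma^2,L)$ by their upper atom $b \in [\mu+\sigma^2/\mu, L]$, so that the lower atom is $a = \mu - \sigma^2/(b-\mu)$ with mass $p = (b-\mu)/(b-a)$; the stated range of $b$ is forced by $a\ge 0$ and $b\le L$, nonemptiness implicitly requires $\mu + \sigma^2/\mu \le L$, and combining this with $L\le 2\mu$ also yields $\sigma^2 \le \mu^2$ and $a \le \mu - \sigma^2/\mu$.

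The crux of the argument is to show that $M(t) = \mu^2 t/(\mu^2+\sigma^2)$ for every $t \in [\mu, \mu+\sigma^2/\mu]$, attained by the two-point distribution $f^\star$ on $\{0,\mu+\sigma^2/\mu\}$ with mass $\sigma^2/(\mu^2+\sigma^2)$ at $0$. For such $t$ one has $a \le \mu - \sigma^2/\mu < \mu \le t \le \mu+\sigma^2/\mu \le b$ for every feasible $(a,b)$, so setting $u := b-\mu$ a direct calculation gives $\E_{\{a,b\}}[\min\{t,X\}] = pa + (1-p)t = \varphi(u)$, where
\[
\varphi(u) \;=\; \frac{u^2\mu - u\sigma^2 + t\sigma^2}{u^2+\sigma^2}.
\]
A routine algebraic manipulation yields the factorization
\[
[\varphi(u) - \varphi(\sigma^2/\mu)](u^2+\sigma^2)(\mu^2+\sigma^2) \;=\; (u\mu - \sigma^2)\bigl[u(\mu^2+\sigma^2) - t(u\mu+\sigma^2)\bigr],
\]
whose first factor is nonnegative on $u \in [\sigma^2/\mu, L-\mu]$ and whose second factor is affine and strictly decreasing in $t$. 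Evaluating the second factor at the boundary $t=\mu$ gives $\sigma^2(u-\mu)\le 0$ precisely because $u \le L-\mu \le \mu$. This is exactly where the hypothesis $L \le 2\mu$ enters essentially: without it, the bound fails at the right endpoint $u = L-\mu$ and the maximizer flips to the other end of the $b$-range, producing a different recursion.

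With $M(t) = \mu^2 t/(\mu^2+\sigma^2)$ in hand for $t \ge \mu$, recursion~\eqref{eq:opt_robust_tresholds_v2} collapses to the affine update $T(i) = \mu + r\,T(i+1)$ with $r = \sigma^2/(\mu^2+\sigma^2)$. A one-line induction verifies the hypothesis $T(i+1) \ge \mu$ needed to invoke this formula: the base case $T(n-1) = \mu - M(0) = \mu$ holds since $X \ge 0$ forces $M(0) = 0$, and the affine update preserves $T(\cdot) \ge \mu$ (and also $T(\cdot) \le \mu+\sigma^2/\mu$, justifying the earlier containment $t \le \mu + \sigma^2/\mu$). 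Unrolling from $T(n-1) = \mu$ gives $T(i) = \mu(1-r^{n-i})/(1-r)$; substituting $1-r = \mu^2/(\mu^2+\sigma^2)$ rearranges this into the announced identity~\eqref{eq:mu_sigma_L2mu}, and since $r \in [0,1)$ the limit $T(0) \to \mu/(1-r) = \mu + \sigma^2/\mu$ is immediate.

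I expect the main obstacle to be the inner maximization: although the feasible set is a one-parameter family in $u$ and $\varphi$ is U-shaped on $[\sigma^2/\mu, L-\mu]$ (so its maximum must occur at an endpoint), deciding which of the two endpoints actually attains the maximum requires a sign analysis in $(u,t)$, and the boundary $L = 2\mu$ is precisely where the winning endpoint switches. Once that is settled, the induction $T(i+1)\ge\mu$, the closed-form solution of the linear recursion, and the $n\to\infty$ asymptote are essentially bookkeeping.
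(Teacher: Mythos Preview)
Your proposal is correct and follows essentially the same approach as the paper: parametrize the two-point family, show the optimizer is the distribution on $\{0,\mu+\sigma^2/\mu\}$ (using $L\le 2\mu$ exactly where you do), derive the linear recursion $T(i)=\mu+\frac{\sigma^2}{\mu^2+\sigma^2}T(i+1)$, and solve it. The only cosmetic difference is that the paper parametrizes by the lower-atom probability $p$ and works with the $\min$-form \eqref{eq:opt_robust_tresholds}, identifying the endpoint via a symmetry argument for $p\mapsto K_0p+K_1\sqrt{p(1-p)}$ (Proposition~\ref{prop:p}), whereas you parametrize by $u=b-\mu$, use the equivalent $\max$-form \eqref{eq:opt_robust_tresholds_v2}, and pin down the endpoint by an explicit factorization; the induction $\mu\le T(i)\le \mu+\sigma^2/\mu$ and the final closed form are handled identically.
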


 When we consider the case $L > 2\mu$, things become more involved. It is no longer the case that every step of the backwards recursion yields the same worst-case distribution for \eqref{eq:opt_robust_tresholds}.  There turn out to be two worst-case distributions, $f^*$ (same as before) and $g^*$ and a ``turning point'' $n_0$, such that $g^*$ is the worst case for $i = n-n_0,\dots,n$, and $f^*$ the worst case for $i = 1,\dots,n-n_0-1$. In another words, if $n$ is large enough we essentially end up in the same scenario as for the case $L \leq 2\mu$. Our analysis leads to the threshold-based strategy as summarized in Theorem \ref{thm:mvs_twopoint_L>2mu} for two-point distributions in combination with the assumption that $L \geq 2\mu$.

\begin{theorem}[Ambiguity set $\mathcal{P}_2(\mu,\sigma^2,L)$ with $L \geq 2\mu$]
\label{thm:mvs_twopoint_L>2mu}
Let $n \in \N$, and let $\mu, L > 0$ and $\sigma^2 \geq 0$. Assume that $L \geq 2\mu$. Let $\mathcal{P}_2(\mu,\sigma^2,L)$ be the common ambiguity set consisting of all two-point distributions with mean $\mu$, variance $\sigma^2$ and support  contained in $[0,L]$. 
There exists an $n_0 = n_0(\mu,\sigma^2,L) \in \{3,\dots,n\}$ (independent of $n$) such that  the optimal robust threshold in \eqref{eq:opt_robust_tresholds} is given by $T(n) = 0$,
\begin{align}
T(i) = L\left[1 - \left(1-\frac{\mu}{L}\right)\left[\frac{(L-\mu)^2}{(L-\mu)^2 + \sigma^2}\right]^{n-1-i} \right] 
\label{eq:mu_sigma_L>2mu_part1}
\end{align}
for $i = n - n_0 + 1,\dots,n-1$, and by
\begin{align}
T(i) = \mu + \frac{\sigma^2}{\mu}\left[1 - \left(\frac{\sigma^2}{\mu^2 + \sigma^2}\right)^{n-n_0-i} \right] + T(n-n_0+1)\left(\frac{\sigma^2}{\sigma^2 + \mu^2}\right)^{n-n_0 +1 - i}
\label{eq:mu_sigma_L>2mu_part2}
\end{align}
for $i = 1,\dots,n - n_0$. The optimal robust payoff of the seller approaches $\mu + \sigma^2/\mu$ as $n \rightarrow \infty$.
\end{theorem}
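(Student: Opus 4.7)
The plan is to reduce the whole theorem to the inner maximization
\[
\max_{\Prob \in \mathcal{P}_2(\mu,\sigma^2,L)} \E_{\Prob}\bigl[\min\{T, X\}\bigr]
\]
from the equivalent form \eqref{eq:opt_robust_tresholds_v2} of the recursion. Every element of $\mathcal{P}_2(\mu,\sigma^2,L)$ is a two-point law determined by its upper atom $b = \mu + u$ with $u \in [\sigma^2/\mu,\, L-\mu]$ (this interval is nonempty precisely under the feasibility condition $\sigma^2 \leq \mu(L-\mu)$ that makes $\mathcal{P}_2$ nonempty), lower atom $a = \mu - \sigma^2/u$, and mass $(\mu-a)/(b-a)$ on $b$. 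Using $\E[\min\{T,X\}] = \mu - (1-p)(b-T)$, valid whenever $T \in [a,b]$, this family yields the explicit function
\[
h(u;T) = \mu - \frac{\sigma^2(\mu + u - T)}{u^2 + \sigma^2}.
\]

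A derivative computation shows that on $u > 0$ the equation $u^2 - 2u(T-\mu) - \sigma^2 = 0$ has a unique positive root, and that this root is a strict \emph{minimum} of $h(\cdot;T)$. Hence the maximum of $h(u;T)$ over the admissible interval $[\sigma^2/\mu,\, L-\mu]$ is attained at an endpoint, leaving exactly two candidate worst-case distributions: $f^*$ (at $u = \sigma^2/\mu$, with atoms $0$ and $\mu + \sigma^2/\mu$) and $g^*$ (at $u = L-\mu$, with atoms $\mu - \sigma^2/(L-\mu)$ and $L$). Solving $h(\sigma^2/\mu;T) = h(L-\mu;T)$ produces the crossover threshold
\[
T^* = \frac{(\mu^2+\sigma^2)(L-\mu)}{\mu(L-\mu)+\sigma^2},
\]
with $g^*$ dominating on $[0,T^*)$ and $f^*$ dominating on $(T^*, \mu+\sigma^2/\mu)$. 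Direct algebra using $L \geq 2\mu$ then delivers the chain $\mu \leq T^* < \mu + \sigma^2/\mu < L$, and a short case check confirms that the boundary regimes $T < a$ and $T > b$ do not generate any further candidates.

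I would now run the backward recursion from $T(n) = 0$. Because $T(n) = 0 \leq T^*$ and $T(n-1) = \mu \leq T^*$, nature's optimum is $g^*$, and so long as $T(i+1) \leq T^*$ the recursion \eqref{eq:opt_robust_tresholds_v2} collapses to the affine iteration
\[
T(i) = \alpha\, T(i+1) + L(1-\alpha), \qquad \alpha := \frac{(L-\mu)^2}{(L-\mu)^2 + \sigma^2},
\]
whose closed form with boundary condition $T(n-1)=\mu$ is exactly \eqref{eq:mu_sigma_L>2mu_part1}. This iterate is monotonically increasing toward the fixed point $L$, and hence exceeds $T^*$ in finitely many steps; one defines $n_0$ as the unique integer with $T(n-n_0+2) \leq T^* < T(n-n_0+1)$, and $n_0 \geq 3$ follows from $\mu \leq T^*$. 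A short algebraic inequality, which reduces to the unconditional $\mu^2 + (L-\mu)(L-2\mu) \geq 0$ when $L \geq 2\mu$, shows that the per-step jump $(1-\alpha)(L - T(i+1))$ at the transition is bounded above by $\mu + \sigma^2/\mu - T^*$, forcing $T(n-n_0+1) \in (T^*,\mu + \sigma^2/\mu)$. On that interval nature switches to $f^*$ and the recursion becomes
\[
T(i) = \frac{\sigma^2}{\mu^2+\sigma^2}\,T(i+1) + \mu,
\]
with fixed point $\mu + \sigma^2/\mu$; solving this linear recursion with initial value $T(n-n_0+1)$ and substituting the closed form of that value from the first phase reorganizes into \eqref{eq:mu_sigma_L>2mu_part2}. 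The second-phase iterates remain in $(T^*, \mu+\sigma^2/\mu)$, so nature never reverts to $g^*$; letting $n \to \infty$ with $n_0$ fixed, the geometrically decaying contribution of $T(n-n_0+1)$ in \eqref{eq:mu_sigma_L>2mu_part2} vanishes for each fixed $i$, and $T(0) \to \mu + \sigma^2/\mu$.

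The main obstacle is the worst-case identification in the second step: one must handle uniformly the three regimes $T < a$, $T \in [a,b]$, and $T > b$ and show that, despite the piecewise definition of $\E[\min\{T,X\}]$, only the two extreme distributions $f^*$ and $g^*$ can realize the inner maximum. Once that structural fact and the crossover value $T^*$ are in hand, the remainder is careful bookkeeping for two linear recursions glued at the unique index $n_0$, together with the algebraic verification that the transition threshold lands in the range where $f^*$ is the valid worst case.
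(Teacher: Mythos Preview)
Your proposal is correct and follows essentially the same architecture as the paper: identify that only the two extremal two-point laws $f^*$ (atoms $\{0,\mu+\sigma^2/\mu\}$) and $g^*$ (atoms $\{\mu-\sigma^2/(L-\mu),L\}$) can realise the inner optimum, run the corresponding linear recursions, and glue them at a unique switch index $n_0$. The paper parametrises the family by $p$ (the mass on the lower atom) and uses Proposition~\ref{prop:p}; you parametrise by $u=b-\mu$ and differentiate $h(u;T)$. These are equivalent.

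Where the arguments genuinely diverge is in the ``no switch back'' step. The paper writes $T(i)$, viewed as a function of $p$, as $g_1^i(p)+g_2^i(p)$ and uses the monotonicity $T(i+1)\geq T(i+2)$ from Lemma~\ref{lem:bound} to show that once the minimum is at the left endpoint it stays there. You instead compute the explicit crossover value $T^*$, observe that both endpoint values $h(\sigma^2/\mu;T)$ and $h(L-\mu;T)$ are affine in $T$, and then show the $f^*$-recursion (a contraction toward $\mu+\sigma^2/\mu$) never drops back below $T^*$. Likewise, for the bound $T(n-n_0+1)<\mu+\sigma^2/\mu$ the paper simply invokes Lemma~\ref{lem:bound}, whereas you prove it by the algebraic jump estimate at the transition. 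Your route is more computational but fully self-contained; the paper's is shorter once Lemma~\ref{lem:bound} is available.

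Two minor points. First, at $T(n)=0$ the expression $h(u;0)$ is not literally valid because $0<a$ for every admissible $u$; this is harmless since $\E[\min\{0,X\}]=0$ for every distribution and hence $T(n-1)=\mu$ regardless, after which your formula applies throughout the range $T\in[\mu,\mu+\sigma^2/\mu]$. Second, your claimed reduction of the transition inequality to $\mu^2+(L-\mu)(L-2\mu)\geq 0$ drops a $+\sigma^2$ term; the correct reduction is $\mu^2+(L-\mu)(L-2\mu)+\sigma^2\geq 0$, which only makes the inequality easier.
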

In Section \ref{sec:mvs_l>2mu} we explain how to compute $n_0$ in terms of $\mu, \sigma^2$ and $L$. We have provided an illustration of the behavior of the thresholds in Figure \ref{fig:thresholds}. Roughly speaking, our analysis in Section \ref{sec:mvs_l>2mu} shows that in every step the distributions $f^*$ and $g^*$ are the only candidates for yielding the worst-case threshold in \eqref{eq:opt_robust_tresholds} (the red squares and blue circles, respectively). Nature then chooses in every step the distribution yielding the minimum of the two (indicated by the green line). This results in the policy as described in Theorem \ref{thm:mvs_twopoint_L>2mu}. We have also indicated the turning point $n_0$ in Figure \ref{fig:thresholds}, as well as the asymptotic payoff $\mu + \sigma^2/\mu$.\\

\begin{figure}[ht!]
\centering
\includegraphics[scale=0.5]{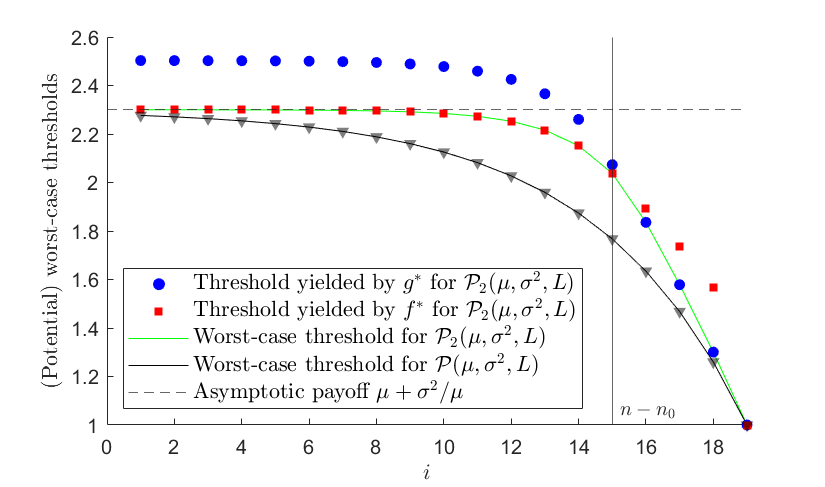}
\caption{Sketch of potential worst-case thresholds yielded by $f^*$ and $g^*$ for $\mu = 1$, $\sigma^2 = 1.3\mu$ and $L = 5\mu$. The turning point $n - n_0 = 15$ is indicated with a vertical line, and the asymptotic payoff $\mu + \sigma^2/\mu$ with a horizontal dashed line. Note that both $f^*$ and $g^*$ yield the same threshold for $i = n-1$ (namely $\mu = 1$). The robust threshold $T(i)$ as described in Theorem \ref{thm:mvs_twopoint_L>2mu} is given by the minimum in every step $i$ (indicated by the green line). That is, in steps $i > 15$ the minimum is attained by the function $g^*$, but for $i \leq 15$, it is attained by $f^*$.}
\label{fig:thresholds}
\end{figure}

In case the assumption of two-point distributions is relaxed to considering arbitrary distributions,  Boshuizen and Hill \cite[Corollary 3.2]{boshuizen1992moment} provide an analytical solution to the recursion in \eqref{eq:opt_robust_tresholds} for the ambiguity set $\mathcal{P}(\mu,\sigma^2,L)$ resulting in the thresholds 
\begin{align}
T(i) & = \mu + \frac{\sigma^2}{\mu}\left[1 - \left(1 - \frac{\mu}{L}\right)^{n-1-i} \right]
\label{eq:mvs_general}
\end{align}

{Theorems \ref{thm:mvs_twopoint_L<2mu} and \ref{thm:mvs_twopoint_L>2mu} nicely complement the result in  \cite[Corollary 3.2]{boshuizen1992moment} as for the more general ambiguity set $\mathcal{P}(\mu,\sigma^2,L)$, the worst-case distributions have a support on three points \cite{boshuizen1992moment} (more on this later).}

 {Equipped with a full understanding of  ambiguity sets based on knowning the mean, variance and support upper bound (Theorems \ref{thm:mvs_twopoint_L<2mu}, \ref{thm:mvs_twopoint_L>2mu} and \cite{boshuizen1992moment}), we further develop the connection between ambiguity and optimal stopping, by considering another related ambiguity set in which the variance in $\mathcal{P}(\mu,\sigma^2,L)$ is replaced by the mean absolute deviation (MAD). This allows use to also consider ambiguity sets for which the variance is infinite. 
The resulting ambiguity set of all distributions with mean $\mu$, mean absolute deviation $d$ and support upper bound $L$ is denoted by $\mathcal{P}(\mu,d,L)$. }

{As opposed to the ambiguity set $\mathcal{P}(\mu,\sigma^2)$, we can in fact already obtain a non-trivial solution for the set $\mathcal{P}(\mu,d)$ consisting of all distributions with mean $\mu$ and mean absolute deviation  $d$, but \emph{without} the assumption of a support upper bound $L$.  This adds significantly to the relevance of studying the MAD instead of the variance, as in this case one only has to assume knowledge of two distributional properties, rather than three.

\begin{theorem}[Ambiguity set $\mathcal{P}(\mu,d,L)$]
Let $n \in \N$,  $\mu, d, L \geq 0$. Let $\mathcal{P}(\mu,d,L)$ be the set of all distributions with mean $\mu$, mean absolute deviation $d$ and whose support is a subset of the interval $[0,L]$. Let $\mathcal{P}(I_i) = \mathcal{P}(\mu,d,L)$ for all $i = 1,\dots,n$. Then the optimal robust threshold in \eqref{eq:opt_robust_tresholds} equals
\begin{align}
T(i) = \frac{2\mu^2}{2\mu - d} - \left[\frac{2\mu^2}{2\mu - d} - \mu \right]\left( \frac{d}{2\mu}\right)^{n-1-i}
\label{eq:mms_general}
\end{align}
for $i = 1,\dots,n-1$, and $T(n) = 0$. Furthermore, as $n \rightarrow \infty$, the seller's expected robust payoff $r^* = T(0)$  approaches $2\mu^2/(2\mu-d)$.
\label{thm:mms}
\end{theorem}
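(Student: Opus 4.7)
My plan is to invoke Theorem~\ref{thm:opt_robust_thresholds} to reduce each step of the recursion to an inner moment problem, solve that moment problem in closed form via linear-programming duality, and then solve the resulting first-order linear recursion. Using $\E[\max\{t,X\}] = t + \E[(X-t)^+]$ together with $\E[X] = \mu$, the recursion~\eqref{eq:opt_robust_tresholds_v2} reads
\begin{equation*}
T(i) = T(i+1) + \min_{\Prob \in \mathcal{P}(\mu, d, L)} \E_{\Prob}\bigl[(X - T(i+1))^+\bigr], \qquad T(n) = 0.
\end{equation*}

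The heart of the proof is the moment bound
\begin{equation*}
\min_{\Prob \in \mathcal{P}(\mu, d, L)} \E_{\Prob}\bigl[(X-t)^+\bigr] = \mu - \frac{t(2\mu - d)}{2\mu} \qquad \text{for } t \in \{0\} \cup [\mu, \beta^\star],
\end{equation*}
where $\beta^\star := 2\mu^2/(2\mu - d)$, and the minimum is attained by the two-point distribution $\Prob^\star := \tfrac{d}{2\mu}\delta_{0} + (1 - \tfrac{d}{2\mu})\delta_{\beta^\star}$. I would establish the lower bound by exhibiting the piecewise-linear dual minorant
\begin{equation*}
\phi(x) := -\tfrac{t}{2} + \bigl(1 - \tfrac{t}{2\mu}\bigr) x + \tfrac{t}{2\mu}\, |x - \mu|
\end{equation*}
and verifying $\phi(x) \leq (x-t)^+$ pointwise on $[0, L]$ by a case split across $[0, \mu]$, $[\mu, t]$ and $[t, L]$ (this uses $t \geq \mu$; the corner case $t = 0$ is trivial, since then $(X)^+ = X$ and $\E[X] = \mu$). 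Integrating $\phi$ against any $\Prob \in \mathcal{P}(\mu, d, L)$ then gives $\E_{\Prob}[(X-t)^+] \geq \E_{\Prob}[\phi(X)] = \mu - t(2\mu-d)/(2\mu)$, and a one-line computation against $\Prob^\star$ matches the bound. Feasibility of $\Prob^\star$ is automatic: the ambiguity set is nonempty if and only if $d \leq 2\mu(L-\mu)/L$, which rearranges precisely to $\beta^\star \leq L$.

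Plugging the moment bound back into the recursion collapses it to the first-order linear relation $T(i) = \mu + (d/(2\mu))\,T(i+1)$ with $T(n) = 0$, which is valid at every step by a short induction: $T(n-1) = \mu$, and $T(i) \in [\mu, \beta^\star]$ whenever $T(i+1) \in [\mu, \beta^\star]$ (the endpoint $\beta^\star$ being the unique fixed point). Solving this recursion with fixed point $t^\star = \mu/(1 - d/(2\mu)) = 2\mu^2/(2\mu - d)$ gives $T(i) - t^\star = (d/(2\mu))^{n-i}(T(n) - t^\star) = -t^\star (d/(2\mu))^{n-i}$, and the fixed-point identity $t^\star \cdot d/(2\mu) = t^\star - \mu$ rearranges this to the stated closed form. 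The limit $T(0) \to t^\star = 2\mu^2/(2\mu - d)$ as $n \to \infty$ then follows from $d/(2\mu) < 1$, which is implied by $L < \infty$ together with nonemptiness of $\mathcal{P}(\mu, d, L)$.

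The hardest part will be pinpointing the correct worst-case distribution $\Prob^\star$ (equivalently, the dual minorant $\phi$). In contrast to the variance-based ambiguity set $\mathcal{P}(\mu, \sigma^2, L)$ of Boshuizen--Hill, whose worst case is supported on three points, the MAD constraint collapses the worst case to a two-point distribution with atoms at $0$ and $\beta^\star$; once this structural insight is in hand, the LP-duality verification and the subsequent linear recursion proceed by routine calculation.
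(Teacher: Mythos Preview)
Your proposal is correct and follows essentially the same approach as the paper: you solve the inner moment problem by LP duality (your minorant $\phi$ is exactly $x - M(x)$ for the paper's majorant $M$ in the third regime of Lemma~\ref{lem:cox_mudL}, via the identity $\min\{t,x\} + (x-t)^+ = x$), identify the same worst-case two-point distribution on $\{0,\beta^\star\}$, confine the thresholds to $[\mu,\beta^\star]$ by the same fixed-point/monotonicity argument, and solve the same linear recursion $T(i) = \mu + (d/2\mu)T(i+1)$. The only cosmetic difference is that you work directly with $\E[(X-t)^+]$ rather than passing through $\max_{\Prob}\E[\min\{t,X\}]$ as in~\eqref{eq:opt_robust_tresholds_v2}.
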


\noindent Perhaps surprisingly, the thresholds in Theorem \ref{thm:mms} are also optimal for the ambiguity set $\mathcal{P}(\mu,d,L)$. That is, if one introduces an additional support constraint, then the same analysis remains valid. Furthermore, it turns out that in this case (as in Theorem \ref{thm:mvs_twopoint_L<2mu}) it is optimal for nature in every step to choose a fixed two-point distribution supported on $\{0,\mu + d\mu/(2\mu-d)\}$, which again gives rise to a good/bad scenario as the optimal strategy for nature. It is interesting to note though, that this worst-case scenario is no longer unique, leaving nature more choices.}

{In order to solve the maximin problem for the ambiguity set $\mathcal{P}(\mu,d,L)$, we use the theory of moment-bound problems. 
There is a rich literature on solving such moment problems in terms of semi-infinite linear programs (LPs) and using duality theory, see, e.g., \cite{popescu2005semidefinite} and references therein. 
As a proof of concept, we will explain how to use such techniques in order to prove \cite[Corollary 3.2]{boshuizen1992moment}. We show, based on a result of Cox \cite{cox1991bounds}, appearing in \cite{boshuizen1992moment} as well, that for $\mathcal{P}(\mu, \sigma^2,L)$ the worst-case distribution is in fact a three-point distribution on $\{0,T(i+1),L\}$, which in turn gives the explicit non-trivial stopping strategy in \cite[Corollary 3.2]{boshuizen1992moment}.} The thresholds in this case gradually decrease from (asymptotically) $\mu + \frac{\sigma^2}{\mu}$ to $\mu$, similar as in the analysis for $\mathcal{P}_2(\mu,\sigma^2,L)$. We have added these thresholds to Figure \ref{fig:thresholds} for comparison with the thresholds of the ambiguity set $\mathcal{P}_2(\mu,\sigma^2,L)$. We emphasize that the probability mass assigned to the points $\{0,T(i+1),L\}$ is different for every $i$. In fact, it turns out that when we move back in time, the three-point distribution gradually starts to mimic the two-point distribution $f^*$ on $\{0,\mu + \sigma^2/\mu\}$. To be more precise, the point $T(i)$ gets closer and closer to $\mu + \sigma^2/\mu$ and the probability mass put on $L$ approaches $0$. This is illustrated in Figure \ref{fig:prob_mass} in Section \ref{sec:mvs_general}. This means that for large $n$, nature will, informally speaking, eventually resort to a good/bad scenario in order to establish the worst possible situation for the seller. For both ambiguity sets $\mathcal{P}_2(\mu,\sigma^2,L)$ and $\mathcal{P}(\mu,\sigma^2,L)$, we also show (Theorems \ref{thm:asymptotic_twopoint} and \ref{thm:asymptotic}) that the seller can (almost) obtain the asymptotic payoff of $\mu + \sigma^2/\mu$ by  using the simple static threshold $T \approx \mu + \sigma^2/\mu$ in every step, if $n$ is large. In other words, in case there are many offers and the seller is not interested in choosing the optimal threshold in every individual step of the game, then this static threshold policy is a good alternative to obtain approximately the same asymptotic payoff as for the optimal policy.
 Our proof approach for Theorem \ref{thm:mms} consists of using a mean-MAD-support analogue of the result of Cox \cite{cox1991bounds}, after which we apply our recursive reasoning. More details are given in Section \ref{sec:prelim_momentbound}.

A recurrent theme in our results is the prominent role that two-point distributions play for nature. In the mean-variance-support settings, the good/bad scenario supported on $\{0,\mu + \sigma^2/\mu\}$ becomes, sooner or later, the (approximately) optimal adversarial choice for nature. The most interesting observation here seems to be the abrupt phase transition towards this distribution for the ambiguity set $\mathcal{P}_2(\mu,\sigma^2,L)$ with $L \geq 2\mu$. In the mean-MAD setting, the two-point distribution supported on $\{0,\mu + d\mu/(2\mu-d)\}$ is always an optimal adversarial choice. In conclusion, the essence of nature's adversarial response to the seller's stopping rule, under mean-dispersion information, seems to lie in creating an all-or-nothing scenario for the seller.

\subsection{Related work and further discussion}\label{sec:relwork}
The areas of stopping theory and dynamic programming are vast. For an overview of classical stopping theory, we refer the reader to Ferguson \cite{ferguson2006optimal}. 
 Various robust approaches towards stopping problems have been introduced as well. 
Closest to our work is the robust stopping framework of Riedel \cite{riedel2009optimal}, that also generalizes the classical Bayesian setting. He provides a similar result as Theorem \ref{thm:opt_robust_thresholds}, but much more general, and is based on martingale theory. However, this framework relies on some assumptions that seem not to be satisfied by our setting. We can circumvent those as we consider a simple, concrete stopping problem with independent distributions.  This allows us to give a more elementary proof of Theorem \ref{thm:opt_robust_thresholds}, avoiding advanced martingale theory, which might be of independent interest. In the context of Markov decision processes, Iyengar \cite{iyengar2005robust}, and independently Ghaoui and Nilim \cite{el2005robust}, provide a robust dynamic programming approach, leading to a robust Bellman equation, which is of a similar spirit as Theorem \ref{thm:opt_robust_thresholds}. In the Markov decision process literature, the robust perspective is introduced with respect to uncertainty arising in the transition matrix that determines with which probabilities the process moves to a new state. 

{The paper closest to our paper is 
 Boshuizen and Hill \cite{boshuizen1992moment}, as discussed earlier, on the optimal stopping problem with mean-variance information.  
  Furthermore, Stewart \cite{stewart1978optimal}, and later Samuel \cite{samuels1981minimax}, consider robust stopping rules for unknown uniform random variables, and Petruccelli \cite{petruccelli1985maximin} for normally distributed random variables. Goldenshluger and Zeevi \cite{goldenshluger2022optimal}  consider the optimal stopping problem with (fully) unknown iid distributions. }
A closely related problem to maximizing the seller's expected payoff is the so-called prophet inequality problem. Here the payoff of the seller is compared to that of a prophet, who gets to see all the offers up front and simply can select the one with the highest value. Garling, Krengel and Sucheston \cite{krengel1977} showed that the backwards induction procedure from the introduction yields an expected payoff for the seller that is at least half that of the prophet (such a result is called a prophet inequality). An elegant result by Samuel-Cahn \cite{samuel1984comparison} shows that the same guarantee can be achieved with a simple static threshold (see also the work of Kleinberg and Weinberg for an alternative static threshold approach \cite{kleinberg2012matroid}). Our asymptotic results in Theorems \ref{thm:asymptotic_twopoint} and \ref{thm:asymptotic} illustrate (albeit only in the asymptotic case) a similar contrast between the optimal strategy for the seller and simple static threshold-based alternatives. 

The prophet inequality literature has seen an immense surge in the last fifteen years; we refer to the nice survey of Lucier \cite{lucier2017survey} for an overview. In particular, various robust/limited information alternatives for the classical setting have been proposed. One prominent direction here is that initiated by Azar, Kleinberg and Weinberg~\cite{azar2014limited} where the seller only has access to samples from the underlying offer distributions, instead of the distribution itself. In this context, Rubinstein, Wang and Weinberg \cite{rubinstein2020optimal} showed that the seller can still guarantee in expectation half of the prophet's payoff in case the seller has access to one sample of every distribution. In the so-called i.i.d.~setting, in which the offer distributions are distributed identically and independently, it is even possible to obtain non-trivial prophet inequalities when nothing is known about the underlying (common) distribution of the offers, see, e.g., the work of Correa et al.~\cite{correa2021prophet_journal} and references therein.

{In the sample-access setting, the performance of the seller is measured, in addition, with respect to the expected performance under the random samples. This yields an average-case analysis with respect to the limited information. Our framework, however, uses ambiguity sets and does not require additional randomness over the given limited information. Our framework thus yields a robust worst-case analysis with respect to the limited information. Moreover, to the best of our understanding, there is no way to model sample knowledge as an ambiguity set in our framework without having to introduce additional randomness in the performance guarantee of interest (in order to make the problem well-defined and nontrivial). So whereas sample access naturally ties in with average-case analysis, our robust approach presents a worst-case analysis.}

A robust maximin approach, closer to our work, is that of D\"utting and Kesselheim \cite{dutting2019inaccurate}. They study how robust various techniques for proving prophet inequalities are, with respect to inaccurate (prior) offer distributions. In particular, they also study how errors propagate in a prophet inequality approximation guarantee when using the backwards induction procedure for computing the optimal strategy of the seller.  Further robust prophet inequality-like approaches are, e.g., considered by Schlag and Zapoechelnyuk \cite{schlag2021robust,schlag2016value}. An interesting direction for future work could be to investigate to what extent a distributionally robust prophet inequality is possible under mean-dispersion-support information. For example, as the seller can always achieve an (asymptotic) payoff of $\mu + \sigma^2/\mu$, and the prophet can clearly always get a payoff of at most $L$, the seller can obviously approximate the payoff of the prophet to within a factor $L/(\mu + \sigma^2/\mu)$. How much can this be improved?

Apart from the welfare maximization problem we study in this work, an interesting direction for future work we also plan to investigate is the revenue maximization version of the problem, in which the seller posts for every potential buyer a (randomized) price, and selects the first offer exceeding the posted price. In fact, in recent years there has been a large growth on literature concerning distributionally robust auction theory and mechanism design. This dates back to, e.g., the works of Bergemann and Schlag \cite{bergemann2008pricing,bergemann2011robust} who study a robust monopoly pricing problem. Another prominent line of work is that initiated by Azar and Micali \cite{azarmicali2013} who study parametric digital auctions under mean-variance ambiguity. It should be noted that they study the ratio between the revenue of a seller who only has access to limited distributional information, and a seller who has access to the full distributions. See also the works of Daskalakis and Pierrakos \cite{daskalakis2011simple} and Azar et al.~\cite{azar2013optimal}, and the more recent work of Giannakopoulos et al.~\cite{giannakopoulos2020robust}. The absolute revenue maximization problem is studied by Carrasco et al~\cite{carrasco2018optimal}. For further directions in distributionally robust mechanism design (some of which considering mean-dispersion information), see, e.g., \cite{bergemann2011robust,allouah2020prior,koccyiugit2020distributionally,che2019robust,he2021correlation,suzdaltsev2020distributionally,wolitzky2016mechanism,roos2019chebyshev,allouah2022pricing,anunrojwong2022robustness} and references therein.  

Our work also opens up a plethora of opportunities to study other (absolute) online selection problems with mean-dispersion information where multiple offers may be accepted by the seller subject to a combinatorial feasibility constraint. In the context of prophet inequalities, such models have been studied extensively. Here one may think of models where, e.g., the seller can accept up to $k$ offers, see, e.g., the work of Hajiaghayi et al.~\cite{hajiaghayi2007automated}. Other types of combinatorial restrictions one may consider are matroid constraints, as in the work of Kleinberg and Weinberg  \cite{kleinberg2012matroid}, or matchings constraints, see, e.g., the recent work of Ezra et al.~\cite{ezra2020online} (and references therein).
For a general combinatorial framework for prophet inequalities, we refer the reader to the work of D\"utting et al.~\cite{dutting2020prophet}.

\subsection{Technical preliminaries}	\label{sec:prelim}
	A nonnegative probability distribution is described by a probability measure $\mathbb{P} : \mathcal{B} \rightarrow [0,1]$, with $\int_x 1\, \mathrm{d}\Prob(x) $, where $\mathcal{B}$ is the Borel $\sigma$-algebra on $\R_{\geq 0}$.  We will interchangeably use the terms probability distribution and probability measure.
For a random variable $X$ distributed according to $\Prob$, denoted by $X \sim \Prob$,  and function $g: \R \rightarrow \R$, we write
$$
\E[g(X)] = \E_\Prob[g(X)] = \int_{\R_{\geq 0}} g(x) \, \mathrm{d}\Prob(x).
$$
We always assume that the integral on the right hand side is well-defined. 
We denote the \emph{mean} of $X$ by $\mu = \E[X]$, and its \emph{variance} by $\sigma^2 = \E[(X - \E[X])^2]$. The \emph{mean absolute deviation (MAD)} of $X$ is given by $d = \E\big[|X - \E[X]|\big]$, and can be seen as an $L_1$-norm dispersion measure. We say that the support of $X$ is included in the interval $[a,b] \subseteq \R_{\geq 0}$ if $\int_a^b 1\, \mathrm{d}\Prob(x) = 1$. 
In this work, for given $\mu, \sigma^2, L \geq 0$, we write $\mathcal{P}(\mu)$ for the class of all nonnegative distributions with mean $\mu$; $\mathcal{P}(\mu,\sigma^2)$ for the class of all nonnegative distributions with mean $\mu$ and variance $\sigma^2$; and $\mathcal{P}(\mu,\sigma^2,L)$ for the set of all distributions with mean $\mu$, variance $\sigma^2$ and support a subset of the interval $[0,L]$. We write $\mathcal{P}_2(\cdot) \subset \mathcal{P}(\cdot)$ for the set of all discrete distributions supported on two points (called two-point distributions) satisfying the given information $(\cdot)$. For example, $\mathcal{P}_2(\mu)$ is the set of all nonnegative two-point distributions with mean $\mu$. Analogous definitions holds for the case where the variance $\sigma^2$ is replaced by the mean absolute deviation $d \geq 0$.

We implicitly only consider combinations of the above parameters for which the resulting ambiguity sets are non-empty. In particular, for given $\mu, \sigma^2, L \geq 0$ it holds that $\mathcal{P}(\mu,\sigma^2,L)$ is non-empty if and only if
\begin{align}
\sigma^2 \leq \mu(L - \mu)
\label{eq:non-empty_sigma}
\end{align}
and $\mathcal{P}(\mu,d,L)$ is non-empty if and only if
\begin{align}
d \leq \frac{2\mu(L-\mu)}{L}
\label{eq:non-empty_mad}
\end{align}
We give a proof of the non-emptiness conditions \eqref{eq:non-empty_sigma} and \eqref{eq:non-empty_mad} in Appendix \ref{app:non-empty}. The same conditions hold true for the non-emptiness of $\mathcal{P}_2(\mu,\sigma^2,L)$ and $\mathcal{P}_2(\mu,d,L)$, respectively.

\subsection{Outline}
We continue in Section \ref{sec:robust_optimal_thresholds} with a proof of Theorem \ref{thm:opt_robust_thresholds} showing the optimality of the robust thresholds in \eqref{eq:opt_robust_tresholds}. We then provide a complete analysis of the two-point, mean-variance-support setting in Section \ref{sec:mvs_twopoint}, where we in particular proof Theorems \ref{thm:mvs_twopoint_L<2mu} and \ref{thm:mvs_twopoint_L>2mu}. In Section \ref{sec:mvs} we then explain primal-dual moment bound problems, and provide our results for the general mean-variance-support setting \cite[Corollary 3.2]{boshuizen1992moment}, as well as the mean-MAD(-support) setting (Theorem \ref{thm:mms}).
		
	\section{Optimality of robust thresholds}
	\label{sec:robust_optimal_thresholds}
	
We first formally define (randomized) stopping rules in Section \ref{sec:stopping_rules}, and afterwards give the proof of Theorem \ref{thm:opt_robust_thresholds} in Section \ref{sec:proof_main}.

\subsection{Stopping rules}\label{sec:stopping_rules}
A (randomized) stopping rule $\tau$ is a collection of functions $(\tau_i)_{i = 1,\dots,n}$, where $\tau_i$ is a (Bernoulli) random variable that takes as inputs $v_1,\dots,v_i$ and the limited information $I = (I_1,\dots,I_n)$. Based on this input, the function $\tau_i$ selects value $v_i$ with probability $r_i(v_1,\dots,v_i,I)$. To be precise, we have
\begin{align}
\tau_i(v_1,\dots,v_i,I) = \left\{
\begin{array}{ll}
1 & \text{ with prob. } r_i(v_1,\dots,v_i,I)\\
0 & \text{ with prob. } 1 - r_i(v_1,\dots,v_i,I)
\end{array}
\right.
\label{eq:tau_i}
\end{align}
for $i = 1,\dots,n$. This means that the probability we select offer $i$ is  given by
\begin{align}
\Prob[\tau = i \ | \ v_1,\dots,v_i] = r_i(v_1,\dots,v_i,I)\prod_{j=1}^{i-1}(1 - r_j(v_1,\dots,v_j,I)).
\label{eq:tau_i_prob}
\end{align}
We use $i^*$ to denote the random variable representing the stopping moment, i.e., the first $i$ for which $\tau_i(v_1,\dots,v_i,I) = 1$ in a fixed tuple $(v_1,\dots,v_n)$, which is distributed according to \eqref{eq:tau_i_prob}. We then write 
$$
X_{\tau} = \E[v_{i^*}] =  \sum_{i=1}^n \Prob[\tau = i \ | \ v_1,\dots,v_i]\cdot v_i
$$
for the expected value of the offer that the stopping rule $\tau$ selects, on the fixed tuple $(v_1,\dots,v_n)$. 
The expected payoff of the seller, where $v_i$ is a realization of a random variable $X_i \sim \Prob_i$ for $i = 1,\dots,n$, is then given by
\begin{align}
\E_{\Prob_1,\dots,\Prob_n}[X_{\tau}].
\label{eq:expected_payoff}
\end{align}

\subsection{Proof of Theorem \ref{thm:opt_robust_thresholds}}
\label{sec:proof_main}
We will use (backwards) induction. It is not hard to see that the claim is true for $n = 1$, because if there is only one offer then it is clearly optimal to simply accept it always. Now suppose the claim is true for all instances with $n-1$ offers for some $n-1 \in \N$. We prove the thresholds in \eqref{eq:opt_robust_tresholds} are also optimal for instances with $n$ offers. 

Let $\tau = (\tau_1,\dots,\tau_n)$ be a stopping rule that has higher expected robust payoff than the optimal strategy $\tau^* = (\tau_1^*,\dots,\tau_n^*)$ given by the threshold-based strategy in \eqref{eq:opt_robust_tresholds}. That is, assume that
\begin{align}
\E_{\Prob_1,\dots,\Prob_n}[X_{\tau}] > \E_{\Prob_1,\dots,\Prob_n}[X_{\tau^*}].
\label{eq:contradiction}
\end{align}
We will first argue (Lemma \ref{lem:induction_part1}), using the recursion hypothesis, that it must be the case that $\tau$ and $\tau^*$ coincide for $i = 2,\dots,n$, i.e., that $\tau$ is of the form $(\tau_1,\tau_{-1}^*) = (\tau_1,\tau^*_2\dots,\tau^*_{n})$. Secondly, we will argue (Lemma \ref{lem:induction_part2}) that then also $\tau_1 = \tau^*_1$ must hold, which gives a contradiction. 

We first introduce some additional notation. For a stopping rule $\sigma$ with fixed value $v_1$ and $\sigma_1(v_1,I) = 0$, we write $\sigma_{-1} = (\sigma_2,\dots,\sigma_n)$ for the stopping rule induced by fixing $v_1$ and $\sigma_1(v_1,I) = 0$ in $\sigma_2,\dots,\sigma_n$ (we slightly abuse notation and call the resulting stopping rules $\sigma_2,\dots,\sigma_n$ again; it will always be clear from context whether $v_1$ and $\sigma_1(v_1,I)$ are fixed or not). We remark that in, for the stopping rule $\tau^*$, fixing $v_1$ does not change $\tau_2^*,\dots,\tau_n^*$ as these are independent of $v_1$, which can be seen right away from \eqref{eq:opt_robust_tresholds}.  Note that $\sigma_{-1}$ is a well-defined stopping rule on an instance with $n - 1$ offers. Furthermore, because of the independence of the random variables, it holds that for $i = 2,\dots,n$, $v_i$ is still distributed according to $X_i \sim \Prob_i$.

\begin{lemma}
Let $\tau$ be any stopping rule, and let $\tau^*$ be the stopping rule described by the thresholds in \eqref{eq:opt_robust_tresholds}. Let $(\tau_1,\tau_{-1}^*) = (\tau_1,\tau^*_2\dots,\tau^*_{n})$. Then
\begin{align}
\min_{\Prob_1,\dots,\Prob_n} \E[X_{\tau}] \leq \min_{\Prob_1,\dots,\Prob_n} \E[X_{(\tau_1,\tau_{-1}^*)}].
\label{eq:lemma1}
\end{align}
\label{lem:induction_part1}
\end{lemma}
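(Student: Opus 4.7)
The plan is to use backward induction together with a decomposition of $\E[X_\tau]$ based on whether offer $1$ is selected or not. First, using independence of $X_1, \ldots, X_n$ and the form of $X_\tau$ from Section \ref{sec:stopping_rules}, I would write
\begin{equation*}
\E_{\Prob_1, \ldots, \Prob_n}[X_\tau] = \int_0^\infty r_1(v_1, I) \, v_1 \, \dd\Prob_1(v_1) + \int_0^\infty (1 - r_1(v_1, I)) \, \E_{\Prob_2, \ldots, \Prob_n}[X_{\tau_{-1}(v_1)}] \, \dd\Prob_1(v_1),
\end{equation*}
where $\tau_{-1}(v_1)$ is the induced stopping rule on offers $2, \ldots, n$ after fixing $v_1$ and $\tau_1(v_1, I) = 0$.

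The crucial step is to invoke the induction hypothesis on the $(n-1)$-offer subproblem uniformly in $\Prob_1$. For any $\Prob_1$ with $C := \int_0^\infty (1 - r_1(v_1, I)) \, \dd\Prob_1(v_1) > 0$ (the case $C = 0$ is immediate since the second integral then vanishes), I would build a single randomized stopping rule $\tilde{\tau}_{-1}$ on offers $2, \ldots, n$ by first drawing an auxiliary value $Y \sim \Prob_1(\cdot \mid \text{not stopping at } 1)$, independently of $X_2, \ldots, X_n$, and then playing $\tau_{-1}(Y)$. Because this auxiliary randomness can be folded into the Bernoulli functions of Section \ref{sec:stopping_rules}, $\tilde{\tau}_{-1}$ fits the framework of a valid randomized stopping rule on the $(n-1)$-offer subproblem, and by construction
\begin{equation*}
\int_0^\infty (1 - r_1(v_1, I)) \, \E_{\Prob_{-1}}[X_{\tau_{-1}(v_1)}] \, \dd\Prob_1(v_1) \; = \; C \cdot \E_{\Prob_{-1}}[X_{\tilde{\tau}_{-1}}].
\end{equation*}
The induction hypothesis then yields $\min_{\Prob_{-1}} \E[X_{\tilde{\tau}_{-1}}] \leq T(1) = \min_{\Prob_{-1}} \E[X_{\tau_{-1}^*}]$, which combined with the decomposition gives, for every $\Prob_1$,
\begin{equation*}
\min_{\Prob_{-1}} \E_{\Prob_1, \Prob_{-1}}[X_\tau] \; \leq \; \int_0^\infty \bigl[ r_1(v_1, I) \, v_1 + (1 - r_1(v_1, I)) \, T(1) \bigr] \, \dd\Prob_1(v_1).
\end{equation*}

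To close the argument, I would observe that for the hybrid rule $(\tau_1, \tau_{-1}^*)$ the inner quantity $\E_{\Prob_{-1}}[X_{\tau_{-1}^*}]$ is constant in $v_1$, so the same decomposition decouples exactly and the last display holds with equality rather than $\leq$. Taking the outer $\min_{\Prob_1}$ on both sides and combining the two resulting identities then yields \eqref{eq:lemma1}.

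The main obstacle will be making the mixture construction of $\tilde{\tau}_{-1}$ fully rigorous within the Bernoulli-function framework of Section \ref{sec:stopping_rules}: one has to verify that the $\Prob_1$-based auxiliary randomness can be encoded into the $\tilde{\tau}_i$ for $i \geq 2$ without disrupting measurability or the independence assumption on $X_2, \ldots, X_n$, and to handle the degenerate case $C = 0$ cleanly. The remaining pieces --- the decomposition, the equality in the hybrid case, and the non-negativity of $1 - r_1$ that lets the minimum over $\Prob_{-1}$ be pulled inside the $\Prob_1$-integral --- are direct bookkeeping.
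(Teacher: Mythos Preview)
Your argument is correct and shares the paper's overall structure---decompose on the first offer and invoke the induction hypothesis on the $(n-1)$-offer tail---but diverges at the key technical step. The paper applies the induction hypothesis \emph{pointwise} in $v_1$, obtaining $\min_{\Prob_{-1}}\E[X_{\tau}\mid v_1]\leq\min_{\Prob_{-1}}\E[X_{(\tau_1,\tau_{-1}^*)}\mid v_1]$ for every $v_1$, and then passes back to the unconditional inequality via an $\epsilon$-approximating choice $(\M_2,\dots,\M_n)$ together with the standard interchange $\int\min\leq\min\int$. You instead absorb the $v_1$-dependence of $\tau_{-1}(v_1)$ into a single mixture rule $\tilde\tau_{-1}$ and apply the induction hypothesis once, which bypasses both the $\epsilon$-argument and the interchange inequality. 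The price you pay is the framework check you already flag: that $\tilde\tau_{-1}$ can be realized with independent Bernoulli coins as in Section~\ref{sec:stopping_rules}. This is routine---marginalizing out $Y$ by the sequential factorization $\tilde r_i(v_2,\dots,v_i)=\Prob[\text{stop at }i\mid v_2,\dots,v_i,\ \text{not stopped before}]$ gives an equivalent rule in the required class with the same expected payoff for every $\Prob_{-1}$. Your route is arguably cleaner on the min-integral exchange (which is the delicate point in the paper's write-up), while the paper's route stays entirely within elementary manipulations and avoids constructing any auxiliary rule.
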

\begin{proof}
For any fixed realization $v_1$, it holds that
\begin{align}
\E[X_\tau \ | \ v_1] = r_1(v_1,I)\cdot v_1 + (1-r_1(v_1,I))\cdot \E[X_{\tau} \ | \ v_1, \tau_1(v_1,I) = 0].
\label{eq:fixed_v1}
\end{align}
Note that $\E[X_{\tau} \ | \ v_1, \tau_1(v_1,I) = 0]$ is the expected payoff of the seller on an instance with $n-1$ offers. The stopping rule on this instance is induced by the functions $(\tau_2,\dots,\tau_n)$ in which $v_1$ is fixed, as well as the outcome $\tau_1(v_1,I) = 0$.  The induction hypothesis then tells us that
\begin{align}
\min_{\Prob_2,\dots,\Prob_n} \E[X_{\tau_{-1}} \ | \ v_1] \leq \min_{\Prob_2,\dots,\Prob_n} \E[X_{\tau_{-1}^*} \ | \ v_1].
\label{eq:induction_part1}
\end{align}
It follows we may extend \eqref{eq:induction_part1}, using \eqref{eq:fixed_v1}, to 
\begin{align}
\min_{\Prob_2,\dots,\Prob_n} \E[X_{\tau} \ | \ v_1] \leq \min_{\Prob_2,\dots,\Prob_n} \E[X_{(\tau_1,\tau_{-1}^*)} \ | \ v_1] .
\label{eq:induction_part1b}
\end{align}
Here we use the calculus fact $\min_x f(x) \leq \min_x g(x) \Rightarrow \min_x \alpha \cdot u + (1-\alpha)f(x) \leq \min_x \alpha\cdot u + (1-\alpha)g(x)$ if $\alpha, u$ do not depend on $x$. 
We will next show how \eqref{eq:lemma1} can be derived from \eqref{eq:induction_part1b}. Fix any $\epsilon > 0$ and let $(\M_2,\dots,\M_n) \in \prod_{i=2}^n \mathcal{P}(I_i)$ be such that 
\begin{align}
\E_{\M_2,\dots,\M_n} [X_{\tau} \ | \ v_1] \leq \min_{\Prob_2,\dots,\Prob_n} \E_{\Prob_2,\dots,\Prob_n}[X_{(\tau_1,\tau_{-1}^*)} \ | \ v_1] + \epsilon.
\label{eq:inf_close}
\end{align}
Now fix some arbitrary $\G_1 \in \mathcal{P}(I_1)$. 
Integrating out \eqref{eq:inf_close} with respect to $v_1$ (using the measure $\G_1$) gives
\begin{align}
\E_{\G_1,\M_2,\dots,\M_n} [X_{\tau}] & \leq \int_{v_1} \min_{\Prob_2,\dots,\Prob_n} \E_{\Prob_2,\dots,\Prob_n}[X_{(\tau_1,\tau_{-1}^*)} \ | \ v_1] + \epsilon \, \mathrm{d}\G_1(x) \nonumber \\
& \leq \min_{\Prob_2,\dots,\Prob_n} \int_{v_1} \E_{\Prob_2,\dots,\Prob_n}[X_{(\tau_1,\tau_{-1}^*)} \ | \ v_1] \, \mathrm{d}\G_1(x) + \epsilon \nonumber \\ 
& = \min_{\Prob_2,\dots,\Prob_n} \E_{\G_1,\Prob_2,\dots,\Prob_n}[X_{(\tau_1,\tau_{-1}^*)}] + \epsilon,
\label{eq:deriv1}
\end{align}
where the second inequality is standard (and where we implicitly use the independence of the random variables $X_1,\dots,X_n$). That is, we use that for any function $g$ (for which the following expression is well-defined) it holds that $\int_t \min_x g(x,t) dt \leq \min_x \int_t g(x,t) dt$.
Taking the minimum on the left with respect to $\M_2,\dots,\M_n$ in \eqref{eq:deriv1} then gives
\begin{align}
\min_{\Prob_2,\dots,\Prob_n} \E_{\G_1,\Prob_2,\dots,\Prob_n}[X_{\tau}] \leq \E_{\G_1,\M_2,\dots,\M_n} [X_{\tau}] \leq \min_{\Prob_2,\dots,\Prob_n} \E_{\G_1,\Prob_2,\dots,\Prob_n}[X_{(\tau_1,\tau_{-1}^*)}] + \epsilon
\label{eq:deriv2}
\end{align}
Next, taking the minimum with respect to $\G_1$ in \eqref{eq:deriv2}, which was chosen arbitrarily (and recalling that $\epsilon$ was also chosen arbitrarily) then yields \eqref{eq:lemma1}. 
\end{proof}

Knowing Lemma \ref{lem:induction_part1} we continue by showing in Lemma \ref{lem:induction_part2} that it must then also hold that $\tau_1 = \tau_1^*$. This gives a contradiction with \eqref{eq:contradiction} and then completes the proof of Theorem \ref{thm:opt_robust_thresholds}.

\begin{lemma}
Let $\tau_1$ be any stopping rule for the first offer, and let $\tau^*$ be the stopping rule described by the thresholds in \eqref{eq:opt_robust_tresholds}. Let $(\tau_1,\tau_{-1}^*) = (\tau_1,\tau^*_2\dots,\tau^*_{n})$. Then
\begin{align}
\min_{\Prob_1,\dots,\Prob_n} \E[X_{(\tau_1,\tau_{-1}^*)}] \leq \min_{\Prob_1,\dots,\Prob_n} \E[X_{\tau^*}] .
\label{eq:lemma2}
\end{align}
\label{lem:induction_part2}
\end{lemma}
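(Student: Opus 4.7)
The plan is to condition on the realization $v_1$ of the first offer and then apply the induction hypothesis to the $(n-1)$-offer subproblem formed by the remaining offers $2,\dots,n$. Because $\tau^*_{-1}$ uses the deterministic thresholds $T(2),\dots,T(n)$, which do not depend on $v_1$, pairwise independence of $X_1,\dots,X_n$ gives $\E[X_{\tau^*_{-1}}\mid v_1] = \E_{\Prob_2,\dots,\Prob_n}[X_{\tau^*_{-1}}]$. Arguing as in \eqref{eq:fixed_v1} and then taking expectation over $v_1$, the rule $(\tau_1,\tau^*_{-1})$ has expected payoff
\begin{align}
\E_{\Prob_1,\dots,\Prob_n}[X_{(\tau_1,\tau^*_{-1})}] = \int r_1(v_1,I)\, v_1\, \mathrm{d}\Prob_1(v_1) + \left(\int (1 - r_1(v_1,I))\, \mathrm{d}\Prob_1(v_1)\right) \E_{\Prob_2,\dots,\Prob_n}[X_{\tau^*_{-1}}]. \nonumber
\end{align}

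The second factor is a nonnegative constant with respect to $(\Prob_2,\dots,\Prob_n)$, so minimization over those distributions commutes with the multiplication. The induction hypothesis applied to the subproblem, together with the recursive definition of $T$, yields $\min_{\Prob_2,\dots,\Prob_n}\E_{\Prob_2,\dots,\Prob_n}[X_{\tau^*_{-1}}] = T(1)$ (in case the infimum is not attained, one would mimic the $\epsilon$-argument from the proof of Lemma~\ref{lem:induction_part1}). Substituting and merging the two integrals back together gives
\begin{align}
\min_{\Prob_2,\dots,\Prob_n}\E[X_{(\tau_1,\tau^*_{-1})}] = \int \bigl[r_1(v_1,I)\, v_1 + (1 - r_1(v_1,I))\, T(1)\bigr]\, \mathrm{d}\Prob_1(v_1). \nonumber
\end{align}

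The integrand is a convex combination of $v_1$ and $T(1)$, hence bounded above by $\max\{v_1,T(1)\}$ for every $r_1(v_1,I)\in[0,1]$. Integrating gives $\min_{\Prob_2,\dots,\Prob_n}\E[X_{(\tau_1,\tau^*_{-1})}] \leq \E_{\Prob_1}[\max\{X_1,T(1)\}]$, and minimizing over $\Prob_1$ together with the recursive definition of $T(0)$ produces the bound $\min_{\Prob_1,\dots,\Prob_n}\E[X_{(\tau_1,\tau^*_{-1})}] \leq T(0)$. For the specific choice $\tau_1 = \tau^*_1$, for which $r_1(v_1,I) = \mathbbm{1}[v_1\geq T(1)]$, the pointwise bound is attained with equality, so $\min_{\Prob_1,\dots,\Prob_n}\E[X_{\tau^*}] = T(0)$, and \eqref{eq:lemma2} follows. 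The main subtlety is justifying the commutation of the minimization over $(\Prob_i)_{i\geq 2}$ with the integration against $\Prob_1$; this works precisely because the coefficient $\int (1 - r_1)\, \mathrm{d}\Prob_1$ of the continuation value is a nonnegative constant in $(\Prob_i)_{i\geq 2}$, and the remaining inequality is simply the convex-combination bound.
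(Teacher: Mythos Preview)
Your proof is correct and follows essentially the same route as the paper's: condition on $v_1$, use independence and the fact that $\tau^*_{-1}$ does not depend on $v_1$ to factor out the continuation value, invoke the induction hypothesis to identify $\min_{\Prob_2,\dots,\Prob_n}\E[X_{\tau^*_{-1}}]=T(1)$, and then observe that the pointwise convex combination $r_1 v_1+(1-r_1)T(1)$ is maximized by $r_1=\mathbbm{1}[v_1\geq T(1)]$. Your write-up is in fact slightly more explicit than the paper's about the integration against $\Prob_1$ and about why the minimization over $(\Prob_i)_{i\geq 2}$ commutes with that integral.
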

\begin{proof}
We have (explanation is given afterwards)
\begin{align}
\min_{\Prob_1,\dots,\Prob_n} \E[X_{(\tau_1,\tau_{-1}^*)}] &= \min_{\Prob_1} \left( \min_{\Prob_2,\dots,\Prob_n} \E[X_{(\tau_1,\tau_{-1}^*)}] \right) \nonumber \\
&= \min_{\Prob_1} \left( \min_{\Prob_2,\dots,\Prob_n} r_1(v_1,I) \cdot v_1 + (1 - r_1(v_1,I))\cdot \E[X_{(\tau_1,\tau_{-1}^*)} \ | \ \tau_1(v_1,I) = 0] \right) \nonumber \\
&= \min_{\Prob_1} \left( \min_{\Prob_2,\dots,\Prob_n} r_1(v_1,I) \cdot v_1 + (1 - r_1(v_1,I))\cdot \E[X_{\tau_{-1}^*}] \right) \nonumber \\
&= \min_{\Prob_1} \left(r_1(v_1,I) \cdot v_1 + (1 - r_1(v_1,I))\cdot \min_{\Prob_2,\dots,\Prob_n} \E[X_{\tau_{-1}^*}] \right) \nonumber \\
&= \min_{\Prob_1}\, r_1(v_1,I) \cdot v_1 + (1 - r_1(v_1,I))\cdot T(1).  
\label{eq:lemma2_equalities}
\end{align}
The first and second equality are true by definition. The third equality is a bit more subtle. Here we use the fact that the functions $\tau_2^*,\dots,\tau_n^*$ are independent of $v_1$ by definition of \eqref{eq:opt_robust_tresholds}, so once $v_1$ is rejected, we are simply left with an instance on $n-1$ offers on which the seller applies the stopping rule $\tau_{-1}^* = (\tau_2^*,\dots,\tau_n^*)$. In the fourth equality, we use the fact that $v_1$ is independent of the distributions $\Prob_2,\dots,\Prob_n$, and so we can bring the minimum inside. The final equality applies the definition of $T(1)$.

It is then not hard to see that,  independent of what distribution $\Prob_1$ is, the expression 
$$
r_1(v_1,I) \cdot v_1 + (1 - r_1(v_1,I))\cdot T(1)
$$
in \eqref{eq:lemma2_equalities} is maximized by the setting
$$
r_1(v_1,I) = \left\{
\begin{array}{ll}
1 & \text{ if } v_1 \geq T(1)\\
0 & \text{ if } v_1 < T(1)
\end{array}
\right..
$$
This is precisely the function $\tau_1^*$, which in turn yields \eqref{eq:lemma2}.
\end{proof}

\section{Ambiguity set $\mathcal{P}_2(\mu,\sigma^2,L)$} \label{sec:mvs_twopoint}
In this section we will give the proofs of Theorems \ref{thm:mvs_twopoint_L<2mu} and \ref{thm:mvs_twopoint_L>2mu} for $\Prob \in \mathcal{P}_2(\mu,\sigma^2,L)$, the ambiguity set central is this paper that contains all two-point distributions with a given mean $\mu$, variance $\sigma^2$ and support contained in $[0,L]$. We shall start with the proof of Theorem \ref{thm:mvs_twopoint_L<2mu} and then build on it in order to prove Theorem~\ref{thm:mvs_twopoint_L>2mu}. But before doing so, we {discuss} in Section~\ref{sec:m} the ambiguity sets $\mathcal{P}(\mu)$ and $\mathcal{P}(\mu,\sigma^2)$ that
 give degenerate results, {see \cite{boshuizen1992moment}}, in the sense that nature chooses worst-case two-point distributions that make an expected payoff of $\mu$ best possible. These degenerate results provide motivation to consider two-point distributions, for a slightly richer information structure that includes the support, hence $\mathcal{P}_2(\mu,\sigma^2,L)$ as discussed in Section \ref{actualproof} and further.

\subsection{Two (degenerate) ambiguity sets}\label{sec:m}

First suppose that we are only given the (common) mean $\mu$ of every distribution, i.e., $\mathcal{P}_i = \mathcal{P}(\mu)$ for all $i = 1,\dots,n$. In this case, \eqref{eq:opt_robust_tresholds} yields $T(i) = \mu$ for every $i = 0,\dots,n-1$. 
To see this, notice that the seller can guarantee an (expected) payoff of $\mu$ by simply always selecting the first offer. On the other hand, if nature chooses identical deterministic distributions (i.e., point mass of $1$ on $\mu$) for all values, then clearly $\mu$ is also the best possible; {see also \cite[Theorem 3.4]{boshuizen1992moment}.} It follows that the expected robust payoff of the seller is $r^* = \mu$ in this case. Notice that the reasoning above also remains valid if, in addition to the mean $\mu$, we are also given an upper bound $L$ on the support of the unknown distributions.

Next, assume that in addition we also know the common variance $\sigma^2$ of the distributions, i.e, we consider the ambiguity set $\mathcal{P}(\mu,\sigma^2)$ of all nonnegative distributions with mean $\mu$ and variance $\sigma^2$. Can we do better than an expected robust payoff of $\mu$? {Unfortunately not, see also \cite[Proposition 3.8]{boshuizen1992moment}.} In this case, nature can no longer simply choose deterministic distributions (point mass of $1$ on $\mu$), but we will argue that it can mimic such distributions. {Also note that, if we assume the variance is \textit{at most} $\sigma^2$, which is common in some works (e.g., \cite{carrasco2018optimal}), we can right away argue that we cannot do better, as we can then again consider the setting where nature chooses deterministic distributions with point mass at $\mu$. On the other hand, all our results seem to carry over to the setting in which we assume that the variance is \emph{at least} $\sigma^2$, but we leave this to the interested reader.}
It follows directly from \eqref{eq:opt_robust_tresholds} that $T(n-1) = \mu$. We continue next with showing that also $T(n-2) = \mu$, which then also implies that $T(i) = \mu$ for every $i$. We will show that this result can be established already with discrete two-point distributions. A two-point distribution $\Prob \in \mathcal{P}_2(\mu,\sigma^2)$ can be written as
\begin{equation}\label{eq:twopoint_general}
\Prob(x) = \left\{\begin{array}{ll}
p & \ \ \ \text{for } x = a\\
1 - p & \ \ \ \text{for } x = b
\end{array}\right.
\end{equation}
with $a \leq \mu \leq b$. The mean and variance constraint imply that (after some calculus)
\begin{equation}
a = \mu - \sqrt{\frac{1-p}{p}}\sigma \ \ \ \text{ and } \ \ \ b = \mu + \sqrt{\frac{p}{1-p}}\sigma.
\label{eq:ab}
\end{equation}
Because of the constraint $a \geq 0$, it follows that 
$$
\frac{1}{1 + c^2} \leq p < 1,
$$
where $c={\mu}/{\sigma}$ is the inverse of the \emph{coefficient of variation}.
Using \eqref{eq:ab} and the first form of $T(i)$ in \eqref{eq:opt_robust_tresholds}, the threshold $T(n-2)$ is then given by 
\begin{align}
T(n-2) &= \min_{\Prob \in \mathcal{P}(\mu,\sigma^2)} \int_0^{\infty} \max\{T(n-1),x\}\, \mathrm{d} \Prob(x) \nonumber \\
&= \min_{1/(1+c^2) \leq p < 1} p \mu + (1-p) b \nonumber \\
&= \min_{1/(1+c^2) \leq p < 1} p\mu + (1-p)\left(\mu + \sqrt{\frac{p}{1-p}}\sigma\right) \nonumber \\
&=  \min_{1/(1+c^2) \leq p < 1} \mu + \sqrt{p(1-p)}\sigma . 
\label{eq:twopoint_problem}
\end{align}
In order to minimize the last expression, it is not hard to see that we should take $p \rightarrow 1$, resulting in $T(n-2) = \mu$. {Alternatively, if we did not have the constraint of the random variables being nonnegative, we could also have taken $p \rightarrow 0$ (which comes at the cost of $a$ becoming negative).} So what actually happens when $p \rightarrow 1$? In this case nature's two-point distribution assigns almost all its probability mass to $\mu - g(\epsilon)$, for some $g$ with $g(\epsilon) \rightarrow 0$ as $\epsilon \rightarrow 0$, and its remaining mass to the number $b(\epsilon)$ with $b(\epsilon) \rightarrow \infty$ as $\epsilon \rightarrow 0$ (to account for the variance constraint). In other words, it assigns most of its mass to a number just below the mean $\mu$, and its remaining mass to a very large number.

The construction above heavily exploits the fact that we can make the number $b$ in the support of the two-point distribution $\Prob$ arbitrarily large. This motivates the introduction of an upper bound on the support of the unknown distributions, i.e., the study of the ambiguity set $\mathcal{P}(\mu,\sigma^2,L)$.

\subsection{Proof of Theorem \ref{thm:mvs_twopoint_L<2mu}}\label{actualproof}
 As before we use that  a two-point distribution $\Prob \in \mathcal{P}_2(\mu,\sigma^2,L)$ can be written as
 \eqref{eq:twopoint_general} and \eqref{eq:ab} 
but now the constraints $a \geq 0$ and $b \leq L$ imply that $p$ is bounded by
\begin{align}
\frac{1}{1+c^2} \leq p \leq \frac{1}{1 + \frac{\sigma^2}{(L-\mu)^2}} \ \  \left( \leq 1 - \frac{1}{1 + c^2} \text{ as } L \leq 2\mu \right),
\label{eq:p_inequalities}
\end{align}
where, again, $c = \mu/\sigma$. The two extreme choices for $p$ yield the distributions $f^*$ and $g^*$ that were used in Section \ref{sec:results}. We remark that the interval of $p$'s satisfying the bounds in \eqref{eq:p_inequalities} is non-empty as for any distribution supported on $[0,L]$ with mean $\mu$ and variance $\sigma^2$, it holds that $\sigma^2 \leq \mu(L - \mu)$, because of \eqref{eq:non-empty_sigma}, which is equivalent to $1/(1+c^2) \leq 1/(1 + (\sigma/(L - \mu))^2)$. Furthermore, the assumption $L \leq 2\mu$ implies that $\sigma \leq \mu$, which yields $c  \geq 1$.

Again we have $T(n) = 0$ and $T(n-1) = \mu$. 
Using the same reasoning as in \eqref{eq:twopoint_problem}, we now find the following minimization problem for $p$:
\begin{eqnarray}
T(n-2) =  & \min & \mu + \sqrt{p(1-p)}\sigma  \nonumber  \\
 & \text{s.t.} & \frac{1}{1+c^2}\leq \ p \ < \frac{1}{1 + \frac{\sigma^2}{(L-\mu)^2}}. 
 \label{eq:T(n-2)}
\end{eqnarray}

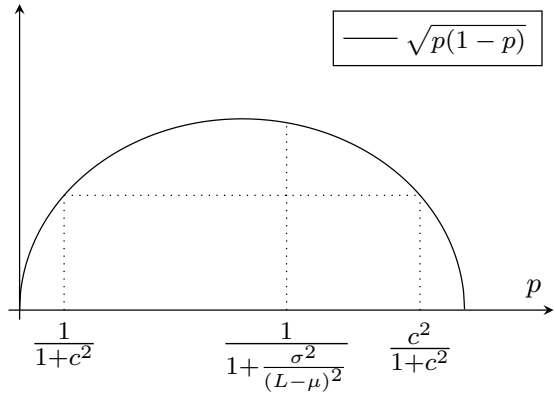
\begin{figure}[ht!]
\centering
\begin{tikzpicture}[scale=1.2]
    \begin{axis}[axis lines=middle,xlabel=\footnotesize{$p$},ylabel={},
        domain=0:1,xmin=-0.025,xmax=1.2,ymin=-0.025,ymax=0.8,xtick=\empty,ytick=\empty,
        smooth,samples=500,clip=false,width=3in,height=2in,
        legend style = {font = {\fontsize{8 pt}{12 pt}\selectfont}}]
        \draw[dotted] (0.1,0) -- (0.1,0.3);
        \draw (0.1,0) node[below]{$\frac{1}{1+c^2}$};
        \draw[dotted] (0.9,0) -- (0.9,0.3);
        \draw (0.9,0) node[below]{$\frac{c^2}{1+c^2}$};
        \draw[dotted] (0.1,0.3) -- (0.9,0.3);
        \draw[dotted] (0.6,0) -- (0.6,0.49);
        \draw (0.6,0) node[below]{$\frac{1}{1+\frac{\sigma^2}{(L-\mu)^2}}$};
        \addplot[color=black] {(x*(1-x))^(0.5)}
            node[pos=1,sloped,below left] {};
        \addlegendentry{\(\sqrt{p(1-p)}\)};
 \end{axis}         
\end{tikzpicture}
\caption{Sketch of where minimum of $h(p) = \sqrt{p(1-p)}$ is attained for $T(n-2)$. }
\label{fig:p-curve}
\end{figure}

By symmetry of the function $h(p) = \sqrt{p(1-p)}$ around $p = 1/2$, i.e., $h(1/(1+c^2)) = h(1 - 1/(1+c^2))$, it then follows that the minimum of the function $h(p)$ is attained at $p = 1/(1+c^2)$. Note that here we also use the fact that $1/(1+c^2) \leq 1/2$ as $c \geq 1$. An illustration of the situation is given in Figure \ref{fig:p-curve}.  We can now determine the optimal robust threshold $T(n-2)$  by taking $p = 1/(1+c^2)$ in \eqref{eq:T(n-2)}. 
We find
$$
\displaystyle T(n-2) = \left(1 + \frac{1}{1 + c^2}\right)\mu.
$$
The worst-case distribution $\Prob_{n-1}$ attaining the minimum is given by 
\begin{align}
\Prob_{n-1}(x) = \left\{\begin{array}{ll}
\displaystyle \frac{1}{1+c^2} & \ \ \ \text{for } \displaystyle  x = a = 0 \\
\displaystyle 1 - \frac{1}{1+c^2} & \ \ \ \text{for } \displaystyle  x = b = \left(1 + \frac{1}{c^2}\right)\mu 
\end{array}\right. .
\label{eq:worst_f}
\end{align}
The point $b = \left(1 + 1/c^2\right)\mu = \mu + \sigma^2/\mu$ plays a special role, as it is the smallest possible rightmost point on which a distribution $\Prob \in \mathcal{P}_2(\mu,\sigma^2,L)$ is supported. The following lemma will also play an important role (also later on).
\begin{lemma}
The sequence $(T(i))_{i=0,\dots,n}$ described by \eqref{eq:opt_robust_tresholds} is nonincreasing for any collection of ambiguity sets $\mathcal{P}(I_i)$. Furthermore, suppose that every ambiguity set contains the two-point distribution in \eqref{eq:worst_f}.
Then $\mu = T(n-1) \leq \dots \leq T(1) \leq T(0) \leq \mu + \sigma^2/\mu$.
\label{lem:bound}
\end{lemma}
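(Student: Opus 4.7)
The plan is to handle the two assertions separately, exploiting the recursion \eqref{eq:opt_robust_tresholds} in a direct way. For the monotonicity statement, the key observation is that $\max\{T(i+1),x\} \geq T(i+1)$ for every $x \geq 0$ by definition of the max. Hence for any distribution $\Prob_{i+1} \in \mathcal{P}(I_{i+1})$ we have $\E_{\Prob_{i+1}}[\max\{T(i+1), X_{i+1}\}] \geq T(i+1)$, and taking the infimum over $\mathcal{P}(I_{i+1})$ yields $T(i) \geq T(i+1)$. Iterating this inequality for $i = n-1, n-2, \dots, 0$ gives the required monotonicity, and in particular guarantees that $T(i) \geq T(n) = 0$ for every $i$, which will be reused below.

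For the upper bound, the plan is to use that the two-point distribution $f^*$ in \eqref{eq:worst_f} lies in every ambiguity set by assumption. Thus plugging $f^*$ into \eqref{eq:opt_robust_tresholds} gives
\begin{equation*}
T(i) \;\leq\; \E_{f^*}\bigl[\max\{T(i+1), X\}\bigr] \;=\; \tfrac{1}{1+c^2}\,\max\{T(i+1),0\} + \tfrac{c^2}{1+c^2}\,\max\{T(i+1), M\},
\end{equation*}
where $M := \mu + \sigma^2/\mu$ is the larger atom of $f^*$. I would then run a backwards induction on $i$ to show that $T(i) \leq M$ for all $i \leq n-1$. The base case $T(n-1) = \mu \leq M$ is immediate. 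For the inductive step, the hypothesis $T(i+1) \leq M$ together with the nonnegativity $T(i+1) \geq 0$ established above reduces the displayed bound to
\begin{equation*}
T(i) \;\leq\; \tfrac{1}{1+c^2}\,T(i+1) + \tfrac{c^2}{1+c^2}\,M \;\leq\; \tfrac{1}{1+c^2}\,M + \tfrac{c^2}{1+c^2}\,M \;=\; M,
\end{equation*}
closing the induction.

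Combining the monotonicity (the first part) with $T(n-1) = \mu$ and the upper bound $T(0) \leq M$ (the second part) directly yields the desired chain $\mu = T(n-1) \leq \dots \leq T(1) \leq T(0) \leq \mu + \sigma^2/\mu$. I do not anticipate a real technical obstacle: the only subtle point is that the inductive step requires both $T(i+1) \geq 0$ and $T(i+1) \leq M$, so the nonnegativity extracted from the monotonicity argument must be invoked explicitly to simplify $\max\{T(i+1),0\}$ to $T(i+1)$, and conversely the upper bound part relies on the very same nonnegativity. In particular, the two parts of the lemma are interlocked, and the proof should present the monotonicity first and then use it when collapsing the maxima in the recursion under $f^*$.
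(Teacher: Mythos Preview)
Your proof is correct. The monotonicity argument is identical to the paper's. For the upper bound $T(0)\leq \mu+\sigma^2/\mu$, however, you take a different route: you plug the two-point distribution $f^*$ into the recursion \eqref{eq:opt_robust_tresholds} and run a direct backward induction showing $T(i)\leq M$ at every step. The paper instead appeals to the game-theoretic interpretation established in Theorem~\ref{thm:opt_robust_thresholds}: since $T(0)=r^*$ is the seller's optimal robust payoff, and since nature may always play $f^*$ for every offer (whose support is contained in $[0,M]$), the seller can never collect more than $M$, so $T(0)\leq M$. Your argument is more self-contained---it works purely at the level of the recursion and does not invoke the optimality theorem---whereas the paper's argument is a one-line conceptual observation but implicitly leans on Theorem~\ref{thm:opt_robust_thresholds}. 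Both are valid; yours would be preferable if one wanted the lemma to stand independently of the optimality result.
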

\begin{proof}
The fact that the sequence $(T(i))_{i=0,\dots,n}$ is nonincreasing follows directly from the definition
$$
T(i) = \min_{\Prob_{i+1} \in \mathcal{P}(I_{i+1})} \int_0^{\infty} \max\{T(i+1),x\}\, \mathrm{d} \Prob_{i+1}(x),
$$
as $\max\{T(i+1),x\} \geq T(i+1)$.
In order to prove the second claim, it suffices to consider the case $i = 0$, because we have just shown that the sequence $(T(i))_{i=0,\dots,n}$ is nonincreasing. Suppose that nature chooses the two-point distribution \eqref{eq:worst_f} for every offer $i$. Then clearly the seller can never obtain an expected robust payoff of more than $\mu + \sigma^2/\mu$ as the value $v_i$ of every offer is at most $\mu + \sigma^2/\mu$ (i.e., this is the largest point in the support of $\Prob$).
\end{proof}

From Lemma \ref{lem:bound}, we therefore can infer that for every $\Prob$ of the form \eqref{eq:twopoint_general} and every $i = 0,\dots,n-1$, it holds that 
$$
a \leq \mu \leq T(i) \leq \mu + \sigma^2/\mu \leq b.
$$
This means the integral in \eqref{eq:opt_robust_tresholds} is equal to
\begin{align}
T(i) &= \displaystyle \min_{\Prob_{i+1} \in \mathcal{P}_2(\mu,\sigma^2,L)} \int_0^{\infty} \max\{T(i+1),x\}\, \mathrm{d} \Prob_{i+1}(x) \nonumber  \\
& = \displaystyle \min_{\frac{1}{1+c^2}\leq \ p \ < \frac{1}{1 + \frac{\sigma^2}{(L-\mu)^2}}} T(i+1)p + (\mu + \sqrt{p/(1-p)}\sigma)(1-p) \nonumber \\
& = \displaystyle \min_{\frac{1}{1+c^2}\leq \ p \ < \frac{1}{1 + \frac{\sigma^2}{(L-\mu)^2}}} \mu + (T(i+1) - \mu)p + \sqrt{p(1-p)}\sigma
\label{eq:twopoint_rec_L<2mu}
\end{align}
for $i = 0,\dots,n-1$. In order to prove that the minimum in \eqref{eq:twopoint_rec_L<2mu} remains to be attained at $p = 1/(1+c^2)$, we will use the following proposition.

\begin{proposition}
Let $K_0, K_1 \geq 0$ and $p \in (0,1)$. Assume that $0 < y \leq 1/2$ and $z$ is such that $y \leq z \leq 1 - y$. Let 
$$
u(p) = K_0p + \sqrt{p(1-p)}K_1.
$$
Then $u(y) \leq \min\{u(z),u(1 - y)\}$ and the minimum of $u$ in the range $(y,1-y)$ is attained at $y$.
\label{prop:p}
\end{proposition}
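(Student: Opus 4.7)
The plan is to exploit the concavity of $u$ on $(0,1)$ together with the symmetry of the square-root term.

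First I would observe that $u(p) = K_0 p + K_1 \sqrt{p(1-p)}$ is concave on $(0,1)$: the map $p \mapsto K_0 p$ is linear, while $p \mapsto p(1-p)$ is concave and nonnegative on $(0,1)$, so composing with the nondecreasing concave map $t \mapsto \sqrt{t}$ yields a concave function. Since $K_1 \geq 0$, the sum $u$ is concave.

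Next, a standard fact about concave functions is that on a closed interval they attain their minimum at one (or both) of the endpoints. Applied to $[y, 1-y] \subseteq (0,1)$, this tells us that
\begin{equation*}
\min_{z \in [y, 1-y]} u(z) = \min\{u(y),\, u(1-y)\}.
\end{equation*}
It thus remains to compare the two endpoint values. Using that $\sqrt{p(1-p)}$ is symmetric around $p = 1/2$, i.e. $\sqrt{y(1-y)} = \sqrt{(1-y)y}$, we get
\begin{equation*}
u(1-y) - u(y) = K_0(1-y) - K_0 y = K_0(1 - 2y) \geq 0,
\end{equation*}
since $K_0 \geq 0$ and $y \leq 1/2$ by assumption. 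Hence $u(y) \leq u(1-y)$, and combined with the previous display this gives $u(y) \leq u(z)$ for every $z \in [y, 1-y]$, which is precisely the claim $u(y) \leq \min\{u(z), u(1-y)\}$.

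There is no real obstacle here; the only point requiring a line of care is the concavity of $\sqrt{p(1-p)}$, which follows from the composition rule for concave functions (nondecreasing concave $\circ$ concave). The symmetry argument then cleanly disposes of the remaining endpoint comparison.
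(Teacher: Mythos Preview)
Your proof is correct. The endpoint comparison via the symmetry of $\sqrt{p(1-p)}$ is exactly what the paper does as well. Where you differ is in the justification that the minimum over $[y,1-y]$ must be achieved at an endpoint: the paper argues this by noting that $u$ has positive derivative on $(0,1/2)$ and at most one critical point in $(0,1)$, and then combines this with $u(y)\leq u(1-y)$ to rule out any interior point dipping below $u(y)$. Your route via concavity of $u$ (linear plus a nonnegative multiple of the concave $\sqrt{p(1-p)}$) is more direct and transparent, since ``concave functions attain their minimum on a compact interval at an endpoint'' immediately reduces the problem to the same two-endpoint comparison. Both arguments are short; yours has the advantage of not touching derivatives at all and making the structural reason (concavity) explicit.
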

\begin{proof}
We write $u(p) = u_1(p) + u_2(p)$ with $u_1(p) = K_0p$ and $u_2(p) = \sqrt{p(1-p)}K_1$. Note that $u_2(y) = u_2(1-y)$ by symmetry of $u_2$ around $p = 1/2$. Furthermore, $u_1(y) \leq u_2(1-y)$ as $y \leq 1- y$ by assumption of $0 \leq y \leq 1/2$. It follows that $u(y) \leq u(1-y)$.

Secondly, the function $u(p)$ has at most one point in the interval $(0,1)$ where its derivative vanishes for any choice of $K_0$ and $K_1$. In combination with the fact that $u(p)$ has a positive derivative for any $0 < p < 1/2$, and $u(y) \leq u(1 - y)$, it then follows that for every point $y \leq z \leq 1 - y$ satisfies $u(y) \leq u(z)$ and that the minimum of $u$ in $(y,1-y)$ must be attained at $y$. 
\end{proof}

The claim that $p = 1/(1+c^2)$ is where the minimum of \eqref{eq:twopoint_rec_L<2mu} is attained now follows from applying Proposition \ref{prop:p} with $y = 1/(1+c^2) \leq 1/2$, $K_0 = T(i+1) - \mu \geq 0$ and $K_1 = \sigma \geq 0$. We emphasize that this holds for any $i$. Plugging $p = 1/(1+c^2)$ into the expression \eqref{eq:twopoint_rec_L<2mu} gives
\begin{align}
T(i) = \mu + (T(i+1) - \mu)\frac{1}{1+c^2} + \frac{c}{1+c^2}\sigma = \mu + \frac{1}{1+c^2}T(i+1).
\label{eq:recursion_L<2mu}
\end{align}

Here, and in subsequent sections, we will repeatedly make use of backwards recursions of this type. We now give a general description and solution of such recursions. The proof of Proposition \ref{prop:rec} is left to the reader.
	
\begin{proposition}
\label{prop:rec}
Let $\alpha, \beta, \gamma_0 \in \R$, and consider the backwards recursion
$
t(i) = \alpha + \beta \cdot t(i+1)
$
for $i = 0,\dots,m$ with $t(m) = \gamma_0$. The solution to this recursion is given by 
\begin{align}
t(i) = \left(\alpha \sum_{k=0}^{m - 1 - i} \beta^k\right) + \gamma_0 \beta^{m - i} 
\label{eq:rec_general_sol}
\end{align}
for $i = 0,\dots,m$. {We use the convention that $\sum_{k=0}^{-1} \beta^k = 0$.} 
\end{proposition}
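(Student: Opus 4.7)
The plan is to prove the closed-form expression by backwards induction on $i$, starting from $i = m$ and working down to $i = 0$. This is the natural direction because the recursion itself is backwards, so induction running in the opposite direction of the recursion's ``arrow'' would require rewriting the relation, whereas backwards induction lets us plug the induction hypothesis straight into the right-hand side.

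For the base case $i = m$, I would verify that the formula \eqref{eq:rec_general_sol} agrees with the terminal condition $t(m) = \gamma_0$. Plugging in $i = m$ gives $\alpha \sum_{k=0}^{-1} \beta^k + \gamma_0 \beta^0$, which using the stated empty-sum convention $\sum_{k=0}^{-1}\beta^k = 0$ evaluates to $\gamma_0$, matching the terminal condition.

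For the inductive step, assume \eqref{eq:rec_general_sol} holds at index $i+1$, i.e.,
\begin{equation*}
t(i+1) = \alpha \sum_{k=0}^{m-2-i} \beta^k + \gamma_0 \beta^{m-1-i}.
\end{equation*}
Substituting into $t(i) = \alpha + \beta \cdot t(i+1)$ gives
\begin{equation*}
t(i) = \alpha + \beta \left( \alpha \sum_{k=0}^{m-2-i}\beta^k + \gamma_0 \beta^{m-1-i} \right) = \alpha + \alpha \sum_{k=1}^{m-1-i}\beta^k + \gamma_0 \beta^{m-i},
\end{equation*}
where the index shift $k \mapsto k+1$ is applied inside the sum. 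Reabsorbing the leading $\alpha$ as the $k = 0$ term of the sum yields $\alpha \sum_{k=0}^{m-1-i}\beta^k + \gamma_0 \beta^{m-i}$, which is exactly \eqref{eq:rec_general_sol} at index $i$, completing the induction.

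There is no real obstacle here: the recursion is linear and the closed form is a standard telescoping sum. The only minor care needed is to track the convention $\sum_{k=0}^{-1}\beta^k = 0$ so that the formula degenerates correctly at the boundary $i = m$, and to ensure the index-shift in the inductive step is done cleanly so that the constant term $\alpha$ gets absorbed as the $k=0$ contribution. No case distinction on $\beta$ is necessary since we do not need to use the closed-form expression $(1 - \beta^{m-i})/(1-\beta)$ for the geometric sum; leaving it as $\sum_{k=0}^{m-1-i}\beta^k$ avoids having to separately handle $\beta = 1$.
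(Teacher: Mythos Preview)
Your proof is correct; the paper itself leaves this proposition to the reader, and your backwards induction is exactly the standard verification one would expect.
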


Applying Proposition \ref{prop:rec} with $\alpha = \gamma_0 = \mu$ and $\beta = 1/1(1+c^2)$ then yields
$$
T(i) = \mu \sum_{k=0}^{n-1-i} \left(\frac{1}{1+c^2}\right)^k = \mu \sum_{k = 0}^{n-1-i} \left( \frac{\sigma^2}{\sigma^2 + \mu^2}\right)^k.
$$
Using the identity $\sum_{k=0}^j x^j = (1 - x^{j+1})/(1-x)$, in combination with some calculus, then yields the identity as stated in Theorem \ref{thm:mvs_twopoint_L<2mu}. This completes the proof of Theorem \ref{thm:mvs_twopoint_L<2mu}.

\subsection{Proof of Theorem \ref{thm:mvs_twopoint_L>2mu}}
\label{sec:mvs_l>2mu}
The setup for the proof of Theorem \ref{thm:mvs_twopoint_L>2mu} is similar to that of the proof of Theorem \ref{thm:mvs_twopoint_L<2mu}. We again have $T(n) = 0$ and $T(n-1) = \mu$. The difference occurs when we consider the minimization problem 
\begin{eqnarray}
T(i) =  & \displaystyle \min&  \mu + (T(i+1) - \mu)p + \sqrt{p(1-p)}\sigma.  \nonumber  \\
 & \text{s.t.} & \frac{1}{1+c^2}\leq \ p \ < \frac{1}{1 + \frac{\sigma^2}{(L-\mu)^2}}
 \label{eq:T(n-2)_L>2mu}
\end{eqnarray}
for $i = 0,\dots,n-2$. 

Because of the assumption $L > 2\mu$, it already holds for $i = n-2$ that the minimum is not attained at the leftmost point $p = 1/(1+c^2)$, but at the rightmost point $p = 1/(1 + \sigma^2/(L-\mu)^2)$ instead. {For certain parameter combinations the minimum is attained at both points simultaneously.} To see this, one has to distinguish two cases: either $c \geq 1$, in which case the condition $L > 2\mu$ yields the claim, 
or $c < 1$, in which case $1/(1+c^2) \geq 1/2$ and so $h(1/(1+c^2)) \geq h(1/(1 + \sigma^2/(L-\mu)^2))$ as $h(p) = \sqrt{p(1-p)}$ is decreasing on $[1/2,1]$. These cases are illustrated in Figure \ref{fig:p-curves}.
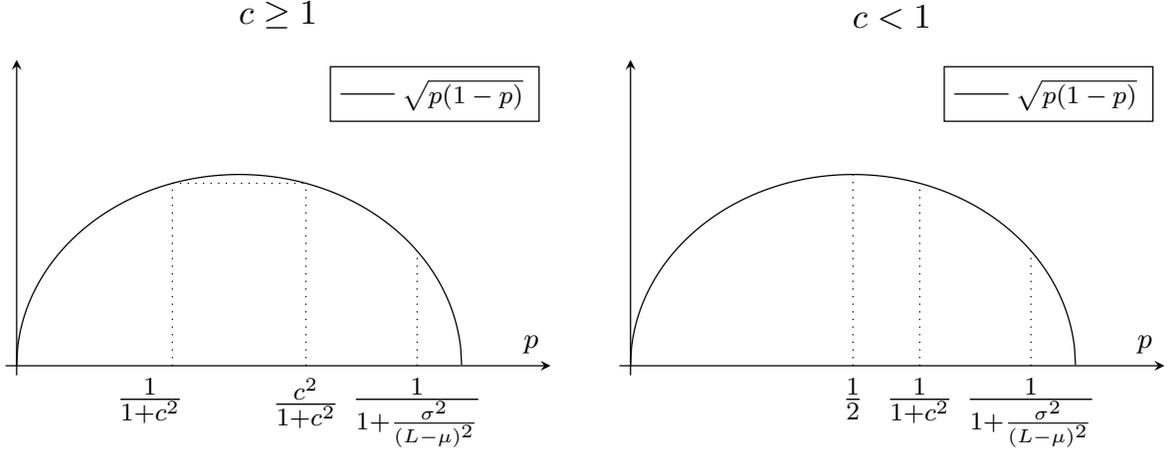
\begin{figure}[ht!]
\centering
\begin{tikzpicture}[scale=1.2]
    \begin{axis}[title={$c \geq 1$},axis lines=middle,xlabel=\footnotesize{$p$},ylabel={},
        domain=0:1,xmin=-0.025,xmax=1.2,ymin=-0.025,ymax=0.8,xtick=\empty,ytick=\empty,
        smooth,samples=500,clip=false,width=3in,height=2in,
        legend style = {font = {\fontsize{8 pt}{12 pt}\selectfont}}]
        \draw[dotted] (0.35,0) -- (0.35,0.477);
        \draw (0.3,0) node[below]{$\frac{1}{1+c^2}$};
        \draw[dotted] (0.65,0) -- (0.65,0.477);
        \draw (0.65,0) node[below]{$\frac{c^2}{1+c^2}$};
        \draw[dotted] (0.35,0.477) -- (0.65,0.477);
        \draw[dotted] (0.9,0) -- (0.9,0.3);
        \draw (0.9,0) node[below]{$\frac{1}{1+\frac{\sigma^2}{(L-\mu)^2}}$};
        \addplot[color=black] {(x*(1-x))^(0.5)}
            node[pos=1,sloped,below left] {};
        \addlegendentry{\(\sqrt{p(1-p)}\)};
 \end{axis}         
\end{tikzpicture} \quad \quad
\begin{tikzpicture}[scale=1.2]
    \begin{axis}[title={$c <  1$},axis lines=middle,xlabel=\footnotesize{$p$},ylabel={},
        domain=0:1,xmin=-0.025,xmax=1.2,ymin=-0.025,ymax=0.8,xtick=\empty,ytick=\empty,
        smooth,samples=500,clip=false,width=3in,height=2in,
        legend style = {font = {\fontsize{8 pt}{12 pt}\selectfont}}]
        \draw[dotted] (0.5,0) -- (0.5,0.5);
        \draw (0.5,0) node[below]{$\frac{1}{2}$};
        \draw[dotted] (0.65,0) -- (0.65,0.477);
        \draw (0.65,0) node[below]{$\frac{1}{1+c^2}$};
        \draw[dotted] (0.9,0) -- (0.9,0.3);
        \draw (0.9,0) node[below]{$\frac{1}{1+\frac{\sigma^2}{(L-\mu)^2}}$};
        \addplot[color=black] {(x*(1-x))^(0.5)}
            node[pos=1,sloped,below left] {};
        \addlegendentry{\(\sqrt{p(1-p)}\)};
 \end{axis}         
\end{tikzpicture}
\caption{Sketch of where minimum of $h(p) = \sqrt{p(1-p)}$ is attained for $T(n-2)$ depending on whether $c \geq 1$ (left) or $c < 1$ (right). }
\label{fig:p-curves}
\end{figure}

The main question is now as follows: What happens for $T(i)$ for $i = 0,\dots,n-3$? Is the minimum in \eqref{eq:T(n-2)_L>2mu} still attained at the rightmost point $p = 1/(1 + \sigma^2/(L-\mu)^2)$ or will the minimum be attained at the leftmost point $p = 1/(1+c^2)$? The reason that we may focus on the extreme points of $p$ is again because at every step of the backward recursion we are concerned with a function of the form $u(p)$ as in Lemma \ref{prop:p}, which attains its minimum over an interval $[c,d] \subseteq [0,1]$ at the boundary point $c$ or $d$. It does not seem easy to give an explicit answer to this question in terms of the parameters $\mu$, $\sigma^2$ and $L$ right away. Nevertheless, we are able to describe the behavior of the minimum over the course of the backward recursion. We emphasize that this behavior relies on the fact that $L$ is a constant independent of $n$. 

We will show that there exists an $n_0 = n_0(\mu,\sigma^2,L) \in \{3,\dots,n\}$ such that:
\begin{itemize}
\item For $n-n_0 < i \leq n-2$, the minimum in  \eqref{eq:T(n-2)_L>2mu} will be attained at the rightmost point $p = 1/(1 + \sigma^2/(L-\mu)^2)$.
\item For $1 \leq i \leq n - n_0$, the minimum will be attained at the leftmost point $p = 1/(1+c^2)$.
\end{itemize}
Note that this means that if $n_0 = n$, then in fact the minimum is always attained at the rightmost point $p$. In other words, if $n$ is large enough, there is a critical point $n - n_0$ at which the minimum switches from the rightmost point to the leftmost point, and after this, the leftmost point will continue to be the minimum for the remaining steps of the backwards recursion. We continue with the proof of this claim.

 First of all, let $n_0$ be such that $n - n_0$ is the highest index $i$ for which the minimum in \eqref{eq:T(n-2)_L>2mu} is attained at the leftmost point $p = 1/(1+c^2)$ (or $n_0 = n$ otherwise). This means that for $n - n_0 + 1 \leq i \leq n-2$ the minimum is attained at the rightmost point $p = 1/(1 + \sigma^2/(L-\mu)^2)$. Based on \eqref{eq:T(n-2)_L>2mu}, the thresholds for these indices $i$ are then given by the solution to the recursion
\begin{eqnarray}
T(i) & = &\mu + (T(i+1) - \mu)p + \sqrt{p(1-p)}\sigma \nonumber \\
 & = &\mu + (T(i+1) - \mu)\frac{1}{1 + \frac{\sigma^2}{(L-\mu)^2}} + \sqrt{\frac{1}{1 + \frac{\sigma^2}{(L-\mu)^2}}\left(1-\frac{1}{1 + \frac{\sigma^2}{(L-\mu)^2}}\right)} \, \sigma \nonumber  \\
 & = & \frac{L\sigma^2}{(L-\mu)^2 + \sigma^2} + \frac{(L-\mu)^2}{(L-\mu)^2 + \sigma^2}\, T(i+1).
 \label{eq:recursion_L>2mu}
\end{eqnarray}
Using Proposition \ref{prop:rec} with
$$
\alpha = \frac{L\sigma^2}{(L-\mu)^2 + \sigma^2}, \ \ \beta = \frac{(L-\mu)^2}{(L-\mu)^2 + \sigma^2}, \ \ \text{ and } \ \ \gamma_0 = \mu
$$
then yields for $n - n_0 + 1 \leq i \leq n-1$ that
\begin{align}
T(i) &= \frac{L\sigma^2}{(L-\mu)^2 + \sigma^2} \left[\sum_{k=0}^{n - 2 - i} \left(\frac{(L-\mu)^2}{(L-\mu)^2 + \sigma^2} \right)^k\right] + \mu \left( \frac{(L-\mu)^2}{(L-\mu)^2 + \sigma^2}\right)^{n-1-i} \nonumber \\
& = L\left[1 - \left(1-\frac{\mu}{L}\right)\left[\frac{(L-\mu)^2}{(L-\mu)^2 + \sigma^2}\right]^{n-1-i} \right].
\label{eq:T(i)_last}
\end{align}

\noindent Secondly, recall that $i = n - n_0$ is the first time the minimum is attained at the leftmost point $p = 1/(1+c^2)$.  We claim that then in all the remaining steps $i = 1,\dots,n-n_0-1$ of the backwards recursion, the minimum will also be attained at the leftmost point $p = 1/(1+c^2)$. 
To see this, note that the recursion, where $T(i)$ should be seen as a function of $p$, for $k < n - n_0$ reads 
\begin{align}
T(i) &= \mu + (T(i+1) - \mu)p + \sqrt{p(1-p)}\sigma \nonumber \\
&= (T(i+1) - T(i+2))p + \big[\mu + (T(i+2) - \mu)p + \sqrt{p(1-p)}\sigma\big] \nonumber \\
&=: g_1^i(p) + g_2^i(p)
\end{align}
with $g_1^i(p) = (T(i+1) - T(i+2))p$ and $g_2^i(p) = \mu + (T(i+2) - \mu)p + \sqrt{p(1-p)}\sigma$. Now suppose that the minimum of $T(i)$ is attained at $p = 1/(1+c^2)$ for $i = \ell + 1,\dots,n-n_0$. We claim that the same holds for $i = \ell$. To see this, note that $g_2^i(p)$ is precisely the expression we minimized for $i = \ell + 1$, so it follows that  $g_2^i(1/(1+c^2)) \leq g_2^i(1/(1 + \sigma^2/(L-\mu)^2))$ as we assume the minimum is attained at $p = 1/(1+c^2)$ for $i = \ell + 1$. Furthermore, we know from Lemma \ref{lem:bound} that the optimal robust thresholds in \eqref{eq:opt_robust_tresholds} form a non-increasing sequence. This means that $g_1^i(p)$ is an non-decreasing function in $p$, and therefore $g_1^i(1/(1+c^2)) \leq g_1^i(1/(1 + \sigma^2/(L-\mu)^2))$. Hence the function $T(\ell)$, as a function of $p$, is minimized at $p = 1/(1+c^2)$ as well. 

A sketch of the behavior of the minimum is given in Figure \ref{fig:min-switch}. Roughly speaking, what happens is that the top of the function $T(i) = g_1^i(p) + g_2^i(p)$ moves to the right as $i$ decreases (i.e., as we go further into the backward recursion). This means that as soon as the minimum is attained on the left it will be attained on the left in the future as well.

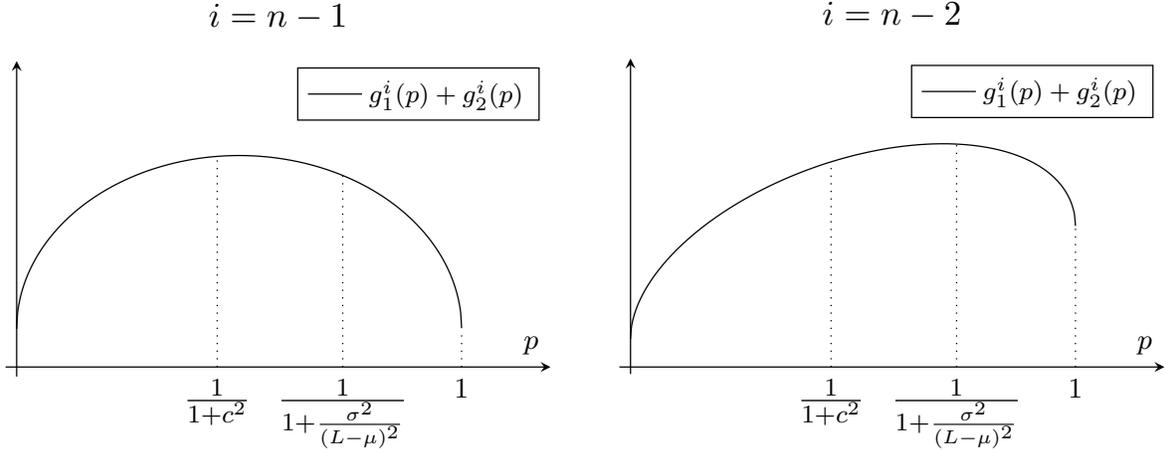
\begin{figure}[ht!]
\centering
\begin{tikzpicture}[scale=1.2]
    \begin{axis}[title={$i = n - 1$},axis lines=middle,xlabel=\footnotesize{$p$},ylabel={},
        domain=0:1,xmin=-0.025,xmax=1.2,ymin=-0.025,ymax=0.8,xtick=\empty,ytick=\empty,
        smooth,samples=500,clip=false,width=3in,height=2in,
        legend style = {font = {\fontsize{8 pt}{12 pt}\selectfont}}]
        \draw[dotted] (\pleft,0) -- (\pleft,\Tpleft-0.9);
        \draw (\pleft,0) node[below]{$\frac{1}{1+c^2}$};
        \draw[dotted] (1,0) -- (1,0.1);
        \draw (1,0) node[below]{\footnotesize{$1$}};
        \draw[dotted] (\pright,0) -- (\pright,\Tpright-0.9);
        \draw (\pright,0) node[below]{$\frac{1}{1+\frac{\sigma^2}{(L-\mu)^2}}$};
        \addplot[color=black] {\m-0.9 + \s*(x*(1-x))^(0.5)}
            node[pos=1,sloped,below left] {};
        \addlegendentry{\(g_1^i(p) + g_2^i(p)\)};
 \end{axis}         
\end{tikzpicture} \quad \quad
\begin{tikzpicture}[scale=1.2]
    \begin{axis}[title={$i = n - 2$},axis lines=middle,xlabel=\footnotesize{$p$},ylabel={},
        domain=0:1,xmin=-0.025,xmax=1.2,ymin=-0.025,ymax=1.1,xtick=\empty,ytick=\empty,
        smooth,samples=500,clip=false,width=3in,height=2in,
        legend style = {font = {\fontsize{8 pt}{12 pt}\selectfont}}]
        \draw[dotted] (\pleft,0) -- (\pleft,\Ttwopleft-0.9);
        \draw (\pleft,0) node[below]{$\frac{1}{1+c^2}$};
        \draw[dotted] (1,0) -- (1,\Tpright-0.9);
        \draw (1,0) node[below]{\footnotesize{$1$}};
        \draw[dotted] (\pright,0) -- (\pright,\Ttwopright-0.9);
        \draw (\pright,0) node[below]{$\frac{1}{1+\frac{\sigma^2}{(L-\mu)^2}}$};
        \addplot[color=black] {\m-0.9 + (\Tpright - \m)*x + \s*(x*(1-x))^(0.5)}
            node[pos=1,sloped,below left] {};
        \addlegendentry{\(g_1^i(p) + g_2^i(p)\)};
 \end{axis}         
\end{tikzpicture}
\caption{Behavior of minimum of $T(i) = g_1^i(p) + g_2^i(p)$ for $i = n-2$ (left) and $i = n-3$ (right). For $i = n-1$ the minimum is attained at the rightmost feasible $p = 1/(1 + \sigma^2/(L-\mu)^2)$, but for $i = n-2$ it is attained at the leftmost feasible $p = 1/(1+c^2)$. These plots are made with $\mu = 1$, $\sigma^2 = 0.82\mu$ and $L = 2.5\mu$.}
\label{fig:min-switch}
\end{figure}

Finally, we can again solve the recursion using Proposition \ref{prop:rec} and obtain 
\begin{align}
T(i) &= \left(\sum_{k=0}^{n - n_0 - i} \left(\frac{\sigma^2}{\mu^2+\sigma^2}\right)^i \right)\mu + \left(\frac{\sigma^2}{\mu^2+\sigma^2}\right)^{n - n_0 + 1 - i}T(n - n_0 + 1) \nonumber \\
&= \mu + \frac{\sigma^2}{\mu}\left[1 - \left(\frac{\sigma^2}{\mu^2 + \sigma^2}\right)^{n-n_0-i} \right] + T(n-n_0+1)\left(\frac{\sigma^2}{\sigma^2 + \mu^2}\right)^{n-n_0 +1 - i}
\label{eq:T(i)_first}
\end{align}
for $i = 1,\dots, n-n_0$, where $T(n-n_0+1)$ is given by \eqref{eq:T(i)_last}.\\

\noindent In order to understand what happens when $n \rightarrow \infty$, we carry out a little thought experiment. Suppose that $n_0 = 1$, even if $n \rightarrow \infty$, meaning that the minimum is attained at $p = 1/(1+\sigma^2/(L-\mu)^2)$ for every step of the backwards recursion. This then means that $T(0) \rightarrow L$ as $n \rightarrow \infty$, as can be seen from \eqref{eq:T(i)_last}. However, we know that nature can always guarantee that the seller gets at most $\mu + \sigma^2/\mu$ by Lemma \ref{lem:bound}. Note that in fact $\mu + \sigma^2/\mu \leq L$, as this is equivalent to
$
\sigma^2 \leq \mu(L - \mu)
$
which is true for any distribution with mean $\mu$, variance $\sigma^2$ and support range in $[0,L]$. This means that, when the strict inequality $\sigma^2 < \mu(L-\mu)$ holds, it must be the case that at some point $n - n_0$ in the recursion the minimum will be attained at the leftmost point $p = 1/(1 + c^2)$ for the first time. This means we will always switch to the thresholds as in \eqref{eq:T(i)_first}. These have the property that $T(0) \rightarrow \mu + \sigma^2/\mu$ as $n \rightarrow \infty$, regardless of $n_0 = n_0(\mu,\sigma^2,L)$, which itself is independent of $n$.

In order to compute the value of $n_0$ exactly, one has to compare the values of $T(i)$ obtained by either applying the recursion in \eqref{eq:recursion_L<2mu} or \eqref{eq:recursion_L>2mu} in every step $i$. The minimum of these two values determines the worst-case distribution (and  the robust threshold).  We know, by the analysis above, that in the first steps the minimum will be given by \eqref{eq:recursion_L<2mu}. The first step $i$ for which the minimum is given by \eqref{eq:recursion_L>2mu}, determines the value of $n_0$.

\subsection{An asymptotic result}
\label{sec:asymptotic_twopoint}
Both in the case $L \leq 2\mu$ (Theorem \ref{thm:mvs_twopoint_L<2mu}) and in the case $L > 2\mu$ (Theorem \ref{thm:mvs_twopoint_L>2mu}) it follows that as $n \rightarrow \infty$, the expected robust payoff of the optimal stopping strategy of the gambler approaches $\mu + \sigma^2/\mu$. 
We argue that we can also obtain this payoff approximately (as $n$ grows large) by means of simply setting a threshold of 
\begin{equation}
T = \mu + \frac{\sigma^2}{\mu}
\end{equation}
at every step of the game. That is, if one is only interested in obtaining the optimal payoff when $n$ is large, and not so much about identifying the optimal strategy at every individual step of the game, then this can be done in a fairly simple manner. A slightly weaker version of this result for the more general set $\mathcal{P}(\mu,\sigma^2,L)$ is given in Section \ref{sec:asymptotic}.
\begin{theorem}
Consider the stopping rule $\tau$ that selects a value $v_i$ if $v_i \geq T = \mu + \sigma^2/\mu$. Then 
$$
\lim_{n \rightarrow \infty} r^*(\tau) = T =  \mu + \frac{\sigma^2}{\mu}
$$
in case nature is allowed to choose adversarial two-point distributions from $\mathcal{P}_2(\mu,\sigma^2,L)$.
\label{thm:asymptotic_twopoint}
\end{theorem}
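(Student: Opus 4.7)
The plan is to exploit the good/bad structure of two-point distributions in $\mathcal{P}_2(\mu,\sigma^2,L)$ relative to the static threshold $T = \mu + \sigma^2/\mu$. First I would record the key geometric fact: for any $\Prob \in \mathcal{P}_2(\mu,\sigma^2,L)$ of the form \eqref{eq:twopoint_general} with parameter $p$, the formula \eqref{eq:ab} gives $a \leq \mu < T$ and $b = \mu + \sigma\sqrt{p/(1-p)}$. Moreover $b \geq T$ if and only if $p \geq 1/(1+c^2)$, and the feasibility constraints \eqref{eq:p_inequalities} force precisely $1/(1+c^2) \leq p \leq 1/(1+\sigma^2/(L-\mu)^2) =: p_{\max} < 1$. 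Consequently, under the static threshold rule $\tau$, the low atom $a$ is always rejected and the high atom $b$ is always accepted.

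Given distributions $\Prob_1,\dots,\Prob_n \in \mathcal{P}_2(\mu,\sigma^2,L)$ with parameters $p_i$ and high atoms $b_i \geq T$, the probability that offer $i$ is the first accepted equals $p_1 \cdots p_{i-1}(1-p_i)$, and the value selected in that case is $b_i$. Hence
\begin{align*}
\E_{\Prob_1,\dots,\Prob_n}[X_\tau] = \sum_{i=1}^n p_1\cdots p_{i-1}(1-p_i)\, b_i \geq T\sum_{i=1}^n p_1\cdots p_{i-1}(1-p_i) = T\left(1 - \prod_{i=1}^n p_i\right).
\end{align*}
Since each $p_i \leq p_{\max} < 1$, this yields the uniform lower bound $\E[X_\tau] \geq T(1 - p_{\max}^n)$, which is independent of nature's specific choice and therefore also bounds $r^*(\tau)$ from below.

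For the matching upper bound I would let nature choose the worst-case distribution $f^*$ from \eqref{eq:worst_f} (with $a=0$, $b=T$, $p=1/(1+c^2)$) for every offer; note $f^* \in \mathcal{P}_2(\mu,\sigma^2,L)$ by the non-emptiness condition \eqref{eq:non-empty_sigma}, which is equivalent to $T \leq L$. Under this common choice the formula above gives $\E[X_\tau] = T(1 - p^n)$ with $p = 1/(1+c^2) < 1$, hence $r^*(\tau) \leq T(1 - p^n) \leq T$. Combining with the previous paragraph yields $T(1 - p_{\max}^n) \leq r^*(\tau) \leq T$, and letting $n \to \infty$ gives $\lim_{n\to\infty} r^*(\tau) = T$, as claimed.

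The argument is essentially routine once one notices the crucial alignment: $T$ is set exactly at the smallest possible value of the high atom $b$ across all distributions in the ambiguity set. The only real care needed is in identifying this alignment and verifying that $p_{\max}$ is bounded away from $1$ uniformly, which is immediate from the support constraint $b \leq L$; no further obstacle is anticipated.
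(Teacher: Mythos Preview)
Your proof is correct and follows essentially the same approach as the paper: both hinge on the observation that every two-point distribution in $\mathcal{P}_2(\mu,\sigma^2,L)$ has its high atom $b \geq T$ and assigns probability $1-p \geq 1 - p_{\max} > 0$ to it, so the threshold $T$ is crossed with probability tending to $1$. Your version is in fact more explicit than the paper's---you compute the expected payoff exactly and sandwich $r^*(\tau)$ between $T(1-p_{\max}^n)$ and $T(1-(1/(1+c^2))^n)$, whereas the paper argues the lower bound only qualitatively (``with high probability'') and leaves the upper bound implicit via Lemma~\ref{lem:bound}.
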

\begin{proof}
The proof simply relies on showing that with high probability, as $n \rightarrow \infty$, there will be a value whose weight is at least $T$. 
Note that for any two-point distribution in $\mathcal{P}_2(\mu,\sigma^2,L)$ with positive probability on points $a$ and $b$, it must be the case that $b \geq \mu + \sigma^2/\mu$. Furthermore, the probability mass assigned to the point $b$ is always bounded away from zero (in particular by a constant depending on $\mu, \sigma^2$ and $L$, but independent of $n$). This means that, as $n$ grows large the probability that at least one of the values will be greater or equal than $\mu + \sigma^2/\mu$, approaches $1$. In other words, no matter which adversarial distributions nature chooses, there will also be a constant probability mass assigned to the interval $[\mu + \sigma^2/\mu,L]$ for all of them, and, hence, with high probability one of the values will lie in that interval. Therefore, the probability that the gambler is able to select a value $v_i$ with $v_i \geq T$ approaches $1$.
\end{proof}

We remark that the analysis in the proof of Theorem \ref{thm:asymptotic_twopoint} heavily relies on the fact that $\mu, \sigma^2$ and $L$ are independent of $n$.

\section{Moment bound problems}
	\label{sec:mvs}

{
For the ambiguity set $\mathcal{P}(\mu,d,L)$, consisting of all distributions with mean $\mu$, mean absolute deviation $d$ and support upper bound $L$, we obtain the result in Theorem \ref{thm:mms}. As was already discussed in Section \ref{sec:results}, the obtained thresholds are independent of the support upper bound $L$, and in fact also holds for the ambiguity set $\mathcal{P}(\mu,d)$ in which no support upper bound is assumed. 
In order to illustrate our proof method for this result, we first give an outline of the proof of the thresholds in \cite[Corollary 3.2]{boshuizen1992moment} for the ambiguity set $\mathcal{P}(\mu,\sigma^2,L)$.}
 We again refer the reader to Figure \ref{fig:thresholds} for an illustration of the obtained robust thresholds for this ambiguity set.
%

This section is structured as follows. We first illustrate the primal-dual method for moment bound problems in Section \ref{sec:prelim_momentbound}, and use it to prove a result of Cox \cite{cox1991bounds} in this language, {which is then used to prove \cite[Corollary 3.2]{boshuizen1992moment}} . We provide the proof of Theorem \ref{thm:mms}, also using the primal-dual moment bound approach, in Section \ref{sec:mmr}. An asymptotic result for the set $\mathcal{P}(\mu,\sigma^2,L)$, similar in spirit as Theorem \ref{thm:asymptotic_twopoint}, is given in Section \ref{sec:asymptotic}.

\subsection{Primal-dual method by example}\label{sec:prelim_momentbound}
For $\mathcal{P}(\mu,\sigma^2,L)$, based on the formulation \eqref{eq:opt_robust_tresholds_v2}, we need to solve the maximin problem
\begin{align}
 & \displaystyle\max_{\Prob\in \mathcal{P}(\mu,\sigma^2,L)}\E_\Prob[\min\{\xi, X\}]
\end{align}
with $\xi=T(i+1)$ considered a constant. The problem can be viewed as a semi-infinite linear optimization problem (LP)
\begin{align}
  \displaystyle\max_{\Prob} & \int_{[0,L]} \min\{t,x\}\, \mathrm{d}\Prob(x)\nonumber  \\
 \text{s.t.} & \int_{[0,L]} 1 \, \mathrm{d}\Prob(x) = 1, \ 
\int_{[0,L]} x \, \mathrm{d}\Prob(x) = \mu, \ 
\int_{[0,L]} x^2 \, \mathrm{d}\Prob(x) = \mu^2 + \sigma^2  \label{coxprimal}
\end{align}
with a finite number of constraints (three in this case), and an infinite number of variables (all distributions $\mathbb{P}$ in the ambiguity set $\mathcal{P}$). The Richter-Rogosinski Theorem (see, e.g.,~\cite{rogosinski1958moments,shapiro2009lectures}) states that there exists an extremal distribution for problem \eqref{coxprimal} with at most three support points. While finding these points in closed form is typically not possible for general semi-infinite problems, we show that this is possible for the problem at hand by resorting to the dual problem; see e.g. \cite{Isii1962} and \cite{popescu2005semidefinite}. To explain this dual approach, we install a small intermezzo, and first cast \eqref{coxprimal} in the more general form of 
generalized moment bound  problems:
\begin{eqnarray}
\text{(P)} \ \  p^*&=\displaystyle\max_{\Prob \in \mathcal{P}(L)} & \E_{}[h(X)]\nonumber  \\
& \text{s.t.} & \E[g_j(X)] = q_j \text{ for } j = 0,\dots,m, \nonumber
\end{eqnarray}
where $X$ is a random variable with nonnegative probability measure $\Prob$, and $h,g_0,\dots,g_m$ are real-valued functions. We use the convention that the function $g_0$ always represents the constraint that $\Prob$ is a probability measure (which is essentially redundant as we already assume that $\mathcal{P}(L)$ is a class of probability distributions). That is, we have $g_0 \equiv 1$, and $q_0 = 1$. The \emph{dual program} of a moment bound problem of the form (P) is given by
\begin{eqnarray}
\text{(D)} \ \    d^*&=\displaystyle\min_{\lambda_0,\dots,\lambda_m} & \sum_{j=0}^m \lambda_j q_j   \nonumber  \\
 & \text{s.t.} & M(x):=\sum_{j=0}^m \lambda_j g_j(x) \geq h(x) \ \ \forall x \in [0,L]. \nonumber
\end{eqnarray}
It is not hard to see that \emph{weak duality} holds, i.e., $p^* \leq d^*$. Furthermore, under mild assumptions, also \emph{strong duality} holds, i.e., $p^* = d^*$. Finally, it is known that the (extremal) distributions solving the primal program (P) are discrete distributions that are supported on at most $n + 1$ points (we refer to \cite{popescu2005semidefinite} for an overview of these claims and further references).

The above primal-dual thinking provides a constructive method for solving a maximin problem by exploiting the dual problem formulation. The solution of the dual problem not only gives an upper bound for the primal problem (by weak duality), but also identifies a candidate worst-case distribution. By testing this candidate distribution in the primal problem, and showing that both problems give the same result (strong duality), one can thus solve the semi-infinite LP. We will now demonstrate this constructive method for $\mathcal{P}(\mu,\sigma^2,L)$ and then for $\mathcal{P}(\mu,d,L)$ in Section \ref{sec:mmr}, both presenting semi-infinite LPs with three constraints. For $\mathcal{P}(\mu,\sigma^2,L)$  this rederives a result in \cite{cox1991bounds}, and for $\mathcal{P}(\mu,d,L)$ this gives a similar result but then for MAD instead of variance. Related results for mean-MAD ambiguity 
in the context of Chebyshev-like inequalities can be found \cite{roos2019chebyshev}.

The dual problem of \eqref{coxprimal} is given by
\begin{eqnarray}
 & \displaystyle\min_{\lambda_0,\lambda_1,\lambda_2} & \lambda_0 + \mu\lambda_1 + (\mu^2 + \sigma^2) \lambda_2  \nonumber  \\
 & \text{s.t.} & \lambda_0+\lambda_1x+\lambda_2x^2 \geq \min\{x,\xi\} \ \ \forall x \in [0,L]. \nonumber
\end{eqnarray}
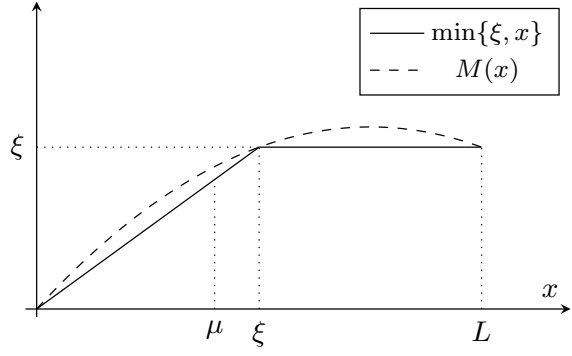
\begin{figure}[h!]
\centering
\begin{tikzpicture}[scale=1.2]
    \begin{axis}[axis lines=middle,xlabel=\footnotesize{$x$},ylabel={},
        domain=0:1,xmin=-0.025,xmax=1.2,ymin=-0.025,ymax=0.95,xtick=\empty,ytick=\empty,
        smooth,samples=500,clip=false,width=3in,height=2in,
        legend entries={$\min\{\xi,x\}$ \\ $M(x)$\\},
        legend style = {font = {\fontsize{8 pt}{12 pt}\selectfont}}]
        \addlegendimage{color=black};
        \addlegendimage{dashed};
        \draw[dotted] (\mtwo,0) -- (\mtwo,\mtwo);
        \draw (\mtwo,0) node[below]{\footnotesize{$\mu$}};
        \draw[dotted] (\sji,0) -- (\sji,\sji);
        \draw (\sji,0) node[below]{\footnotesize{$\xi$}};
        \draw[dotted] (0,\sji) -- (\sji,\sji);
        \draw (0,\sji) node[left]{\footnotesize{$\xi$}};
        \draw[dotted] (\Ltwo,0) -- (\Ltwo,\sji);
        \draw (\Ltwo,0) node[below]{\footnotesize{$L$}};
        \addplot[color=black,domain=0:\sji] {x};
        \addplot[color=black,domain=\sji:1] {\sji};
        \addplot[color=black,dashed,domain=0:1] {(\Ltwo+\sji)/\Ltwo*x -1/\Ltwo*x^2};

 \end{axis}         
\end{tikzpicture}
\caption{Majorant $M(x)$ for the function $\min\{\xi,x\}$ when $\mu -  \frac{\sigma^2 }{L - \mu} \leq \xi \leq \mu +  \frac{\sigma^2}{\mu}.$}\label{fig:firstM}
\end{figure}
The dual problem searches the tightest majorant $M(x)=\lambda_0+\lambda_1x+\lambda_2x^2$, a candidate being the concave parabola that touches $\min\{\xi,x\}$ at $\{0,\xi,L\}$, with
 \begin{equation}
\lambda_0=0, \ \ 
\lambda_1=\frac{L+\xi }{L}, \ \ 
\lambda_2=-\frac{1}{L}
\end{equation}  
and dual objective
\begin{equation}\label{obb4}
\frac{\mu(L + \xi) - (\mu^2 + \sigma^2)}{L}. 
\end{equation} For the primal problem, construct the  distribution $\Prob$ on $\{0,\xi,L\}$ with  probabilities   
\begin{equation}\label{ththth}
\Prob(0)= \frac{L\xi - (L+\xi)\mu + \mu^2 + \sigma^2}{L\xi}, \ \ 
\Prob(\xi)=\frac{L\mu - (\mu^2 + \sigma^2)}{(L-\xi)\xi}, \ \ 
\Prob(L)=\frac{\mu^2 + \sigma^2 - \xi\mu}{(L-\xi)L}.
\end{equation}
This three-point distribution has primal objective value \eqref{obb4}, which thus is the optimum. Note that this is only a distribution when 
\begin{equation}
\mu -  \frac{\sigma^2 }{L - \mu} \leq \xi \leq \mu +  \frac{\sigma^2}{\mu}. 
\end{equation}
For $\xi>\mu +  \frac{\sigma^2}{\mu}$ consider the majorant $M(x)=x$ with dual objective $\mu$. To find a distribution with a matching primal objective, consider all distributions on $[0,\xi]$ in $\mathcal{P}(\mu,\sigma^2, L)$, such as the three-point distribution on $\{0,\mu,\xi\}$. 
 For $\xi<\mu -  \frac{\sigma^2 }{L - \mu}$ consider the majorant $M(x)=\xi$ with dual objective $\xi$. Distributions with a matching primal objective are all distributions in $\mathcal{P}(\mu,\sigma^2, L)$ on $[\xi,L]$, in $\mathcal{P}(\mu,\sigma^2, L)$, including the three-point distribution on $\{\xi,\mu,L\}$. 	Taken together, we obtain the following lemma:

\begin{lemma}[Cox \cite{cox1991bounds}]\label{lem:cox}
Let $\mathcal{P}(\mu,\sigma^2,L)$ be the set of all distribution with mean $\mu$, variance $\sigma^2$ and support contained in the interval $[0,L]$. Then
\begin{equation} \label{eq:cox_solution}
\displaystyle \max_{\Prob \in \mathcal{P}(\mu,\sigma^2,L)}  \int_{[0,L]} \min\{\xi,x\}\, \mathrm{d}\Prob(x) = \left\{
\begin{array}{ll}
\xi & \text{\rm for } \displaystyle 0 \leq \xi \leq \mu -  \frac{\sigma^2 }{L - \mu} \\
\displaystyle \frac{\mu(L + \xi) - (\mu^2 + \sigma^2)}{L} & \text{\rm for }  \displaystyle \mu -  \frac{\sigma^2 }{L - \mu} \leq \xi \leq \mu +  \frac{\sigma^2}{\mu}\\
\mu & \text{\rm for } \displaystyle  \mu +  \frac{\sigma^2}{\mu} \leq \xi \leq L \\
\end{array}
\right..
\end{equation} 
\end{lemma}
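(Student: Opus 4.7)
The plan is to apply the primal-dual framework for moment bound problems just introduced, with $h(x)=\min\{\xi,x\}$ and the three linear moment constraints (mass $1$, mean $\mu$, second moment $\mu^2+\sigma^2$). The dual then seeks a quadratic majorant $M(x)=\lambda_0+\lambda_1 x+\lambda_2 x^2\ge \min\{\xi,x\}$ on $[0,L]$ minimizing $\lambda_0+\mu\lambda_1+(\mu^2+\sigma^2)\lambda_2$. By weak duality it suffices to exhibit, for each value of $\xi$, a dual feasible majorant and a primal feasible distribution whose objective values coincide; both are then certified as optimal, and the common value gives the corresponding branch of \eqref{eq:cox_solution}.

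For the middle range $\mu-\sigma^2/(L-\mu)\le\xi\le\mu+\sigma^2/\mu$, I would take the concave parabola interpolating $(0,0)$, $(\xi,\xi)$ and $(L,\xi)$; concavity, together with the fact that $\min\{\xi,x\}$ is itself concave and the two functions already agree at these three points of $[0,L]$, forces the majorant inequality globally on $[0,L]$. The matching primal candidate is the three-point distribution on $\{0,\xi,L\}$, whose probabilities are uniquely pinned down by the three moment equations; a short calculation shows both objectives evaluate to $[\mu(L+\xi)-(\mu^2+\sigma^2)]/L$. For $\xi\ge\mu+\sigma^2/\mu$ I would instead use the trivial majorant $M(x)=x$, of dual value $\mu$, matched on the primal side by any distribution in $\mathcal{P}(\mu,\sigma^2,L)$ supported on $[0,\xi]$, such as the two-point distribution on $\{0,\mu+\sigma^2/\mu\}$ (which lies in the ambiguity set by the non-emptiness condition $\sigma^2\le\mu(L-\mu)$, and whose right endpoint is below $\xi$). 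Symmetrically, for $\xi\le\mu-\sigma^2/(L-\mu)$ I would use the constant majorant $M(x)=\xi$ of dual value $\xi$, matched by any distribution in $\mathcal{P}(\mu,\sigma^2,L)$ supported on $[\xi,L]$, e.g.\ the two-point distribution on $\{\mu-\sigma^2/(L-\mu),L\}$.

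The main obstacle is not any single computation but rather the case bookkeeping: I would need to verify that the three probabilities of the middle-case distribution are nonnegative \emph{precisely} on the interval $[\mu-\sigma^2/(L-\mu),\mu+\sigma^2/\mu]$, and that the two tail-case distributions are feasible precisely on the complementary intervals. Direct evaluation shows that $\Prob(L)$ vanishes exactly at $\xi=\mu+\sigma^2/\mu$ and $\Prob(0)$ vanishes exactly at $\xi=\mu-\sigma^2/(L-\mu)$, so the three regimes fit together seamlessly and the piecewise form of \eqref{eq:cox_solution} is forced. This feasibility alignment is what makes the primal-dual method close cleanly in three regimes rather than producing overlapping or gapped bounds.
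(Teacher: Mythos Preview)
Your proposal is correct and follows essentially the same primal--dual route as the paper: the same concave parabola through $\{0,\xi,L\}$ paired with the three-point distribution on $\{0,\xi,L\}$ in the middle regime, and the trivial majorants $M(x)=x$ and $M(x)=\xi$ in the tail regimes. The only cosmetic difference is that the paper cites the three-point distributions on $\{0,\mu,\xi\}$ and $\{\xi,\mu,L\}$ as primal witnesses in the tail cases, whereas you use the two-point distributions on $\{0,\mu+\sigma^2/\mu\}$ and $\{\mu-\sigma^2/(L-\mu),L\}$; both choices work equally well. One small wording caveat: your justification that ``concavity, together with the fact that $\min\{\xi,x\}$ is itself concave'' forces the majorant inequality is not quite the right reason---two concave functions agreeing at three points need not be ordered---but the correct argument (the parabola is concave and $\min\{\xi,x\}$ is piecewise \emph{linear} with the three interpolation points sitting at the endpoints of each linear piece, so the parabola lies above each chord) is an immediate fix.
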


For the proof of Theorem \ref{thm:mvs_general} presented below, the max-min result in Lemma~\ref{lem:cox} presents the worst-case scenario that serves as input for finding the robust thresholds. As will become clear, nature chooses as worst-case distributions the three-point distributions in \eqref{ththth}, which is close to a two-point distribution. 
That is, as $\xi \uparrow \mu + \frac{\sigma^2}{\mu}$, the distribution in  \eqref{ththth} converges in probability to the distribution
\begin{align}
\Prob (x) 
= \left\{
\begin{array}{ll}
\displaystyle \frac{\sigma^2}{\mu^2 + \sigma^2} & \text{ for } x = 0 \vspace*{1ex}  \\
\displaystyle \frac{\mu^2}{\mu^2 + \sigma^2} & \text{ for } x = \mu + \frac{\sigma^2}{\mu}
\end{array}
\right..
\label{eq:cox_3to2}
\end{align} 
Note that this is the same distribution as in \eqref{eq:worst_f}.

\subsection{A proof of Theorem \cite[Corollary 3.2]{boshuizen1992moment}} 
\label{sec:mvs_general}
{In this section we give a proof of \cite[Corollary 3.2]{boshuizen1992moment}} by combining Lemma~\ref{lem:cox}, Theorem~\ref{thm:opt_robust_thresholds}, and the next result. 
We will work with the second form of $T(i)$ as given in \eqref{eq:opt_robust_tresholds}, i.e.,
\begin{align}
T(i) &= \mu + T(i+1) - \max_{\Prob_{i+1} \in \mathcal{P}(\mu,\sigma^2,L)} \int_0^{\infty} \min\{T(i+1),x\}\, \mathrm{d} \Prob_{i+1}(x)
\end{align}
for $i = 0,\dots,n-1$, with $T(n) = 0$. Note that we have $\E[X_i] = \mu$ for every $i = 1,\dots,n$.

In the first step of the backwards recursion, we obtain $T(n-1) = \mu$, which follows from the first case of the solution as given in \eqref{eq:cox_solution} by considering $t = 0$. In the next step of the recursion, we therefore start in the second case of \eqref{eq:cox_solution} with $t = \mu$. Remember that the second case was given by the optimal value
$$
p^*(t) = \frac{\mu(L + t) - (\mu^2 + \sigma^2)}{L}
$$
for $\mu -  \sigma^2/(L - \mu) \leq t \leq \mu + \sigma^2/\mu$. 

From Lemma \ref{lem:bound} we know that $\mu = T(n-1) \leq \dots T(1) \leq T(0) \leq \mu + \sigma^2/\mu$.  Since $\mu + \sigma^2/\mu$ is precisely the upper bound of the range of the second case of \eqref{eq:cox_solution}, this means that $T(i)$, for $i = 0,\dots,n-2$, is determined by the recursion
$$
T(i) = \mu + T(i+1) - \frac{\mu(L + T(i+1)) - (\mu^2 + \sigma^2)}{L} = \frac{\mu^2 + \sigma^2}{L} + \left(1 - \frac{\mu}{L}\right)T(i+1),
$$  
i.e., a recursion based on the solution to the second case of \eqref{eq:cox_solution}. Using Proposition \ref{prop:rec} with $\alpha = (\mu^2 + \sigma^2)/L$, $\beta = 1 - \mu/L$ and $\gamma_0 = \mu$, then gives
$$
T(i) = \frac{\mu^2 + \sigma^2}{L}  \left( \sum_{k=0}^{n-2-i} \left(1 - \frac{\mu}{L}\right)^k \right) + \mu\left(1 - \frac{\mu}{L}\right)^{n-1-i}
$$
for $i = 0,\dots,n$. 
Using the identity $\sum_{k=0}^j x^j = (1 - x^{j+1})/(1-x)$, in combination with some calculus, then yields the identity as stated in  \cite[Corollary 3.2]{boshuizen1992moment}. This completes the proof of \cite[Corollary 3.2]{boshuizen1992moment}.

As already illustrated in Section \ref{sec:prelim_momentbound}, the worst-case distribution in every step of the recursion is given by the three-point distribution in  \eqref{ththth} with $\xi = T(i+1)$. 
In Figure \ref{fig:prob_mass} we have given an overview of how the probability mass of this worst-case three-point distribution changes over time. It can be seen that the mass on $L$ approaches $0$.

\begin{figure}[ht!]
\centering
\includegraphics[scale=0.5]{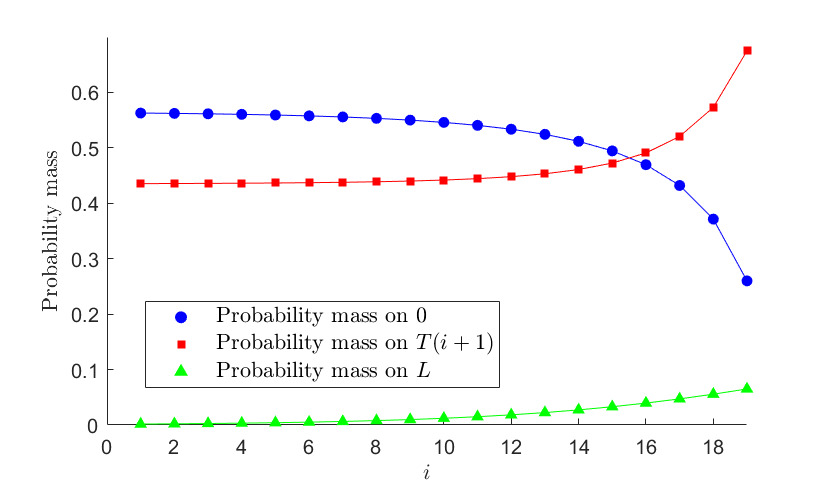}
\caption{Sketch of probability mass on points $\{0,T(i+1),L\}$ for $i = 1,\dots,n-1 = 19$, for worst-case three-point distribution \eqref{ththth} with $\xi = T(i+1)$. The chosen parameters are $\mu = 1$, $\sigma^2 = 1.3\mu$ and $L = 5\mu$.}
\label{fig:prob_mass}
\end{figure}

\subsection{An asymptotic result}
\label{sec:asymptotic}
As for the case of two-point distributions, we can also obtain the payoff of $\mu + \sigma^2/\mu$ approximately (as $n$ grows large) by means of simply setting a threshold of 
\begin{equation}
T = \mu + \frac{\sigma^2}{\mu} - \epsilon
\end{equation}
at every step of the game, for any $\epsilon > 0$. That is, if one is only interested in obtaining the optimal payoff when $n$ is large, and not so much about identifying the optimal strategy at every individual step of the game, then this can be done in a fairly simple manner. Theorem \ref{thm:asymptotic} below is a slightly weaker version of Theorem \ref{thm:asymptotic_twopoint}. We need to choose a slightly lower threshold than in Theorem \ref{thm:asymptotic_twopoint}, as nature can circumvent the reasoning in the proof of Theorem \ref{thm:asymptotic_twopoint} by using three-point distributions.

\begin{theorem}
Let $\epsilon > 0$. Consider the stopping rule $\tau$ that selects the first offer $v_i$ for which $v_i \geq T = \mu + \sigma^2/\mu - \epsilon$. Then 
$$
\lim_{n \rightarrow \infty} r^* \in \left[\mu + \frac{\sigma^2}{\mu} - \epsilon, \mu + \frac{\sigma^2}{\mu}\right]
$$
if every offer is equipped with the ambiguity set $\mathcal{P}(\mu,\sigma^2,L)$.
\label{thm:asymptotic}
\end{theorem}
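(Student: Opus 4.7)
The plan is to split the claim into the upper bound $\limsup_{n\to\infty} r^* \leq \mu + \sigma^2/\mu$ and the lower bound $\liminf_{n\to\infty} r^* \geq \mu + \sigma^2/\mu - \epsilon$, and prove each in turn. The upper bound is essentially free from Lemma \ref{lem:bound}: the non-emptiness condition \eqref{eq:non-empty_sigma} guarantees $\mu + \sigma^2/\mu \leq L$, so the two-point distribution \eqref{eq:worst_f} belongs to $\mathcal{P}(\mu, \sigma^2, L)$. If nature uses this distribution for every offer, then every realized value is at most $\mu + \sigma^2/\mu$, so any strategy (in particular $\tau$) is capped at this value.

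The substantive work lies in the lower bound, and I would build it on a uniform tail estimate: I would show that, for every $\Prob \in \mathcal{P}(\mu, \sigma^2, L)$,
\begin{equation}
\Prob(X \geq T) \;\geq\; p_0 \,:=\, \frac{\epsilon\,\mu}{L(L-T)} \;>\; 0,
\label{eq:planbound}
\end{equation}
where I may assume $\epsilon$ is small enough that $0 < T < L$ (the complementary case is trivial, since the strategy then accepts the first offer, whose mean $\mu$ already exceeds $T$). The crux is the pointwise inequality $X^2 \leq T X \mathbf{1}\{X<T\} + L X \mathbf{1}\{X\geq T\}$ valid on $[0,L]$; taking expectations and substituting $T = \mu + \sigma^2/\mu - \epsilon$ collapses the left-over to $\epsilon\mu$, giving $\E[X \mathbf{1}\{X \geq T\}] \geq \epsilon\mu/(L-T)$, after which bounding $X \leq L$ on $\{X \geq T\}$ yields \eqref{eq:planbound}.

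Once \eqref{eq:planbound} is in hand the remainder is routine. By independence, the probability that no offer reaches the threshold under any adversarial choice $\Prob_1,\dots,\Prob_n \in \mathcal{P}(\mu,\sigma^2,L)$ is at most $(1-p_0)^n$, and when some offer does reach it the selected value is at least $T$. Hence $\E[X_\tau] \geq T(1-(1-p_0)^n)$, which tends to $T = \mu + \sigma^2/\mu - \epsilon$ as $n\to\infty$, and combining with the upper bound gives the stated sandwich.

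The step I expect to be most delicate is the uniform tail bound \eqref{eq:planbound}. The one-line moment inequality is slick but uses the support cap $L$ essentially, consistent with the degeneracy of $\mathcal{P}(\mu,\sigma^2)$ discussed in Section \ref{sec:m}. It also clarifies why a strict slack $\epsilon > 0$ is necessary here, unlike in Theorem \ref{thm:asymptotic_twopoint}: $p_0 \to 0$ as $\epsilon \to 0$, because a three-point distribution in $\mathcal{P}(\mu,\sigma^2,L)$ can place arbitrarily little mass strictly above $\mu + \sigma^2/\mu$, whereas a two-point distribution cannot.
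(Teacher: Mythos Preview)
Your proof is correct and follows the same blueprint as the paper: the upper bound comes from Lemma~\ref{lem:bound}, and the lower bound reduces to a uniform tail estimate $\Prob(X\geq T)\geq p_0>0$ valid for all $\Prob\in\mathcal{P}(\mu,\sigma^2,L)$, combined with independence so that $(1-p_0)^n\to 0$. The only real difference is how the tail bound is obtained. The paper invokes an external result of De Schepper and Heijnen \cite{de1995general}, which gives the exact minimum $\min_{\Prob}\Prob(X\geq T)=\epsilon/(L(L-\mu-\sigma^2/\mu))$; you instead derive a bound directly via the pointwise inequality $X^2\leq TX\mathbf{1}\{X<T\}+LX\mathbf{1}\{X\geq T\}$ on $[0,L]$, yielding $p_0=\epsilon\mu/(L(L-T))$. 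Your argument is more self-contained and entirely elementary, at the cost of a non-sharp constant; since only positivity of $p_0$ is needed, nothing is lost. Your closing remark about why $\epsilon>0$ is essential (three-point distributions can push mass above $\mu+\sigma^2/\mu$ toward zero) matches the paper's rationale for why this theorem is weaker than Theorem~\ref{thm:asymptotic_twopoint}.
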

\begin{proof}
The proof simply relies on showing that with high probability, as $n \rightarrow \infty$, there will be a value whose weight is at least $T$. We do this by showing that, for any $\Prob \in \mathcal{P}(\mu,\sigma^2,L)$, there is always a constant probability mass $q = q(\mu,\sigma^2,L,\epsilon) > 0$ assigned to the interval $[\mu + \sigma^2/\mu-\epsilon,L]$ (we emphasize that this is not true in case $\epsilon = 0$). This means that as $n$ grows large, with high probability there will be at least one offer whose value is at least $\mu + \sigma^2/\mu - \epsilon$, and therefore the seller will always be able to obtain a payoff of at least $\mu + \sigma^2/\mu - \epsilon$ with high probability. We also know from Lemma \ref{lem:bound} that nature can be make sure the seller never gets a robust expected payoff of more than $\mu + \sigma^2/\mu$, so this would then complete the claim.

It remains to show that for any $\Prob \in \mathcal{P}(\mu,\sigma^2,L)$ a constant probability mass $q = q(\mu,\sigma^2,L) > 0$ is assigned to the interval $[\mu + \sigma^2/\mu-\epsilon,L]$. This can be derived directly, e.g., from a result of De Schepper and Heijnen \cite{de1995general}. In particular, from \cite{de1995general} it follows that
$$
\min_{\Prob \in \mathcal{P}(\mu,\sigma^2,L)} \Prob(X \geq \mu + \sigma^2/\mu - \epsilon) = \frac{\epsilon}{L(L - \mu - \sigma^2/\mu)} =: q(\mu,\sigma^2,L).
$$
This completes the proof. 
\end{proof}

We remark that the analysis in the proof of Theorem \ref{thm:asymptotic} heavily relies on the fact that $\mu, \sigma^2$ and $L$ are independent of $n$, otherwise the parameter $q $ might also depend on $n$, i.e., $q = q(\mu(n),\sigma^2(n),L(n))$ which might result in the fact that $\lim_{n \rightarrow \infty} q(\mu(n),\sigma^2(n),L(n)) = 0$.

\subsection{Proof of Theorem~\ref{thm:mms}}
\label{sec:mmr}

In this section we switch attention to the ambiguity set $\mathcal{P}(\mu,d,L)$ with all probability distributions with mean $\mu$, mean absolute deviation $d$, and support contained in the interval $[0,L]$. In order to prove Theorem \ref{thm:mms}, we rely on the following lemma, which can be seen as a mean-MAD-support analogue of the result of Cox \cite{cox1991bounds}. {A related result for mean-MAD ambiguity in the context of Chebyshev-like inequalities is given in \cite{roos2019chebyshev}. 

\begin{lemma} Let $\mathcal{P}(\mu,d,L)$ be the set of all distribution with mean $\mu$, MAD $d$ and support contained in the interval $[0,L]$. Then\begin{align} \label{eq: worst-case prob}
    \displaystyle\max_{\Prob \in \mathcal{P}(\mu,d,L)} & \int_0^L \min\{\xi,x\}\, \mathrm{d}\Prob(x) =
        \begin{cases}
     \xi, \quad &  \xi\in[0,\xi_1], \\
     {\mu}-\frac{d(L-\xi)}{2(L-\mu)}, \quad &  \xi\in[\xi_1,\mu], \\
     \xi-\frac{d\xi}{2\mu}, &  \xi\in[\mu,\xi_2], \\
     \mu, &  \xi\in[\xi_2,L],
    \end{cases}
\end{align}
with $\xi_1$ and $\xi_2$ given by
\begin{equation*}
    \xi_1=\mu-\frac{d(L-\mu)}{2(L-\mu)-d},\quad \xi_2=\mu+\frac{d\mu}{2\mu-d}.
\end{equation*}
\label{lem:cox_mudL}
	\end{lemma}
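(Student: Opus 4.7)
The plan is to apply the primal-dual moment-bound method of Section~\ref{sec:prelim_momentbound}, now with constraint functions $g_0 \equiv 1$, $g_1(x) = x$, $g_2(x) = |x - \mu|$ and corresponding right-hand sides $1, \mu, d$. The associated dual majorant has the piecewise-linear form $M(x) = \lambda_0 + \lambda_1 x + \lambda_2 |x - \mu|$, with a single kink at $x = \mu$. For each of the four ranges of $\xi$ in \eqref{eq: worst-case prob}, I will exhibit a primal distribution in $\mathcal{P}(\mu, d, L)$ and a feasible dual majorant with matching objective values; by weak duality, both are then optimal and the claimed expression follows.

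For the inner range $\xi \in [\mu, \xi_2]$, I take as primal the two-point distribution on $\{0, \xi_2\}$ with weights $d/(2\mu)$ and $1 - d/(2\mu)$; a direct computation using $\xi_2 = 2\mu^2/(2\mu - d)$ confirms this lies in $\mathcal{P}(\mu, d, L)$ and gives $\int_0^L \min\{\xi, x\}\, \mathrm{d}\Prob(x) = (1 - d/(2\mu))\xi = \xi - d\xi/(2\mu)$. For the dual, I expect the choice $\lambda_0 = \xi/2$, $\lambda_1 = \xi/(2\mu)$, $\lambda_2 = -\xi/(2\mu)$ to work: it yields $M(x) = \xi x/\mu$ on $[0,\mu]$ and $M(x) = \xi$ on $[\mu, L]$, i.e., $M(x) = \xi \min\{x/\mu, 1\}$. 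Both branches dominate $\min\{\xi, x\}$ (using $\xi \geq \mu$ on the first branch), and the dual objective $\lambda_0 + \mu\lambda_1 + d\lambda_2$ evaluates to $\xi - d\xi/(2\mu)$. The symmetric range $\xi \in [\xi_1, \mu]$ is handled analogously using the two-point distribution on $\{\xi_1, L\}$, whose probabilities are uniquely determined by the mean and MAD constraints; the corresponding majorant equals $x$ on $[0, \mu]$ and linearly interpolates from $\mu$ at $x = \mu$ to $\xi$ at $x = L$, yielding dual value $\mu - d(L-\xi)/(2(L-\mu))$.

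For the two extreme ranges the same two distributions continue to serve, now paired with trivial majorants. When $\xi \in [\xi_2, L]$, the distribution on $\{0, \xi_2\}$ has support in $[0, \xi]$, so $\int \min\{\xi, x\}\, \mathrm{d}\Prob(x) = \mu$, matched by $M(x) = x$. When $\xi \in [0, \xi_1]$, the distribution on $\{\xi_1, L\}$ has support in $[\xi, L]$, so $\int \min\{\xi, x\}\, \mathrm{d}\Prob(x) = \xi$, matched by $M(x) = \xi$. The main technical obstacle is the verification that each candidate $M$ dominates $\min\{\xi, x\}$ on all of $[0, L]$; since both functions are piecewise linear (with kinks at $\mu$ and $\xi$, respectively), this reduces to a finite slope/endpoint comparison on at most four subintervals. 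The construction is arranged precisely so that $M$ meets $\min\{\xi, x\}$ at the support points of the primal distribution, which simultaneously forces dominance elsewhere and delivers the equality of primal and dual objectives.
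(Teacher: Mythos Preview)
Your proposal is correct and follows the same primal--dual template as the paper, with the same four dual majorants (the two trivial ones $M(x)=\xi$ and $M(x)=x$, and the two piecewise-linear ones with kink at $\mu$). The only real difference is on the primal side: for the range $\xi\in[\xi_1,\mu]$ the paper produces a $\xi$-dependent three-point distribution on $\{0,\xi,L\}$, whereas you use the single two-point law on $\{\xi_1,L\}$; similarly, in the two extreme ranges the paper invokes three-point laws on $\{\xi,\mu,L\}$ and $\{0,\mu,\xi\}$ while you recycle the same two two-point laws. Your choice is slightly more economical (two fixed distributions cover all four regimes) and makes the complementary-slackness check more transparent, at the small cost that one has to verify $\xi_1\ge 0$ and $\xi_2\le L$ via the non-emptiness condition $d\le 2\mu(L-\mu)/L$. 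Either set of primal witnesses yields the same objective value and the arguments are otherwise interchangeable.
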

	
\begin{proof}
	The dual of the primal moment bound problem in \eqref{eq: worst-case prob} is given by
\begin{eqnarray}
 & \displaystyle\min_{\lambda_0,\lambda_1,\lambda_2} & \lambda_0 + \mu\lambda_1 + d \lambda_2  \nonumber  \\
 & \text{s.t.} & \lambda_0+\lambda_1x+\lambda_2|\mu-x| \geq \min\{x,\xi\} \ \ \forall x \in [0,L]. \nonumber
\end{eqnarray}
We write $M(x)=\lambda_0+\lambda_1x+\lambda_2|\mu-x|$ and call $M(x)$ the majorant (as in majorizes $\min\{x,\xi\}$). 
The dual problem searches the tightest majorant $M(x)=\lambda_0+\lambda_1x+\lambda_2|\mu-x|$ that minimizes the objective value.  In order to establish \eqref{eq: worst-case prob}, we will construct for all  regimes in \eqref{eq: worst-case prob} a primal-dual pair of feasible solutions with the same objective value, after which strong duality implies optimality.

\begin{figure}[h!]
\centering
\begin{tikzpicture}[scale=1.2]
    \begin{axis}[axis lines=middle,xlabel=\footnotesize{$x$},ylabel={},
        domain=0:1,xmin=-0.025,xmax=1.2,ymin=-0.025,ymax=1.1,xtick=\empty,ytick=\empty,
        smooth,samples=500,clip=false,width=3in,height=2in,
        legend entries={$\min\{\xi,x\}$ \\ $M(x)$\\},
        legend style = {font = {\fontsize{8 pt}{12 pt}\selectfont}},
        legend style={at={(axis cs:0.05,0.7)},anchor=south west}]
        \addlegendimage{color=black};
        \addlegendimage{dashed};
        \draw[dotted] (\mthree,0) -- (\mthree,\mthree);
        \draw (\mthree,0) node[below]{\footnotesize{$\mu$}};
        \draw[dotted] (\sjithree,0) -- (\sjithree,\sjithree);
        \draw (\sjithree,0) node[below]{\footnotesize{$\xi$}};
        \draw[dotted] (0,\sjithree) -- (\sjithree,\sjithree);
        \draw (0,\sjithree) node[left]{\footnotesize{$\xi$}};
        \draw[dotted] (\Ltwo,0) -- (\Ltwo,\sjithree);
        \draw (\Ltwo,0) node[below]{\footnotesize{$L$}};
        \addplot[color=black,domain=0:\sjithree] {x};
        \addplot[color=black,domain=\sjithree:1] {\sjithree};
        \addplot[color=black,dashed,domain=0:\mthree] {x};
        \addplot[color=black,dashed,domain=\mthree:1] {(1 - (\sjithree-\mthree)/(1 - \mthree))*\mthree + (\sjithree-\mthree)/(1 - \mthree)*x};
 \end{axis}         
\end{tikzpicture}\quad \quad
\begin{tikzpicture}[scale=1.2]
    \begin{axis}[axis lines=middle,xlabel=\footnotesize{$x$},ylabel={},
        domain=0:1,xmin=-0.025,xmax=1.2,ymin=-0.025,ymax=1.1,xtick=\empty,ytick=\empty,
        smooth,samples=500,clip=false,width=3in,height=2in,
        legend entries={$\min\{\xi,x\}$ \\ $M(x)$\\},
        legend style = {font = {\fontsize{8 pt}{12 pt}\selectfont}},
        legend style={at={(axis cs:0.05,0.7)},anchor=south west}]
        \addlegendimage{color=black};
        \addlegendimage{dashed};
        \draw[dotted] (\mfour,0) -- (\mfour,\sjithree);
        \draw (\mfour,0) node[below]{\footnotesize{$\mu$}};
        \draw[dotted] (\sjithree,0) -- (\sjithree,\sjithree);
        \draw (\sjithree,0) node[below]{\footnotesize{$\xi$}};
        \draw[dotted] (0,\sjithree) -- (\sjithree,\sjithree);
        \draw (0,\sjithree) node[left]{\footnotesize{$\xi$}};
        \draw[dotted] (\Ltwo,0) -- (\Ltwo,\sjithree);
        \draw (\Ltwo,0) node[below]{\footnotesize{$L$}};
        \addplot[color=black,domain=0:\sjithree] {x};
        \addplot[color=black,domain=\sjithree:1] {\sjithree};
        \addplot[color=black,dashed,domain=0:\mfour] {\sjithree/\mfour*x};
        \addplot[color=black,dashed,domain=\mfour:1] {\sjithree};
 \end{axis}         
\end{tikzpicture}
\caption{Majorants $M(x)$ for the function $\min(\xi, x)$ for $\mu>\xi$ (left) and $\mu< \xi$ (right).}\label{figM1}
\end{figure}
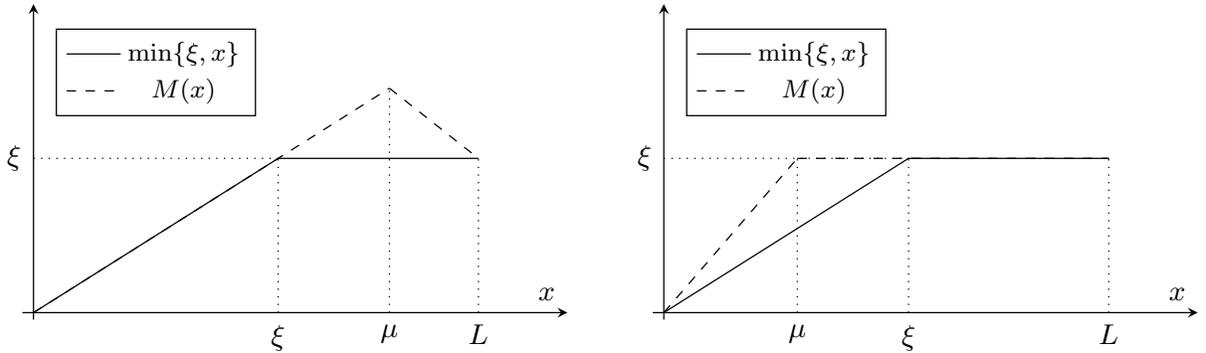
The function $h(x)=\min(\xi,x)$ is piecewise concave, and candidate tightest majorants $M(x)$ are piecewise linear with a possible kink at $x=\mu$.

First, assume $\mu\geq\xi$ and consider the majorant in Figure~\ref{figM1} with 
  \begin{equation}
\lambda_0=-\frac{\mu  \xi -\mu  L}{2 (L-\mu )}, \ \ 
\lambda_1 =-\frac{2 \mu -L-\xi }{2 (L-\mu )}, \ \ 
\lambda_2=-\frac{L-\xi }{2 (L-\mu )}
\end{equation}  
and dual objective
      \begin{equation}\label{obb1}
\mu-\frac{d (L-\xi)}{2 (L-\mu )}. 
\end{equation}  
For the primal problem, we construct a distribution $\Prob$ on $\{0,\xi,L\}$ with  probabilities   
\begin{equation}
\Prob(0) =1-\frac{\mu}{\xi}+\frac{d(L-\xi)}{2 \xi  (L-\mu )}, \ \ 
\Prob(\xi)=\frac{\mu}{\xi}-\frac{d L}{2 \xi  (L-\mu )},\   \
\Prob(L) =\frac{d}{2 (L-\mu )},
\label{eq:mudL_threepoint}{\mu}/{\sigma}
\end{equation}
which is only a distribution when $\xi\geq\xi_1$ with  
\begin{equation}
\xi_1=\mu-\frac{d L}{2(L-\mu)}.
\end{equation}
and when
\begin{equation}
d \leq \frac{2\mu(L-\mu)}{L}.
\end{equation}
The latter condition is always satisfied in order to ensure that $\mathcal{P}(\mu,d,L)$ is nonempty.
The three-point distribution in \eqref{eq:mudL_threepoint} has as objective value the expression in \eqref{obb1}, which thus is the optimum. 

Secondly, let $\xi\leq \xi_1$  and consider the majorant with $\lambda_0=\xi, \lambda_1=0, \lambda_2=0$ and hence the dual solution with objective value $\xi$. For the primal problem, consider \emph{any} distribution supported on $\{\xi,\mu,L\}$. All such distributions give $\xi$ as primal objective.  It is not hard to see that such a distribution always exists.

Thirdly, assume $\mu\leq\xi$ and consider the majorant with $M(0)=0, \, M(\mu)=M(L)=\xi$, see Figure~\ref{figM1}, which results in
\begin{equation*}
    \lambda_0=\frac{\xi}{2} \quad \lambda_1=\frac{\xi}{2\mu}, \quad \lambda_2=-\frac{\xi}{2\mu},
\end{equation*}
and dual objective value
\begin{equation}\label{eq:dual_case3}
    \lambda_0+\lambda_1\mu+\lambda_2d=\xi-\frac{d\xi}{2\mu}.
\end{equation}
Consider here the two-point distribution $\Prob$ with support $\{0,\xi_2\}$ and
\begin{equation*}
\Prob(0) =\frac{d}{2 \mu },\ \ \Prob(\xi_2)=1 - \frac{d}{2 \mu }
\end{equation*}
For any $\mu \leq \xi \leq \xi_2$ the above distribution yields a primal objective of $\xi(1 - d/2\mu)$ which equal the dual objective in \eqref{eq:dual_case3}. 
This leaves  the final case $\xi>\xi_2$. Consider the majorant with $\lambda_1=1$ and hence dual objective $\mu$. One may consider any primal distribution $\Prob$ supported on $\{0,\mu,\xi\}$ (of which there always exists at least one). For such a distribution the primal objective is precisely the expectation of $\Prob$, which is $\mu$. This concludes the proof. 
\end{proof}

	We continue with the proof of Theorem \ref{thm:mms}.
	
	\begin{proof}[Proof of Theorem \ref{thm:mms}.]
		 We have $T(n) = 0$ as always. Note that if nature always plays the two-point distribution
\begin{align}
\Prob(x) = \left\{\begin{array}{ll}
\displaystyle \frac{d}{2\mu} & \ \ \ \text{for } x = 0 \\
\displaystyle 1 -\frac{d}{2\mu} & \ \ \ \text{for } \displaystyle x = \mu + \frac{d\mu}{2\mu-d}
\end{array}\right.
\label{eq:worst_f_mudL}
\end{align}
		then the seller can never obtain a payoff of more than $\mu + d\mu/(2\mu-d)$. This implies that $\mu \leq T(n-1) \leq T(n-2) \leq \dots \leq T(1) \leq T(0) \leq \mu + d\mu/(2\mu-d)$. Looking at Lemma \ref{lem:cox_mudL}, this means we are always in the third regime for $i = n-1,\dots,0$.

We then have
\begin{align*}
T(i) &= \mu + T(i+1) - \max_{\Prob_{i+1} \in \mathcal{P}(\mu,\sigma^2,L)} \int_0^{\infty} \min\{T(i+1),x\}\, \mathrm{d} \Prob_{i+1}(x) \\
&= \mu + T(i+1) - \left[T(i) - \frac{dT(i)}{2\mu}\right] = \mu + \frac{d}{2\mu}T(i).
\end{align*}
Invoking Proposition \ref{prop:rec} with $\alpha = \gamma_0= \mu$ and $\beta = \frac{d}{2\mu}$ then gives
$$
T(i) = \frac{2\mu^2}{2\mu - d} - \left[\frac{2\mu^2}{2\mu - d} - \mu \right]\left( \frac{d}{2\mu}\right)^{n-1-i},
$$
which converges to $2\mu^2/(2\mu-d)$ as $n \rightarrow \infty$.
\end{proof}

\bibliographystyle{plain}
\bibliography{references}

\begin{thebibliography}{10}

\bibitem{allouah2022pricing}
Amine Allouah, Achraf Bahamou, and Omar Besbes.
\newblock Pricing with samples.
\newblock {\em Operations Research}, 2022.

\bibitem{allouah2020prior}
Amine Allouah and Omar Besbes.
\newblock Prior-independent optimal auctions.
\newblock {\em Management Science}, 66(10):4417--4432, 2020.

\bibitem{anunrojwong2022robustness}
Jerry Anunrojwong, Santiago Balseiro, and Omar Besbes.
\newblock On the robustness of second-price auctions in prior-independent
  mechanism design.
\newblock {\em arXiv preprint arXiv:2204.10478}, 2022.

\bibitem{azar2013optimal}
Pablo Azar, Silvio Micali, Constantinos Daskalakis, and Seth~Matthew Weinberg.
\newblock Optimal and efficient parametric auctions.
\newblock In {\em Proceedings of the Twenty-Fourth Annual ACM-SIAM Symposium on
  Discrete Algorithms}, pages 596--604. SIAM, 2013.

\bibitem{azar2014limited}
Pablo~D. Azar, Robert Kleinberg, and S.~Matthew Weinberg.
\newblock Prophet inequalities with limited information.
\newblock In {\em Proceedings of the Twenty-Fifth Annual ACM-SIAM Symposium on
  Discrete Algorithms}, pages 1358--1377, 2014.

\bibitem{azarmicali2013}
Pablo~Daniel Azar and Silvio Micali.
\newblock Parametric digital auctions.
\newblock In {\em Proceedings of the Fourth Conference on Innovations in
  Theoretical Computer Science}, pages 231--232, 2013.

\bibitem{bergemann2008pricing}
Dirk Bergemann and Karl~H Schlag.
\newblock Pricing without priors.
\newblock {\em Journal of the European Economic Association}, 6(2-3):560--569,
  2008.

\bibitem{bergemann2011robust}
Dirk Bergemann and Karl~H. Schlag.
\newblock Robust monopoly pricing.
\newblock {\em Journal of Economic Theory}, 146(6):2527--2543, 2011.

\bibitem{birge1995bounds}
John~R Birge and Marilyn~J Maddox.
\newblock Bounds on expected project tardiness.
\newblock {\em Operations Research}, 43(5):838--850, 1995.

\bibitem{boshuizen1992moment}
Frans~A Boshuizen and Theodore~P Hill.
\newblock Moment-based minimax stopping functions for sequences of random
  variables.
\newblock {\em Stochastic processes and their applications}, 43(2):303--316,
  1992.

\bibitem{carrasco2018optimal}
Vinicius Carrasco, Vitor~Farinha Luz, Nenad Kos, Matthias Messner, Paulo
  Monteiro, and Humberto Moreira.
\newblock Optimal selling mechanisms under moment conditions.
\newblock {\em Journal of Economic Theory}, 177:245--279, 2018.

\bibitem{che2019robust}
Ethan Che.
\newblock Robust reserve pricing in auctions under mean constraints.
\newblock {\em Available at SSRN 3488222}, 2019.

\bibitem{correa2021prophet_journal}
Jose Correa, Paul D{\"u}tting, Felix Fischer, and Kevin Schewior.
\newblock Prophet inequalities for independent and identically distributed
  random variables from an unknown distribution.
\newblock {\em Mathematics of Operations Research (forthcoming)}, 2021.

\bibitem{cox1991bounds}
S.H. Cox.
\newblock Bounds on expected values of insurance payments and option prices.
\newblock {\em Transactions of the Society of Actuaries}, 43:231--260, 1991.

\bibitem{daskalakis2011simple}
Constantinos Daskalakis and George Pierrakos.
\newblock Simple, optimal and efficient auctions.
\newblock In {\em International Workshop on Internet and Network Economics},
  pages 109--121. Springer, 2011.

\bibitem{de1995general}
Ann De~Schepper and Bart Heijnen.
\newblock General restrictions on tail probabilities.
\newblock {\em Journal of Computational and Applied Mathematics},
  64(1-2):177--188, 1995.

\bibitem{dutting2020prophet}
Paul Dutting, Michal Feldman, Thomas Kesselheim, and Brendan Lucier.
\newblock Prophet inequalities made easy: Stochastic optimization by pricing
  nonstochastic inputs.
\newblock {\em SIAM Journal on Computing}, 49(3):540--582, 2020.

\bibitem{dutting2019inaccurate}
Paul D\"{u}tting and Thomas Kesselheim.
\newblock Posted pricing and prophet inequalities with inaccurate priors.
\newblock In {\em Proceedings of the 2019 ACM Conference on Economics and
  Computation}, pages 111--129, 2019.

\bibitem{el2005robust}
Laurent El~Ghaoui and Arnab Nilim.
\newblock Robust solutions to {M}arkov decision problems with uncertain
  transition matrices.
\newblock {\em Operations Research}, 53(5):780--798, 2005.

\bibitem{ezra2020online}
Tomer Ezra, Michal Feldman, Nick Gravin, and Zhihao~Gavin Tang.
\newblock Online stochastic max-weight matching: Prophet inequality for vertex
  and edge arrival models.
\newblock In {\em Proceedings of the 21st ACM Conference on Economics and
  Computation}, pages 769--787, 2020.

\bibitem{ferguson2006optimal}
Thomas~S. Ferguson.
\newblock Optimal stopping and applications, 2006.

\bibitem{giannakopoulos2020robust}
Yiannis Giannakopoulos, Diogo Pocas, and Alexandros Tsigonias-Dimitriadis.
\newblock Robust revenue maximization under minimal statistical information.
\newblock In {\em International Conference on Web and Internet Economics},
  pages 177--190. Springer, 2020.

\bibitem{goldenshluger2022optimal}
Alexander Goldenshluger and Assaf Zeevi.
\newblock Optimal stopping of a random sequence with unknown distribution.
\newblock {\em Mathematics of Operations Research}, 47(1):29--49, 2022.

\bibitem{hajiaghayi2007automated}
Mohammad~Taghi Hajiaghayi, Robert Kleinberg, and Tuomas Sandholm.
\newblock Automated online mechanism design and prophet inequalities.
\newblock In {\em AAAI}, volume~7, pages 58--65, 2007.

\bibitem{he2021correlation}
Wei He and Jiangtao Li.
\newblock Correlation-robust auction design.
\newblock {\em Journal of Economic Theory (forthcoming)}, page 105403, 2022.

\bibitem{immorlica2020prophet}
Nicole Immorlica, Sahil Singla, and Bo~Waggoner.
\newblock Prophet inequalities with linear correlations and augmentations.
\newblock In {\em Proceedings of the 21st ACM Conference on Economics and
  Computation}, pages 159--185, 2020.

\bibitem{Isii1962}
K.~Isii.
\newblock On sharpness of {T}chebycheff-type inequalities.
\newblock {\em Annals of the Institute of Statistical Mathematics},
  14(1):185--197, 1962.

\bibitem{iyengar2005robust}
Garud~N. Iyengar.
\newblock Robust dynamic programming.
\newblock {\em Mathematics of Operations Research}, 30(2):257--280, 2005.

\bibitem{kleinberg2012matroid}
Robert Kleinberg and Seth~Matthew Weinberg.
\newblock Matroid prophet inequalities.
\newblock In {\em Proceedings of the Forty-Fourth Annual ACM Symposium on
  Theory of Computing}, pages 123--136, 2012.

\bibitem{koccyiugit2020distributionally}
Ca{\u{g}}{\i}l Kocyi{\u{g}}it, Garud~N. Iyengar, Daniel Kuhn, and Wolfram
  Wiesemann.
\newblock Distributionally robust mechanism design.
\newblock {\em Management Science}, 66(1):159--189, 2020.

\bibitem{krengel1977}
Ulrich Krengel and Louis Sucheston.
\newblock {Semiamarts and finite values}.
\newblock {\em Bulletin of the American Mathematical Society}, 83(4):745 --
  747, 1977.

\bibitem{lucier2017survey}
Brendan Lucier.
\newblock An economic view of prophet inequalities.
\newblock {\em SIGecom Exchanges}, 16(1):24--47, sep 2017.

\bibitem{Milgrom2004}
Paul Milgrom.
\newblock {\em Interdependence of Types and Values}, pages 157--207.
\newblock Churchill Lectures in Economics. Cambridge University Press, 2004.

\bibitem{petruccelli1985maximin}
Joseph~D Petruccelli.
\newblock Maximin optimal stopping for normally distributed random variables.
\newblock {\em Sankhy{\=a}: The Indian Journal of Statistics, Series A}, pages
  36--46, 1985.

\bibitem{popescu2005semidefinite}
Ioana Popescu.
\newblock A semidefinite programming approach to optimal-moment bounds for
  convex classes of distributions.
\newblock {\em Mathematics of Operations Research}, 30(3):632--657, 2005.

\bibitem{popescu2007robust}
Ioana Popescu.
\newblock Robust mean-covariance solutions for stochastic optimization.
\newblock {\em Operations Research}, 55(1):98--112, 2007.

\bibitem{riedel2009optimal}
Frank Riedel.
\newblock Optimal stopping with multiple priors.
\newblock {\em Econometrica}, 77(3):857--908, 2009.

\bibitem{rogosinski1958moments}
Werner~Wolfgang Rogosinski.
\newblock Moments of non-negative mass.
\newblock {\em Proceedings of the Royal Society of London. Series A.
  Mathematical and Physical Sciences}, 245(1240):1--27, 1958.

\bibitem{roos2019chebyshev}
Ernst Roos, Ruud Brekelmans, Wouter van Eekelen, Dick den Hertog, and Johan~SH
  van Leeuwaarden.
\newblock Tight tail probability bounds for distribution-free decision making.
\newblock {\em European Journal of Operational Research}, 2021.

\bibitem{rubinstein2020optimal}
Aviad Rubinstein, Jack~Z. Wang, and Seth~Matthew Weinberg.
\newblock Optimal single-choice prophet inequalities from samples.
\newblock In {\em 11th Innovations in Theoretical Computer Science Conference}.
  Schloss Dagstuhl-Leibniz-Zentrum f{\"u}r Informatik, 2020.

\bibitem{samuel1984comparison}
Ester Samuel-Cahn.
\newblock Comparison of threshold stop rules and maximum for independent
  nonnegative random variables.
\newblock {\em The Annals of Probability}, pages 1213--1216, 1984.

\bibitem{samuels1981minimax}
Stephen~M Samuels.
\newblock Minimax stopping rules when the underlying distribution is uniform.
\newblock {\em Journal of the American Statistical Association},
  76(373):188--197, 1981.

\bibitem{scarf1958min}
Herbert Scarf.
\newblock A min-max solution of an inventory problem.
\newblock {\em Studies in the mathematical theory of inventory and production},
  1958.

\bibitem{schlag2016value}
Karl~H. Schlag and Andriy Zapechelnyuk.
\newblock Value of information when searching for a secretary.
\newblock Technical report, Mimeo, 2016.

\bibitem{schlag2021robust}
Karl~H. Schlag and Andriy Zapechelnyuk.
\newblock Robust sequential search.
\newblock {\em Theoretical Economics}, 16(4):1431--1470, 2021.

\bibitem{shapiro2009lectures}
Alexander Shapiro, Darinka Dentcheva, and Andrzej Ruszczy{\'n}ski.
\newblock {\em Lectures on Stochastic Programming: Modeling and Theory}.
\newblock SIAM, 2009.

\bibitem{stewart1978optimal}
Theodor~J Stewart.
\newblock Optimal selection from a random sequence with learning of the
  underlying distribution.
\newblock {\em Journal of the American Statistical Association},
  73(364):775--780, 1978.

\bibitem{suzdaltsev2020distributionally}
Alex Suzdaltsev.
\newblock Distributionally robust pricing in independent private value
  auctions.
\newblock {\em arXiv preprint arXiv:2008.01618}, 2020.

\bibitem{Wilson1987}
Robert Wilson.
\newblock {\em Game-theoretic analyses of trading processes}, pages 33--70.
\newblock Econometric Society Monographs. Cambridge University Press, 1987.

\bibitem{wolitzky2016mechanism}
Alexander Wolitzky.
\newblock Mechanism design with maxmin agents: Theory and an application to
  bilateral trade.
\newblock {\em Theoretical Economics}, 11(3):971--1004, 2016.

\end{thebibliography}

\appendix

 \section{Regarding correlations}
 \label{app:correlations}
In this section we argue that if correlations are allowed between the random variables $X_1,\dots,X_n$, then the maximin problem in \eqref{eq:minimax_seller} is typically not very interesting. For example, suppose that $I_i = I_j$, so that $\mathcal{P} = \mathcal{P}(I_i) = \mathcal{P}(I_j)$,  for all $i,j = 1,\dots,n$ (as we do for the results in Section \ref{sec:mvs_twopoint}).

If nature chooses a common distribution $\Prob$ from $\mathcal{P}$ for all offers, whose mean $\mu$ is as small as possible, and fully correlates the random variables $X_1,\dots,X_n$ (meaning that all variables always have their realized value in common), then it is clear that in expectation the seller can obtain a payoff of at most $\mu$ regardless of its chosen stopping rule. On the other hand, always simply selecting the first offer will definitely guarantee a payoff of $\mu$ independent of which (possibly correlated) distributions nature chooses. This then yields the solution to \eqref{eq:minimax_seller} with $r^* = \mu$.

We remark that more might be possible in case one, in some way, restricts the amount of correlation that is allowed between the random variables. In the context of prophet inequalities, we refer the interested reader to the work of Immorlica et al.~\cite{immorlica2020prophet}, and references therein, for possible models along this line.

\section{Non-emptyness conditions for mean-dispersion ambiguity sets}\label{app:non-empty}
The following lemma provides a formal proof of the claims made in Section \ref{sec:prelim} regarding the non-emptiness of the ambiguity sets $\mathcal{P}(\mu,\sigma^2,L)$ and $\mathcal{P}(\mu,d,L)$.

\begin{lemma}
Let $\mu,\, \sigma,\, d,\, L \geq 0$. The ambiguity set $\mathcal{P}(\mu,\sigma^2,L)$, consisting of all distributions with mean $\mu$, variance $\sigma^2$ and support contained in $[0,L]$, is nonempty if and only if
$$
\sigma^2 \leq \mu(L-\mu).
$$
The set $\mathcal{P}(\mu,d,L)$, consisting of all distributions with mean $\mu$, mean absolute deviation $d$ and support contained in $[0,L]$, is non-empty if and only if
$$
d \leq \frac{2\mu(L-\mu)}{L}.
$$
\label{lem:nonemptiness}
\end{lemma}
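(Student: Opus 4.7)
The plan is to handle each ambiguity set in two directions (necessity and sufficiency) and, in both cases, to exhibit an explicit extremal distribution that achieves the boundary value of the dispersion bound. The extremal distributions in both cases will be the two-point distribution on $\{0,L\}$ with mass $\mu/L$ on $L$ and mass $1-\mu/L$ on $0$, which one verifies has mean $\mu$, variance $\mu(L-\mu)$, and MAD $2\mu(L-\mu)/L$. To obtain intermediate dispersion values, I will mix this extremal distribution with the point mass at $\mu$ (which has mean $\mu$ and zero dispersion). Mixing preserves the mean, and both variance and MAD are linear in the distribution \emph{when the mean is held fixed}, so the mixture achieves any value between $0$ and the extremal dispersion. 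This settles sufficiency in both cases.

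For the necessity of $\sigma^2 \leq \mu(L-\mu)$, I would use the elementary bound $x^2 \leq Lx$ for $x\in[0,L]$: taking expectations gives $\E[X^2]\leq L\mu$, hence $\sigma^2 = \E[X^2]-\mu^2 \leq \mu(L-\mu)$. This step is routine and needs no case analysis.

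The necessity of $d \leq 2\mu(L-\mu)/L$ in the MAD case is the step I expect to require the most care. My plan is to use the standard identity $\E[(X-\mu)^+] = \E[(X-\mu)^-] = d/2$ (which follows from $\E[X]=\mu$), and then to bound each side separately using the support. Writing $p = \Prob[X \geq \mu]$, the bound $X \leq L$ gives $(X-\mu)^+ \leq (L-\mu)\mathbf{1}_{X\geq\mu}$, hence $d/2 \leq p(L-\mu)$; similarly $X \geq 0$ gives $(X-\mu)^- \leq \mu\mathbf{1}_{X<\mu}$, hence $d/2 \leq (1-p)\mu$. These two inequalities rearrange to
\begin{equation*}
\frac{d}{2(L-\mu)} \;\leq\; p \;\leq\; 1 - \frac{d}{2\mu},
\end{equation*}
and feasibility of this interval forces $d/(2(L-\mu)) + d/(2\mu) \leq 1$, which simplifies exactly to $d \leq 2\mu(L-\mu)/L$.

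Putting these four pieces together yields both equivalences. The same arguments go through verbatim for the two-point restrictions $\mathcal{P}_2(\mu,\sigma^2,L)$ and $\mathcal{P}_2(\mu,d,L)$, since the extremal distributions on $\{0,L\}$ that attain the bounds are themselves two-point distributions; the sufficient constructions for strictly smaller dispersion should instead take a two-point distribution on $\{a,b\}$ with $0\leq a \leq \mu \leq b \leq L$, which is always feasible under the same inequality (this follows from the parametrization \eqref{eq:twopoint_general}--\eqref{eq:ab} used in Section~\ref{actualproof}). The main subtlety throughout is simply keeping track of whether probabilities lie in $[0,1]$; the necessity arguments above show that whenever the dispersion bound holds, the required probability is in range, so the constructions are genuine distributions.
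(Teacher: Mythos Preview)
Your proof is correct, and it differs from the paper's in two notable ways.

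For sufficiency, the paper exhibits, for each admissible $\sigma^2$ (resp.\ $d$), the specific two-point distribution on $\{0,\mu+\sigma^2/\mu\}$ (resp.\ $\{0,2\mu^2/(2\mu-d)\}$) and checks that the upper support point lies in $[0,L]$ precisely when the stated inequality holds. Your approach instead fixes the single extremal distribution on $\{0,L\}$ and interpolates the dispersion by mixing with the point mass at $\mu$, exploiting the linearity of $\E[(X-\mu)^2]$ and $\E|X-\mu|$ in $\Prob$ when $\mu$ is held fixed. Both work; the paper's construction has the side benefit of yielding two-point distributions directly (so the same proof covers $\mathcal{P}_2$), whereas your mixture is a three-point distribution and you correctly flag that a separate argument via the parametrization \eqref{eq:ab} is needed for the two-point case.

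For the MAD necessity direction, the paper argues by first collapsing an arbitrary $\Prob' \in \mathcal{P}(\mu,d,L)$ to a two-point distribution on the conditional means over $[0,\mu)$ and $[\mu,L]$ (which preserves both $\mu$ and $d$), and then comparing with the extremal two-point law \eqref{eq:non-empty_MAD}. Your route---splitting $d = 2\E[(X-\mu)^+] = 2\E[(X-\mu)^-]$ and bounding each half by $(L-\mu)p$ and $\mu(1-p)$---is more direct and avoids the reduction step entirely. The variance necessity arguments are essentially identical (the paper rescales to $[0,1]$ first; you work on $[0,L]$ via $x^2\le Lx$).
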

\begin{proof}
We first consider the mean-variance-support ambiguity set $\mathcal{P}(\mu,\sigma^2,L)$. Consider the two-point distribution
\begin{align}
\Prob (x) 
= \left\{
\begin{array}{ll}
\displaystyle \frac{\sigma^2}{\mu^2 + \sigma^2} & \text{ for } x = 0 \vspace*{1ex}  \\
\displaystyle \frac{\mu^2}{\mu^2 + \sigma^2} & \displaystyle \text{ for } x = \mu + \frac{\sigma^2}{\mu}
\end{array}
\right..
\end{align} 
First suppose that $\sigma^2 \leq \mu(L-\mu)$. This condition is equivalent to $\mu + \sigma^2/\mu \leq L$, which means that $\Prob \in \mathcal{P}(\mu,\sigma^2,L)$, and, hence, $\mathcal{P}(\mu,\sigma^2,L)$ is non-empty. For the converse we will use a standard result of which we next include a proof for completeness. Suppose that $\mathcal{P}(\mu,\sigma^2,L)$ is non-empty and consider any random variable $X \sim \Prob'$ where $\Prob' \in \mathcal{P}(\mu,\sigma^2,L)$. If $L = 1$, we have
$$
\sigma^2 = \E[(X-\mu)^2] = \E[X^2] - \mu^2 \leq \E[X] - \mu^2 = \mu(1-\mu) = \mu(L-\mu)
$$
where in the inequality we use the fact that $X$ is supported on $[0,1]$. For general values of $L$, the claim follows by considering the rescaled random variable $Y = X/L$.

We continue with the mean-MAD-support  ambiguity set $\mathcal{P}(\mu,d,L)$. Consider the two-point distribution
\begin{align}
\Prob (x) 
= \left\{
\begin{array}{ll}
\displaystyle \frac{d}{2\mu} & \text{ for } x = 0 \vspace*{1ex}  \\
\displaystyle 1 - \frac{d}{2\mu} & \displaystyle \text{ for } x = \frac{2\mu^2}{2\mu-d}
\end{array}
\right..
\label{eq:non-empty_MAD}
\end{align} 
First suppose that $d \leq 2\mu(L-\mu)/L$. This condition is equivalent to $2\mu^2/(2\mu-d) \leq L$ and so $\Prob \in \mathcal{P}(\mu,d,L)$. Conversely, suppose that $\mathcal{P}(\mu,d,L)$ is non-empty and let $\Prob' \in \mathcal{P}(\mu,d,L)$. If we put all the probability mass in $[0,\mu)$ on its conditional mean in that interval, and, similarly, all the probability mass in $[\mu,L]$ on its conditional mean in that interval, then the resulting two-point distribution $\Prob''$, supported on, say, $\{x,y\}$ with $x \leq y$, still has mean $\mu$ and MAD $d$ (because of the linear nature of the MAD), and is therefore contained in $\mathcal{P}(\mu,d,L)$. It is not hard to see that the distribution $\Prob$ in \eqref{eq:non-empty_MAD} is the two-point distribution with mean $\mu$ and MAD $d$ for which both points in the support are chosen as small as possible (among all choices for two-point distributions with mean $\mu$ and MAD $d$). This means that 
$$
\frac{2\mu^2}{2\mu-d} \leq y \leq L
$$
where the second inequality holds as $\Prob'' \in \mathcal{P}(\mu,d,L)$. Again, the resulting inequality is equivalent to $d \leq 2\mu(L-\mu)/L$, which completes the proof.
\end{proof}

\end{document}